\renewcommand*{\backref}[1]{}
\renewcommand*{\backrefalt}[4]{({%
    \ifcase #1 Not cited.%
          \or Cited on page~#2%
          \else Cited on pages #2%
    \fi%
    })}
\newcommand{\poly}{\text{poly}}
\newcommand{\fp}{fixed\hyp parameter}
\newcommand{\pt}{polynomial\hyp time}
\newcommand{\IS}{\textsc{Maximum Independent Set}}
\newcommand{\MIS}{\textsc{Max-Weight Independent Set}}
\newcommand{\MCIS}{\textsc{Max-Weight Colorful Independent Set}}
\newcommand{\cCS}{\textsc{Maximum \(c\)\hyp Colorable Subgraph}}
\newcommand{\McCS}{\textsc{Max-Weight \(c\)\hyp Colorable Subgraph}}
\newcommand{\MCC}{\textsc{Multicolored Clique}}
\newcommand{\HC}{\textsc{Hamiltonian Cycle}}
\newcommand{\vrt}[2]{\ensuremath{v_{\smash{#1}}^{\smash{#2}}}}
\newcommand{\Vrt}[1]{\ensuremath{V_{\smash{#1}}}}
\newcommand{\urt}[2]{\ensuremath{u_{\smash{#1}}^{\smash{#2}}}}
\newcommand{\Urt}[1]{\ensuremath{U_{\smash{#1}}}}
\newcommand{\ert}[2]{\ensuremath{e_{\smash{#1}}^{\smash{#2}}}}
\newcommand{\Ert}[1]{\ensuremath{E_{\smash{#1}}}}
\spnewtheorem{construction}[theorem]{Construction}{\bfseries}{\normalfont}
\crefname{construction}{Construction}{Constructions}
\crefname{figure}{Figure}{Figures}
\DeclareMathOperator{\col}{col}
\newcommand{\ind}[1]{$#1$\hyp independent}
\newcommand{\indc}[1]{$#1$\hyp independence}
\newcommand{\problemdef}[3]{%
  \begin{problem}[#1]
  \begin{compactdesc}
  \item[\it Input:] #2
  \item[\it Task:] #3
  \end{compactdesc}
\end{problem}
\par
}
\newcommand{\decproblemdef}[3]{%
  \begin{problem}[#1]
  \begin{compactdesc}
  \item[\it Input:] #2
  \item[\it Question:] #3
  \end{compactdesc}
\end{problem}
\par
}
\title{\boldmath Inductive \(k\)-independent graphs and $c$\hyp colorable subgraphs in scheduling: A review}
\titlerunning{Inductive \(k\)-independent graphs and $c$\hyp colorable subgraphs in scheduling: A review}
\author{Matthias Bentert
  \and
  René van Bevern\thanks{René van Bevern was supported
    by grant 16-31-60007 mol\textunderscore{}a\textunderscore{}dk
    of the Russian Foundation for Basic Research.
    This work was initiated during a research visit of Ren\'{e} van Bevern
    to TU Berlin in June~2017, partly supported by TU~Berlin
    and by the Ministry of Science and Education
    of the Russian Federation under the 5-100 Excellence Programme.
}
  \and
  Rolf~Niedermeier}
\institute{
  Matthias Bentert \and Rolf Niedermeier \at
  Institut f\"ur Softwaretechnik und Theoretische Informatik, TU~Berlin, Berlin, Germany, \email{matthias.bentert@tu-berlin.de, rolf.niedermeier@tu-berlin.de}
  \and
  René van Bevern \at
  Department of Mechanics and Mathematics, Novosibirsk State University, Novosibirsk, Russian Federation, \email{rvb@nsu.ru}
  \\[.5em]
  Sobolev Institute of Mathematics of the Siberian Branch of the Russian Academy of Sciences, Novosibirsk, Russian Federation}
\begin{document}

\maketitle

\abstract{
  Inductive \ind{k} graphs
  generalize chordal graphs
  and
  have recently been advocated %
  in the context of interference\hyp avoiding
  wireless communication scheduling.
  The NP-hard problem
  of
  finding maximum\hyp weight induced $c$\hyp colorable subgraphs,
  which is a %
  generalization 
  of finding maximum independent sets,
  naturally occurs when selecting \(c\)~sets
  of pairwise non\hyp conflicting jobs
  (modeled as graph vertices).
  We investigate the parameterized complexity
  of this problem
  on inductive \ind{k} graphs.
  We show that the \IS{} problem is W[1]-hard
  even on 2-simplicial 3-minoes---a subclass of inductive 2\hyp independent graphs.
  In contrast,
  we prove that the more general \McCS{} problem
  is fixed\hyp parameter tractable
  on edge-wise unions of cluster and chordal graphs,
  which are 2-simplicial.
  In both cases, 
  the parameter is the solution size.
  Aside from this, %
  we survey %
  other graph classes %
  between inductive \ind{1}
  and inductive \ind{2} graphs
  with applications in scheduling.
}

\keywords{
  independent set\and
  job interval selection\and
  interval graphs\and
  chordal graphs\and
  inductive \ind{k} graphs\and
  NP-hard problems\and
  parameterized complexity
}

\section{Introduction}
Finding sets of ``independent'' (that is,
pairwise non\hyp conflicting)
jobs is of central importance in many
scheduling scenarios. %
In particular,
the NP-hard \IS{} problem
and its generalization
of finding maximum\hyp weight induced \(c\)\hyp colorable subgraphs,
both in
inductive \ind{k} graphs,
have recently been identified as key tools in the development of
polynomial\hyp time approximation algorithms in
interference\hyp avoiding
wireless communication scheduling
\citep{AHT17,Hal16,HT15}.

We conduct %
a deeper study
of the computational complexity
of these problems
on subclasses of inductive \ind{k} 
graphs. %
We also exhibit a rich fine structure
of known graph classes
that have applications in scheduling
and are inductive \ind{2}.

A graph is inductive \ind{k}
if its vertices can be ordered from left to right
so that
the ``right-hand'' neighborhood
of each vertex
contains no independent vertex set of 
size greater than~$k$.
The %
problem we study on these graphs  %
is \textsc{Maximum (Weight) \(c\)\hyp Colorable Subgraph}\footnote{\cite{AHT17}
  are interested
  in maximum\hyp weight
  unions of $c$~independent sets.
  In the graph theory literature the problem is known as
\McCS{}; we prefer to stick to the 
established graph theory notion.}---find a maximum 
(weight) vertex  subset that induces a subgraph allowing
for coloring the vertices in \(c\)~colors so that
no adjacent vertices have the same color.
Clearly, for $c=1$,
we obtain
the classical \textsc{Maximum (Weight) Independent Set} problem.

\begin{figure*}
\begin{center}
\begin{tikzpicture}
\node[draw, fill=white] (I3I) at (7,6) {inductive \ind{3}};

\node[draw, fill=white] (N4I) at (10,5) {$K_{1,5}$-free};

\node[draw, fill=white] (I2I) at (4,5) {\textbf{\boldmath inductive \ind{2}}} edge (I3I);

\node[draw, fill=white] (2S) at (4,4) {\textbf{2\hyp simplicial}} edge (I2I);

\node[draw, fill=white] (N3I) at (7,4) {\textbf{\boldmath$K_{1,4}$-free}} edge (I3I) edge (N4I);
\node[draw, fill=white, align=center] (UIUI) at (10,3.8)
{
  unit interval $\bowtie$ unit interval
  \\
  \cite{Jia10}
} edge (I3I) edge (N4I);

\node[draw, fill=white, align=center] (CC) at (2,2.5) {
  \textbf{\boldmath cluster $\bowtie$ chordal}
  \\
  \cref{thm:iseasy}
} edge (2S);
\node[draw, fill=white, align=center] (SM) at (5.5,2.5) {\textbf{\boldmath2-simplicial $\cap$ 3-mino}\\ \cref{thm:ishard}} edge (2S) edge (N3I);

\node[draw, fill=white, align=center] (strip) at (0,1) {cluster $\bowtie$ interval\\\cref{sec:cluster-interval}} edge (CC);
\node[draw, fill=white, align=center] (chordal) at (4,1) {inductive \ind{1}\\\cref{sec:chordal}} edge (CC);
\node[draw, fill=white, align=center] (N2I) at (10,1) {$K_{1,3}$-free\\\cref{sec:k13free}} edge (N3I) edge (I2I);

\node[draw, fill=white, align=center] (interval) at (2,-0.2) {interval\\\cref{sec:interval}} edge (chordal) edge (strip);

\begin{pgfonlayer}{background}
	\draw[fill=white, rounded corners]
   		(0.6,-0.8) --
		(3.4,-0.8) -- 
		(3.4,0.4) -- 
		(0.6,0.4) -- cycle;
		
	\draw[fill=black!20!white, rounded corners]
   		(2,0.4) --
		(11.3,0.4) -- 
		(11.3,1.65) -- 
		(2,1.65) -- cycle;
		
	\draw[fill=black!40!white, rounded corners]
   		(-1.6,0.4) --
		(2,0.4) -- 
		(2,1.65) -- 
		(3.7,1.65) -- 
		(3.7,3.25) -- 
		(-1.6,3.25) -- cycle;

	\draw[fill=black!60!white, rounded corners]
   		(2,3.25) --
		(3.7,3.25) -- 
		(3.7,1.65) -- 
		(12.1,1.65) -- 
		(12.1,6.6) -- 
		(2,6.6) -- cycle;
\end{pgfonlayer}
		
\end{tikzpicture}
\caption{For definitions of graph classes, see \cref{sec:prel}.  Our new results are in bold.
  \newline
  White --- \McCS{} is solvable in polynomial time.
  \newline
  Light Gray --- \MIS{} is \pt{} solvable, \cCS{} is NP-hard, \McCS{} parameterized by the solution size is \fp{} tractable.
  \newline
  Gray --- \IS{} is NP-hard. \McCS{} parameterized by solution size is \fp{} tractable.
  \newline
  Dark Gray --- \IS{} is NP-hard and W[1]-hard parameterized by the solution size.}
\label{fig:MIS}
\end{center}
\end{figure*}

\cite{HT15} found that,
in interference\hyp{}avoiding wireless communication
scheduling,
one typically faces
these problems
in inductive \ind{k} graphs with $k\leq12$.
Unfortunately,
we will see that already unweighted \IS{}
is hard for subclasses of inductive \ind{2} graphs.
Notably, however,
classic scheduling problems 
have been studied for a number of subclasses of inductive \ind{2} graphs,
including interval graphs \citep{KLPS07}, strip graphs \citep{HK06},
their superclass 2-track interval graphs \citep{HKML11,BMNW15},
and claw-free graphs \citep{GK03,KEGGM17,KM17}.

\paragraph{Our contributions and organization of this work.}
Our main contributions are as follows.
We refer to \cref{sec:prel} for definitions
of graph and complexity classes.

In \cref{sec:classes},
we explore the fine structure of 
graph classes
below 
inductive \ind{3} graphs,
discuss relations to known graph classes
(often already appearing in 
scheduling applications), and also survey the recognition 
complexities of the respective graph classes.

In \cref{sec:ishard},
we show that %
already (unweighted)
\IS{}
parameterized by the solution size (number of vertices)
is W[1]-hard on 2\hyp simplicial 3-minoes, a proper subclass 
of inductive \ind{2} graphs. 
Thus,
there is no reason
to hope for exact fixed\hyp parameter algorithms
to find even small independent sets
in these graphs.

In contrast,	
in \cref{sec:iseasy}
we show that
\McCS{} parameterized by the solution size
is fixed\hyp parameter tractable on
a class that lies
properly between inductive \ind{1} and inductive \ind{2} graphs
and generalizes the class of strip graphs,
which \citet{HK06}
used to model
the \textsc{Job Interval Selection} problem.

Finally,
in \cref{sec:related},
we briefly survey complexity results
on \McCS{} in the other graph classes
discussed in \cref{sec:classes}.

We refer to \cref{fig:MIS} for an overview of results
on the parameterized complexity %
of \IS{} and \McCS{}
on graph
classes mainly below inductive \ind{3} graphs.

\paragraph{Related work.}
It is well known that, on general graphs,
already (unweighted) \IS{}
is
a notoriously hard problem
from the viewpoint of polynomial-time approximability
as well as from the viewpoint of parameterized complexity.
Hence, it is natural to study \IS{} and related problems on special 
graph classes.

Our results improve on or complement
the following known %
results on \IS{} and \McCS{} on
inductive \ind{k} graph classes.
\begin{itemize}
\item \IS{} parameterized 
by solution size is fixed\hyp parameter tractable on strip 
graphs \citep{BMNW15}; strip graphs are equivalent to
the class of cluster $\bowtie$ interval graphs (see \cref{fig:MIS}),
a proper subclass of inductive \ind{2} graphs.
\item \MIS{} parameterized
by solution size is W[1]-hard already on unit 2-track interval 
graphs \citep{Jia10}; unit 2-track interval graphs
form a subclass of inductive \ind{3} graphs \citep{YB12}.
\item (Unweighted) \cCS{} parameterized 
by solution size is fixed\hyp parameter tractable on inductive \ind{1}
(that is, chordal) graphs \citep{MPR+13}.
\end{itemize}

\section{Preliminaries}
\label{sec:prel}
This section introduces
basic notation and concepts that we will use throughout this work.

\paragraph{Fixed\hyp parameter algorithms.}
The essential idea behind \fp{} algorithms is
to accept exponential running times,
which are seemingly inevitable in solving NP-hard problems,
but to restrict them to one aspect of the problem,
the \emph{parameter} \citep{CFK+15,DF13,FG06,Nie06}.

Thus, formally, an instance of a \emph{parameterized problem}~\(\Pi\subseteq\Sigma^*\times\mathds{N}\) is a pair~$(x,k)$ consisting of the \emph{input~$x$} and the \emph{parameter~$k$}.
A parameterized problem~$\Pi$ is \emph{\fp{} tractable~(FPT)} with respect to a parameter~$k$ if there is an algorithm solving any instance of~$\Pi$ with size~$n$ in $f(k) \cdot \poly(n)$~time for some computable function~$f$.  Such an algorithm is called a \emph{\fp{} algorithm}. It is potentially efficient for small values of~$k$, in contrast to an algorithm that is merely running in polynomial time for each fixed~$k$ (thus allowing the degree of the polynomial to depend on~$k$).  FPT is the complexity class of \fp{} tractable parameterized problems.  

\paragraph{Parameterized intractability} To show that a problem is presumably not \fp{} tractable, there is a parameterized analog of NP\hyp hardness theory.
The parameterized analog of P $\subseteq$ NP is
a hierarchy of complexity classes
FPT\({}\subseteq{}\)W[1]\({}\subseteq{}\)W[2]\({}\subseteq{}\ldots\),
where all inclusions are assumed to be proper.
Analogously to the class coNP,
for \(t\in\mathds{N}\),
one defines the classes
coW[$t$]
containing all parameterized problems~$\Pi\subseteq\Sigma^*\times\mathds{N}$
for which the complement~$(\Sigma^*\times\mathds{N})\setminus\Pi$ is in W[$t$].
A parameterized problem~$\Pi$ with parameter~\(k\) is
\emph{(co)W[$t$]-hard} for some \(t\in\mathds{N}\)
if any problem in (co)W[$t$] has a parameterized reduction to~\(\Pi\):
a \emph{parameterized reduction} from a parameterized problem~$\Pi_1$
to a parameterized problem~$\Pi_2$ is an algorithm mapping an instance~$(x,k)$
to an instance~$(x',k')$
in~$f(k)\cdot\poly(|x|)$~time
such that $k'\leq g(k)$ and
\((x,k)\in\Pi_1\iff(x',k')\in\Pi_2\),
where \(f\)~and~\(g\) are arbitrary computable functions.
By definition,
no (co)W[$t$]-hard problem is \fp{} tractable
unless FPT${}={}$W[$t$].

\paragraph{Graph theory.}
We consider undirected,
finite,
simple graphs
$G=(V,E)$,
where $V$~is the set of \emph{vertices}
and $E \subseteq \{\{v,w\}\mid v\ne w\text{ and }v,w\in V\}$~is the set of \emph{edges}.

If the vertices have \emph{weights}~$w \colon V \to \mathds{N}$, %
then
we denote the weight of a vertex subset~$S\subseteq V$
as~$w(S):=\sum_{v\in S}w(v)$.

The \emph{(open) neighborhood} \(N_G(v):=\{u\in V\mid \{u,v\}\in E\}\) of a vertex~\(v\in V\)
is the set of vertices \emph{adjacent to~\(v\)} in~\(G\).
The \emph{closed neighborhood} of~\(v\)
is \(N_G[v]:=N_G(v)\cup\{v\}\).
When there is no ambiguity,
we drop the subscript~$G$.
An \emph{independent set}
is a set of pairwise nonadjacent vertices.

For a vertex subset~\(S\subseteq V\),
the \emph{subgraph~\(G[S]\) of~\(G\) induced by~\(S\)}
is the graph with vertex set~\(S\)
and edge set~\(\{\{u,v\}\in E\mid u\in S, v\in S\}\).
We say that $G$~is \emph{\(F\)-free}
for some graph~\(F\)
if \(G\)~does not contain induced subgraphs
isomorphic to~\(F\).
A~\(K_{n,m}\)
is a complete bipartite graph
with \(n\)~vertices on the one side
and \(m\)~vertices on the other.

A graph is~$c$\hyp colorable
if one can assign each vertex one of \(c\)~colors
so that the endpoints of each edge have distinct colors.

We study the parameterized complexity of the following problem.
\problemdef{\boldmath\McCS{}}
{A  graph~$G=(V,E)$ with vertex weights
  $w \colon V \to \mathds{N}$.}
{Find a set~$S\subseteq V$ of maximum weight~\(w(S)\)
  such that \(G[S]\) is \(c\)\hyp colorable.}

\noindent
For \(c=1\),
we obtain the classic \MIS{} problem.
\problemdef{\MIS{}}
{A graph~$G=(V,E)$ with vertex weights
  $w \colon V \to \mathds{N}$.}
{Find a set~$S\subseteq V$
  of pairwise nonadjacent vertices
  and maximum weight~\(w(S)\).}
\noindent
If all vertices have weight one,
then we call the problems \cCS{} and \IS{},
respectively.

\paragraph{Graph classes.}
The motivation, applications, and recognition complexity
of the following graph classes
are discussed in detail in \cref{sec:classes}.
We first list a number of classic graph classes and then introduce the central graph class of this work, inductive $k$-independent graphs (\cref{def:iki}).

We use \(A\cap B\) to denote the class of graphs
that belong to both classes~\(A\) and~\(B\).

A graph is \emph{Hamiltonian}
if it contains a cycle using each vertex of the graph exactly once.

A graph is \emph{cubic}
if each vertex has exactly three neighbors.

A graph is \emph{chordal}
if it does not contain cycles
of length at least four as induced subgraphs.

A graph is an \emph{interval graph}
if its vertices can be represented
as intervals of the real line
such that two vertices
are adjacent
if and only if their corresponding intervals intersect.
If the vertices can be represented
by intervals of equal length, then
the graph is a \emph{unit interval graph}.

\begin{definition}[inductive \boldmath\ind{k}]
  \label[definition]{def:iki}
  A graph~$G$ is
  \emph{inductive \ind{k}}
  if there is a
  \emph{\indc{k} ordering}
  \(v_1,v_2,\ldots,v_n\)
  of its vertices such
  that all independent sets
  of~\(G[N[v_i] \cap \{v_i,v_{i+1},\ldots,v_n\}]\)
  have size at most~\(k\)
  for each~\(i\in\{1,2,\ldots,n\}\).
\end{definition}
\citet{YB12} survey
inductive \ind{k} graphs.

When dropping
the order constraint
from the definition of
inductive \ind{k} graphs,
that is,
if one requires the neighborhood
of each vertex to contain
independent sets of size at most~\(k\),
then one obtains the
$K_{1,k+1}$-free graphs---a~proper subclass.
\begin{definition}[\boldmath\(K_{1,k}\)-free graphs]
  \label[definition]{def:kclawfree}
  A graph is \emph{\(K_{1,k}\)-free}
  if it does not contain a \(K_{1,k}\)
  (a tree with one internal node and \(k\)~leaves)
  as an induced subgraph.
\end{definition}

\noindent
Another proper subclass
of inductive \ind{k} graphs
are  %
\emph{\(k\)\hyp simplicial graphs} \citep{YB12},
also studied by e.\,g.\;\citet{JM00}, \citet{KTV10}, and \citet{HT15}.

\begin{definition}[\boldmath\(k\)\hyp simplicial]
  \label[definition]{def:simplicial}
  A graph is \emph{\(k\)\hyp simplicial}
  if there is an ordering of its
  vertices~\(v_1,v_2,\ldots,v_n\)
  such that \(G[N[v_i]\cap\{v_i,v_{i+1},\ldots,v_n\}]\)
  can be partitioned into at most \(k\)~cliques
  for each~\(i\in\{1,2,\ldots,n\}\).
\end{definition}
When, again, dropping the order constraint,
one gets \emph{\(k\)-minoes} as
introduced by \citet{MT03},
which are a proper subclass of
\(K_{1,{k+1}}\)-free graphs:
\begin{definition}[\boldmath\(k\)-mino]
  \label[definition]{def:mino}
  A graph is called a \emph{\(k\)-mino} if each vertex is contained in
  at most \(k\)~maximal cliques. %
\end{definition}

\noindent
In several scheduling works (see \cref{sec:classes}),
one encounters variants
of \MIS{} in graphs of the following form:\footnote{We are not aware that \(A\bowtie B\) or an alike notation has been used before. Rather, previous work has been coming up with ad-hoc names for the classes \(A\bowtie B\) for various~\(A\) and~\(B\), often referring to one and the same class by several names.}
\begin{definition}[$A\bowtie B$]
  For two graph classes \(A\) and \(B\),
  we denote by \(A\bowtie B\) the class of
  graphs~\(G=(V,E)\) such that \(E=E_1\cup E_2\)
  for a graph~\(G_1=(V,E_1)\) in class~\(A\) and
  a graph~\(G_2=(V,E_2)\) in class~\(B\).
\end{definition}

\begin{figure*}
  \begin{subfigure}[b]{5cm}
    \begin{center}
      \includegraphics[scale=0.75]{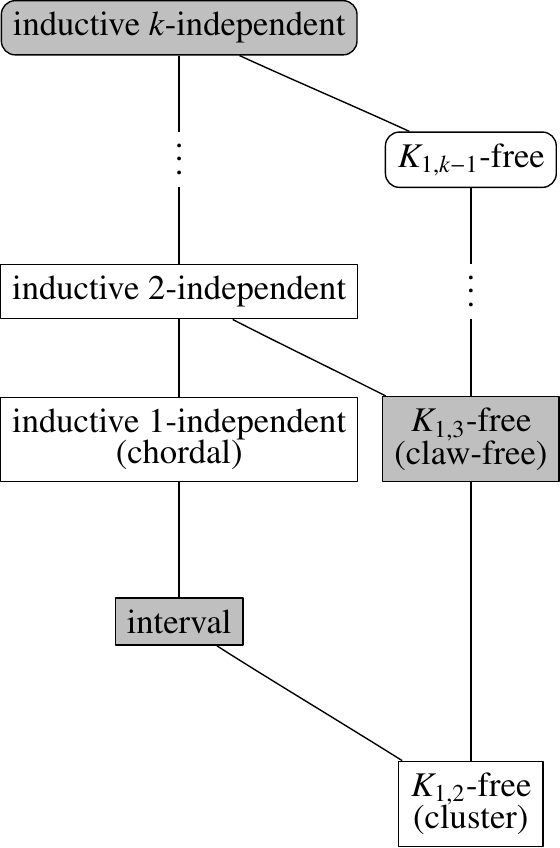}
    \end{center}
    \caption{The backbone, described in \cref{subsec:backbone}.}
    \label{fig:gc-backbone}
  \end{subfigure}\hfill
  \begin{subfigure}[b]{11cm}
    \includegraphics[scale=0.75]{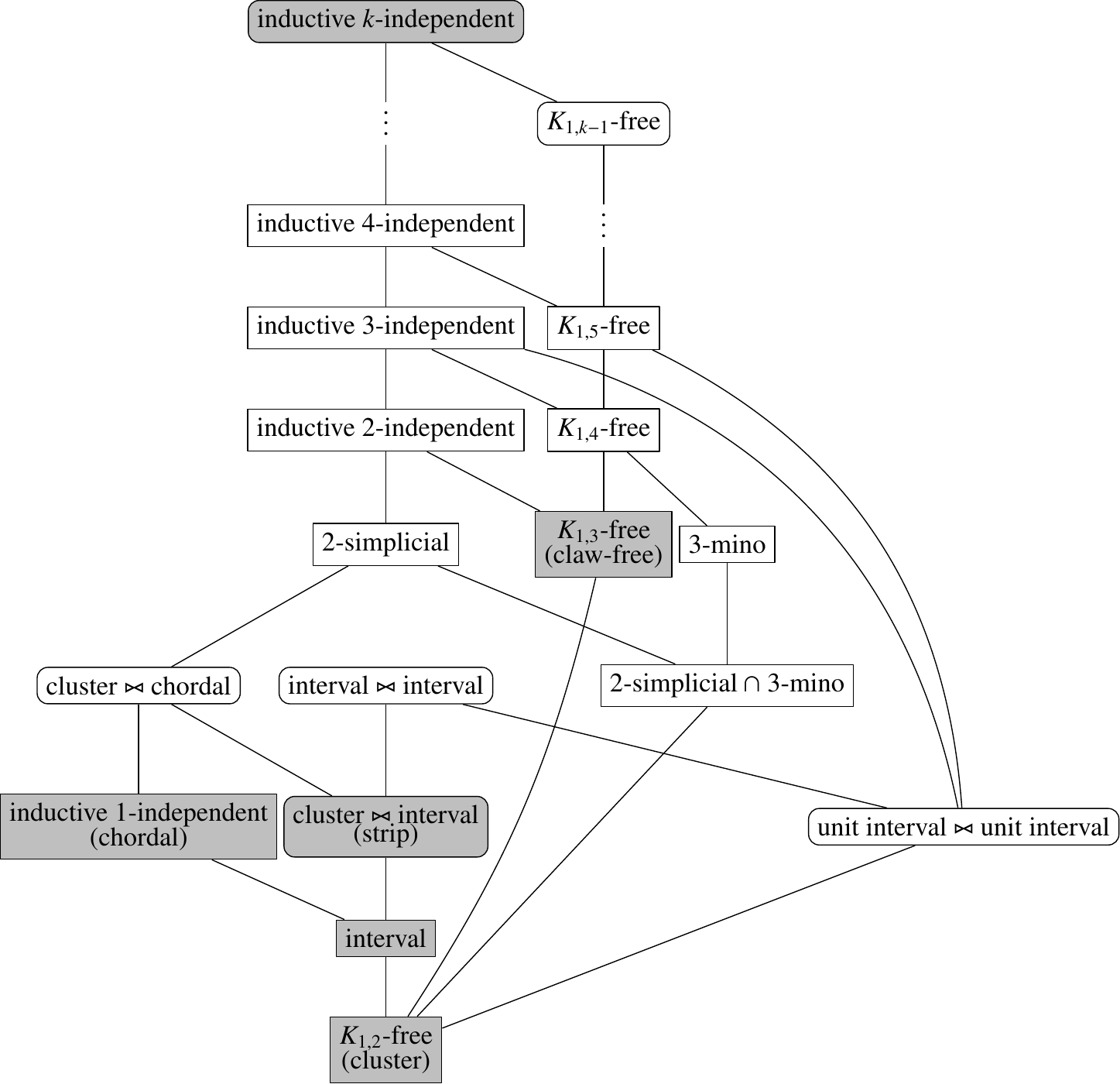}
    \caption{The enriched picture, described in \cref{subsec:enrich}.}
    \label{fig:gc-refined}
  \end{subfigure}
  \caption{Graph classes discussed in our work.
    The grey graph classes have been studied
    in the scheduling literature before.
    Rectangular boxes show graph classes recognizable in polynomial time,
   boxes with rounded corners show graph classes whose recognition is NP-hard or W[1]-hard.
    An edge from a lower graph class to a higher graph class means that the lower one is properly contained in the upper one.}
\end{figure*}
\section{Graph classes related to scheduling}
\label{sec:classes} 
Before delving deeper into problem\hyp specific considerations, in this section
we collect a number of mostly simple properties of and relations between graph classes in the range from 
cluster graphs (a special case of interval graphs) 
on the ``bottom end'' and inductive \ind{2} graphs on the
``top end''.
In this way, we lay the foundations for subsequent studies 
mainly related to independent set problems (with their connections to scheduling).
More specifically, we exhibit the inclusion relationships 
between various graph classes, discuss the time complexities of their 
recognition problems (that is, given a graph, how
hard is it to decide whether it belongs to the particular graph class),
and indicate whether the graph class has already been 
investigated in the context 
of scheduling.
This section might also be of independent
interest to researchers outside of scheduling contexts. 

First, we will discuss some kind of backbone structure
of graph classes
(\cref{subsec:backbone}) and then 
we will enrich the picture by  further
natural graph
classes (\cref{subsec:enrich}).

\subsection{Backbone graph classes}
\label{subsec:backbone}

To start with, consider \cref{fig:gc-backbone},
which depicts the fundamental
backbone structure of our graph classes. 
Basically, it shows the relations between various levels of inductive 
\ind{k} and $K_{1,k}$-free graphs, enriched by interval graphs.
 Notably, here and also in \cref{fig:gc-refined},
 all shown containments 
 are proper. 
Next, we individually discuss the most relevant graph classes.

\paragraph{Interval graphs.} 
Interval graphs certainly
are among
the most fundamental graph 
classes studied in the context
of scheduling.
Among other things,
they are used to model
\emph{interval scheduling} problems,
where not only the lengths
but also the start times of jobs are given
and the task is
to minimize the number of required machines
or to select most profitable set of jobs \citep{KLPS07}.
Other uses include
the problem of
maximizing the weighted number of
\emph{just-in-time jobs}
(exactly meeting their deadlines)
on parallel identical machines
\citep{SV05},
\emph{mutual exclusion scheduling}
\citep{BC96,Gar09},
or selecting the optimal start times
of jobs from finite sets
\citep{BMNW15}.
Interval graphs
can be recognized in linear time \citep{BL76}, contain 
cluster graphs,
and are special cases of chordal graphs \citep{Dir61},
which are the same as inductive \ind{1} graphs. 

\paragraph{Cluster graphs.} 
These are disjoint unions of cliques (often occurring in the context of
graph-based data clustering \citep{BBC04,SST04}
and are special cases of unit interval graphs.
They are exactly the $K_{1,2}$-free graphs (where $K_{1,2}$ is a path on
three vertices). It is easy to see 
that they can be recognized in polynomial time.
In scheduling,
cluster graphs naturally
occur as conflict graphs
when,
for example,
modeling several variants
of one job
as multiple vertices
and only one of them
is allowed to be in a solution
(for example,
a job might be available
at various starting times
or
with various processing times,
which influences its profit or weight)
\citep{HK06,BMNW15}.

\paragraph{Chordal graphs.}
Chordal graphs are equivalent to inductive \ind{1} graphs
\citep{AADK02,BP93,YB12}.
They are a well-known superclass of interval graphs \citep{Dir61}.
They can be recognized in linear time \citep{RTL76}
and are applied in modeling
throughput\hyp maximization scheduling
problems in wireless networks \citep{BCR+12}.
Moreover,
the problem of scheduling jobs with unit execution times
and precedence constraints
on parallel identical machines
is polynomial\hyp time
solvable if the incomparability graph
of the partial order
determining the precedence constraints
is chordal \citep{PY79}.

\paragraph{Claw-free graphs.}
Claw-free graphs,
that is, \(K_{1,3}\)-free graphs,
trivially contain cluster graphs and are incomparable with 
both interval and chordal graphs
(a $K_{1,3}$ is an interval graph and a chordal graph,
yet an induced \(4\)\hyp cycle is \(K_{1,3}\)-free
but not chordal and thus not an interval graph).
It is easy to see that they can be recognized in polynomial time. 
Claw-free graphs appear in the context of wireless 
scheduling
\citep{KEGGM17,KM17} as well as in classic scheduling 
scenarios \citep{GK03}.

\paragraph{Inductive \ind{k} graphs for $k\geq 2$.}
These graphs recently gained increased interest in the context of wireless
scheduling \citep{AHT17,HT15} and clearly contain chordal graphs
and claw-free graphs (both by definition).

There is a straightforward polynomial\hyp time 
recognition algorithm;
however, the degree of the polynomial depends
on~$k$ and thus the algorithm is impractical
already for small~$k$.
On the one hand,
this is unfortunate
in view of the fact that
algorithms for inductive \(k\)\hyp independent graphs,
for example the approximation algorithm
for \McCS{} suggested
by \citet{YB12},
require the \indc{k} ordering to be known.
On the other hand,
like in the wireless scheduling
applications of \citet{AHT17} and \citet{HT15},
a sufficiently good \indc{k} ordering
is given directly by the application data
and does not have to be computed.

\looseness=-1
As observed by \citet{YB12},
having the degree in the polynomial
of the running time of the recognition algorithm
for inductive \(k\)\hyp independent graphs
depending on~\(k\) is unavoidable
unless FPT${}={}$W[1].
Since \citet{YB12} omitted the formal proof,
for the sake of completeness we provide it here.

\begin{proposition}\label[proposition]{prop:indkind-hard}
  Deciding whether a graph is inductive \ind{k}
  is coW[1]-hard with respect to~$k$.
\end{proposition}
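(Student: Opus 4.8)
The plan is to show \textsf{coW[1]}-hardness by reducing from the complement of a known \textsf{W[1]}-hard problem. The natural target is \IS{} parameterized by the solution size: deciding whether a graph~$H$ has an independent set of size~$k$ is \textsf{W[1]}-hard, so its complement (deciding whether every independent set of~$H$ has size at most $k-1$) is \textsf{coW[1]}-hard. The key observation to exploit is that Definition~\ref{def:iki} quantifies over \emph{all} vertices and demands that each right-hand neighborhood contain no independent set larger than~$k$. So I would look for a construction that forces the inductive \ind{k} test to ``see'' a large independent set of~$H$ precisely when one exists.

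First I would take an instance $(H,k)$ of \IS{} with $H=(V_H,E_H)$ and build a graph~$G$ by adding a single fresh \emph{apex} vertex~$u$ adjacent to every vertex of~$H$; set the parameter to $k':=k-1$ (or a comparable value determined by the construction). The point is that $u$'s neighborhood in~$G$ is exactly $V_H$, so the largest independent set inside $N_G[u]$ equals the independence number~$\alpha(H)$ of~$H$. If I can arrange the \indc{k'} ordering so that~$u$ is essentially forced to appear first — or more precisely, so that in any valid ordering witnessing inductive \((k'\)-)independence the neighborhood of~$u$ must be tested in full — then $G$ is inductive \ind{k'} if and only if $\alpha(H)\le k'=k-1$, i.e.\ if and only if $H$ has no independent set of size~$k$. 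The forward and backward directions would then follow: a small independence number yields a valid ordering (placing~$u$ first and ordering~$V_H$ arbitrarily after it, since every later neighborhood is a subset of $V_H$'s), and conversely a large independent set in~$H$ obstructs every ordering.

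The main obstacle I anticipate is controlling \emph{all} the other right-hand neighborhoods, not just that of~$u$: the definition requires the bound $k'$ to hold at \emph{every} vertex simultaneously, so I must ensure that the vertices of~$H$ do not themselves create right-hand neighborhoods with large independent sets in a way that decouples from $\alpha(H)$. A clean fix is to attach, to each vertex $v\in V_H$, a private clique (or a gadget forcing~$v$ to be eliminated cheaply) so that once~$u$ is placed first and we proceed, every subsequent closed neighborhood restricted to the remaining suffix can be partitioned or bounded independently of the global structure of~$H$. I would verify that with such padding the only ``dangerous'' neighborhood is $N_G[u]$, making the equivalence $G$ inductive \ind{k'} $\iff \alpha(H)\le k'$ exact. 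I would then check that the reduction runs in $f(k)\cdot\poly(|V_H|)$~time and that $k'$ is bounded by a computable function of~$k$, confirming it is a genuine parameterized reduction and thus establishing \textsf{coW[1]}-hardness.
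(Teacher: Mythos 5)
There is a genuine gap in the construction. With a single apex vertex~$u$, the backward direction fails: nothing forces $u$'s neighborhood to be ``tested in full,'' and the padding you propose does not create such a force. Concretely, take $H$ to be an edgeless graph on $k$ vertices, so $\alpha(H)=k>k'=k-1$. Then $G=K_{1,k}$ is a tree, hence chordal, hence inductive \ind{1}: order the leaves first (each right-hand closed neighborhood induces a single edge) and $u$ last. So $G$ is inductive \ind{k'} even though $\alpha(H)>k'$, and the claimed equivalence is false. The underlying problem is that the vertices of $H$ may be eliminated \emph{before} $u$, after which $u$'s right-hand neighborhood no longer contains the large independent set. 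Attaching a private clique to each $v\in V_H$ adds at most one vertex to any independent set in $N(v)$ and does nothing to prevent $v$ from being eliminated early; if anything, making $v$ ``cheap to eliminate'' is the opposite of what the backward direction needs.

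The paper's fix is to add not one but $k+1$ pairwise nonadjacent apex vertices $u_1,\ldots,u_{k+1}$, each adjacent to all of $V_H$ (starting from an instance $(H,k+1)$ of \IS{} and asking whether the result is inductive \ind{k}). This keeps your forward direction intact---the apexes can be placed first, after which every right-hand neighborhood induces a subgraph of $H$ and the independence bound $k$ holds---and it repairs the backward direction: if $\alpha(H)\geq k+1$, then no vertex can even be \emph{first} in a \indc{k} ordering, since each apex sees all of $V_H$ (which contains an independent set of size $k+1$) and each $v\in V_H$ sees all $k+1$ apexes, which themselves form an independent set of size $k+1$. You would need to replace your single apex and the clique padding with this (or an equivalent) device; the rest of your framing---reducing from the complement of \IS{} and checking the parameterized-reduction conditions---matches the paper's argument.
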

\begin{proof}
  We reduce any instance~$(G=(V,E), k+1)$ of \IS{}
  to an instance~$G'=(V',E')$ such that
  $G$~contains an independent set of size~$k+1$
  if and only if
  $G'$~is \emph{not} inductive \ind{k}.
  Since \IS{} is W[1]-hard parameterized
  by~\(k\),
  this shows that  
  recognizing \emph{non}\hyp inductive \ind{k}
  graphs is W[1]-hard.
  Thus,
  recognizing inductive \ind{k} graphs
  is coW[1]-hard.

  The reduction works as follows.
  Let \(V=\{v_1,v_2,\ldots,v_n\}\).
  Then,
  $V':=V\cup\{u_1,u_2,\ldots, u_{k+1}\}$
  for \(k+1\)~new vertices~\(u_1,u_2,\allowbreak\ldots,\allowbreak u_{k+1}\)
  and
  $E':=E\cup\{\{u_i,v_j\} \mid 1 \leq i \leq k+1, 1 \leq j \leq n\}.$
  This completes our reduction.

  Now,
  assume that
  a maximum independent set in~$G$ has size at most~$k$.
  We show that
  $G'$~is inductive \ind{k}.
  Put each of the newly introduced vertices~$u_i$ for \(i\in\{1,2,\ldots,k+1\}\) first within the \indc{k} ordering and then all vertices~$v\in V$ in an arbitrary order.
  Since~$N_{G'}(u_i) = V$, the neighborhood of any vertex (ignoring vertices that come before it in the \indc{k} ordering) induces a subgraph of~$G$.
  Thus it trivially holds that this neighborhood only contains independent sets of size at most~$k$. 

  Now,
  assume that
  $G$~contains an independent set of size at least~$k+1$.
  Then,
  $G'$~is not inductive~$k$-independent:
  no vertex~\(u_i\) for \(i\in\{1,2,\ldots,k+1\}\)
  can be the first in a \indc{k} ordering
  since \(N(u_i)=V\).
  Moreover,
  no vertex~\(v_j\) for \(j\in\{1,2,\ldots,n\}\)
  can be the first in an \indc{k} ordering
  since $N(v_j)\supseteq\{u_1,u_2,\ldots,u_{k+1}\}$,
  which are pairwise nonadjacent.
\qed
\end{proof}

\paragraph{$K_{1,k}$-free graphs for $k\geq 4$.} 
These are obvious superclasses of claw-free graphs and, by definition, each $K_{1,k}$-free graph is inductive \ind{(k-1)}.
Again, there is a straightforward polynomial\hyp time recognition algorithm; however, the degree of the polynomial depends on~$k$ and thus the algorithm is impractical already for small~$k$.
A folklore reduction
shows that this is unavoidable unless FPT${}={}$W[1].

\begin{proposition}
Recognizing~$K_{1,k}$-free graphs is coW[1]-hard with respect to~$k$.
\end{proposition}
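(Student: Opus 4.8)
The plan is to mimic the reduction used for \cref{prop:indkind-hard}, exploiting the parallel structure of the two definitions: a $K_{1,k}$-free graph is obtained from an inductive \ind{(k-1)} graph by dropping the order constraint, and the hardness source is the same. As before, I would reduce from \IS{} parameterized by the solution size, which is W[1]-hard, and show that a graph $G$ has an independent set of size $k$ if and only if a constructed graph $G'$ is \emph{not} $K_{1,k}$-free. This yields W[1]-hardness of recognizing non-$K_{1,k}$-free graphs, hence coW[1]-hardness of recognizing $K_{1,k}$-free graphs. The parameter in the target instance must be bounded by a computable function of the source parameter $k$, which it clearly will be (the target parameter is essentially $k$ itself).

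The construction I would use is even simpler than in \cref{prop:indkind-hard}: given an instance $(G=(V,E), k)$ of \IS{}, I would add a single new vertex $u$ adjacent to every vertex of $V$, so that $V' := V \cup \{u\}$ and $E' := E \cup \{\{u,v\} \mid v \in V\}$. The key observation is that an induced $K_{1,k}$ in $G'$ consists of a center vertex together with $k$ pairwise nonadjacent neighbors. First I would check the forward direction: if $G$ contains an independent set $\{v_1,\ldots,v_k\}$ of size $k$, then $u$ together with these $k$ vertices forms an induced $K_{1,k}$ in $G'$ (the $v_i$ are pairwise nonadjacent in $G$, hence in $G'$, and each is adjacent to $u$), so $G'$ is not $K_{1,k}$-free.

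For the converse I would argue that if $G'$ contains an induced $K_{1,k}$ with center $c$ and leaves $L$, then $L$ is an independent set of size $k$ lying inside $N_{G'}(c)$; since any $k$ pairwise nonadjacent vertices in $G'$ correspond to $k$ pairwise nonadjacent vertices among the original $V$ (the only added vertex $u$ is adjacent to all of $V$, so at most one leaf can be $u$, and even that case still yields $k-1$ independent original vertices plus possibly the center, which one can massage into an independent set of size $k$ in $G$). I expect this bookkeeping---ensuring that the claw witnessed in $G'$ always translates back to an independent set of exactly size $k$ in $G$, handling the possible appearances of $u$ as center or as one of the leaves---to be the main obstacle, though it is entirely routine. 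A clean way to sidestep the case analysis is to instead attach $k-1$ mutually nonadjacent new vertices $u_1,\ldots,u_{k-1}$ each adjacent to all of $V$ (as in \cref{prop:indkind-hard}), so that a single independent original vertex together with these yields a $K_{1,k}$; then $G'$ is non-$K_{1,k}$-free exactly when $G$ has an independent set of size one, which trivializes the problem and must be avoided. Hence I would keep a single apex vertex and perform the small case distinction above.
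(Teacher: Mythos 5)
Your construction is exactly the paper's: a single universal apex vertex $u$, with a parameterized reduction from \textsc{Maximum Independent Set}, so the proposal is correct and takes essentially the same route. The only difference is that the paper dispatches the converse without your case analysis by arguing the contrapositive---if $G$ has no independent set of size $k$, then neither does $G'$ (the universal vertex $u$ cannot enlarge any independent set), so no vertex of $G'$ has $k$ pairwise nonadjacent neighbors---which also makes the case you worry about ($u$ appearing as a leaf) moot, since for $k\geq 2$ a leaf $u$ would be adjacent to the other leaves.
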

\begin{proof}
  We will give a parameterized reduction from \IS{},
  which is W[1]-hard parameterized by solution size~$k$.
Let~$(G=(V,E),k)$ be an instance of \IS{}.
We construct a graph~$G'$
that is \emph{not} $K_{1,k}$-free
if and only if $G$~contains an independent set of size~$k$.

Let \(u\)~be a vertex not in~\(V\)
and $G' = (V \cup \{u\}, E \cup \{\{u,v\}\mid v\in V\})$.
If $G$~contains an independent set~$I$ of size at least~$k$,
then~$G'[I \cup \{u\}]$ is a~$K_{1,k}$.
Thus,
\(G'\)~is not \(K_{1,k}\)-free.

Now, assume that $G$~does not
contain an independent set of size at least~$k$.
Then,
$G'$ does not contain such an independent set
either since \(u\)~is adjacent to all other vertices of~\(G'\).
Thus,
no vertex in~$G'$ can have $k$~pairwise nonadjacent neighbors
in~$G'$ and, thus, $G'$ is~$K_{1,k}$-free.
\qed
\end{proof}

\subsection{Further graph classes}
\label{subsec:enrich}

Now, we present a number of graph classes that enrich our backbone 
structure;
some of the new graph classes have prominently
appeared in the context of scheduling. Adding these further graph 
classes to \cref{fig:gc-backbone} leads to the ``richer'' 
\cref{fig:gc-refined}.
We remark that even though we could not spot literature references 
for some of the subsequent graph classes in the context of 
scheduling, we advocate their relevance in the scheduling context because 
they naturally generalize or specialize
established ``scheduling graph classes''; 
thus, considering these classes may lead to strengthenings of some known results.

\paragraph{Strip graphs.}
These are equivalent to
the class of cluster $\bowtie$ interval graphs
and thus
form an obvious superclass of interval graphs.
Their recognition is NP-hard;
they find applications in classic 
scheduling \citep{HK06},
in particular in modeling the
\textsc{Job Interval Selection} problem
introduced by
\citet{NH82}.

\paragraph{Cluster $\bowtie$ chordal graphs.}
These graphs form an obvious superclass of strip graphs and of chordal graphs.
We will see that their recognition is NP-hard as well (under Turing reductions; see 
\cref{cor:cluster-chordal-hard}). We are not aware of a direct 
scheduling application.

We will now show that cluster\({}\bowtie{}\)chordal graphs form a proper subclass of 2\hyp simplicial graphs and then prove the hardness of recognizing cluster\({}\bowtie{}\)chordal graphs.

\begin{proposition}
\label[proposition]{prop:2-simplicial}
  Cluster\({}\bowtie{}\)chordal graphs are
  a proper subclass of 2\hyp simplicial graphs.
\end{proposition}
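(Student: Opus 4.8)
The plan is to prove the two halves of the claim separately: first that every cluster\({}\bowtie{}\)chordal graph is 2-simplicial (containment), and then that some 2-simplicial graph admits no such decomposition (properness).

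For the containment I would start from a decomposition \(E=E_1\cup E_2\) in which \(G_1=(V,E_1)\) is a cluster graph and \(G_2=(V,E_2)\) is chordal. Since chordal graphs are exactly the inductive \ind{1} graphs, \(G_2\) has a perfect elimination ordering \(v_1,\dots,v_n\), i.e.\ one in which \(N_{G_2}[v_i]\cap\{v_i,\dots,v_n\}\) is a clique of \(G_2\) for every~\(i\). I would use this very ordering to witness 2-simpliciality of~\(G\). The key point is that \(N_G[v_i]=N_{G_1}[v_i]\cup N_{G_2}[v_i]\), so the forward closed neighborhood \(N_G[v_i]\cap\{v_i,\dots,v_n\}\) is the union of two cliques of~\(G\): the forward closed \(G_2\)-neighborhood, which is a clique by the elimination ordering, and the forward closed \(G_1\)-neighborhood, which is contained in the unique cluster clique of~\(G_1\) holding~\(v_i\). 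Two cliques therefore cover this set, so \(G\) is 2-simplicial by \cref{def:simplicial}.

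For properness I would exhibit one concrete separating graph. A natural candidate is the complement \(\overline{C_7}\) of the seven-vertex cycle (equivalently, the square~\(C_7^2\)): its maximal cliques are exactly the seven triples of ``consecutive'' vertices, each vertex lies in precisely three of them, so \(\overline{C_7}\) is a 3-mino, and reading the vertices in cyclic order gives a 2-simplicial ordering, since each forward closed neighborhood splits into at most two cliques (routine to check). Thus \(\overline{C_7}\) even lies in 2-simplicial\({}\cap{}\)3-mino. The hard part is to show that \(\overline{C_7}\) is \emph{not} cluster\({}\bowtie{}\)chordal. Here I would exploit that in any decomposition the cluster part partitions the vertices into cliques (``groups''), that every edge crossing two groups is forced into the chordal part~\(G_2\), and that the non-edges of \(\overline{C_7}\) are exactly its ``long'' diagonals, so they can never be added to~\(G_2\) as chords. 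Consequently each of the seven induced four-cycles of \(\overline{C_7}\) must contain at least one edge lying inside a group, for otherwise all four of its edges land in~\(G_2\) and, the diagonals being absent, create an induced four-cycle there.

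I would use this necessary condition to cut the admissible clique partitions down to a few classes modulo the dihedral symmetry of \(\overline{C_7}\), and then show that every surviving partition forces an induced five-cycle into~\(G_2\) whose only candidate chords are either non-edges of \(\overline{C_7}\) or edges whose addition re-creates one of the forbidden four-cycles; in either case \(G_2\) fails to be chordal, so no decomposition exists. The main obstacle is exactly this last step—organizing the case analysis over clique partitions so that the unavoidable induced four- or five-cycle is produced cleanly rather than by brute force—and I expect the symmetry of \(\overline{C_7}\) to keep the number of genuinely distinct cases very small. Exhibiting such a graph completes the proof that the inclusion is proper.
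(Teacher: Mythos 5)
Your containment argument is correct and is in substance the paper's: the paper repeatedly deletes a vertex that is simplicial in the chordal part (its remaining neighborhood then splits into one clique from the chordal part and one from the cluster containing it), whereas you run the perfect elimination ordering of \(G_2\) globally and observe that each forward closed neighborhood is covered by the forward \(G_2\)-clique plus the cluster of \(v_i\). Both are fine; no complaint here.

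The properness half, however, has a genuine gap: you never actually prove that your witness \(\overline{C_7}\cong C_7^2\) fails to be cluster\({}\bowtie{}\)chordal. Your reductions are sound as far as they go (each of the seven induced \(C_4\)'s, whose two diagonals are non-edges of \(C_7^2\), must contain a cluster edge; this forces the cluster partition to use either two consecutive triples, one triple plus two further edges, or three disjoint distance-two edges), but the decisive step --- showing that every surviving partition leaves an induced \(C_5\) in the forced part of \(G_2\) whose only available chords recreate an induced \(C_4\) --- is exactly the case analysis you defer, and without it the properness claim is unproven. The paper sidesteps all of this by taking \(K_{2,4}\) as the separating example: \(K_{2,4}\) is 2\hyp simplicial since every induced subgraph has a vertex (on the large side) with at most two neighbors, and it is not cluster\({}\bowtie{}\)chordal because any cluster subgraph of the triangle\hyp free graph \(K_{2,4}\) is a matching using at most one edge per vertex of the two\hyp vertex side, hence at most two edges, and the remaining at least six edges always contain an induced \(C_4\), which cannot lie in a chordal graph. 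If you do push your \(\overline{C_7}\) analysis through, you would obtain the slightly stronger separation of cluster\({}\bowtie{}\)chordal graphs from 2\hyp simplicial 3-minoes (note \(K_{2,4}\) is not a 3-mino, nor \(K_{1,4}\)-free), but as submitted the proof is incomplete precisely where the work lies.
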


\begin{proof}
  Consider a cluster\({}\bowtie{}\)chordal graph~\(G\).
  We show that
  any of its induced subgraphs~\(G'\)
  contains a vertex
  whose neighborhood
  can be covered by two cliques.
  Repeatedly deleting
  such vertices from~\(G\)
  yields an ordering
  as
  required by \cref{def:simplicial}.
  
  Note that \(G'\), like its supergraph~\(G\),
  is also an
  edge-wise union
  of a chordal graph~\(G_1\)
  and a cluster graph~\(G_2\)
  on the same vertex set.
  Since \(G_1\)~is chordal,
  it contains a vertex~\(v\)
  whose neighborhood is a clique
  \citep{BP93}.
  Its neighborhood in~\(G_2\) is also a clique.
  Thus,
  the neighborhood of~\(v\) in~\(G'\)
  can be covered by two cliques. This concludes the proof that cluster\({}\bowtie{}\)chordal graphs form a subclass of 2\hyp simplicial graphs.
  
  Next we show that this inclusion is strict.
  To this end, we show that $K_{2,4}$~is 2\hyp simplicial but not a cluster\({}\bowtie{}\)chordal graph.
  A~$K_{2,4}$ is 2\hyp simplicial
  since every induced subgraph
  contains a vertex
  whose neighborhood can be covered
  using two cliques.
  It is,
  however,
  not cluster\({}\bowtie{}\)chordal:
  if \(K_{2,4}\)~was the edge-wise union of a
  chordal graph~\(G_1\)
  and a cluster graph~\(G_2\),
  then
  \(G_2\) would contain at most two edges of~\(K_{2,4}\).
  Independently
  of the choice of these two edges,
  the remaining edges
  contain an induced~\(C_4\)
  and thus cannot be part of~\(G_1\).
\qed
\end{proof}

\noindent
Next,
we show that recognizing cluster\({}\bowtie{}\)chordal graphs
is NP-hard.
We will basically use the same reduction 
from the NP-hard \HC{} problem on triangle\hyp free cubic graphs
that \citet{HK06}
used to show that recognizing strip graphs is NP-hard.
Yet we need to adapt their proof, particularly concerning the Hamiltonicity of triangle-free cubic graphs (\cref{lem:hamiltoniscc}).
For the sake of self-containedness, we provide all details, including also parts that coincide with \citet{HK06}.

\decproblemdef{\HC}%
{A graph~\(G=(V,E)\)}%
{Is there a cycle in~$G$ that passes through every vertex exactly once?}

\noindent
We start with an auxiliary lemma.
\begin{lemma}
\label[lemma]{lem:hamiltoniscc}
A triangle\hyp free cubic graph~$G=(V,E)$ is Hamiltonian if and only if, for each vertex~$v$,
there is an edge~$e$ incident to~$v$ such that~$G'=(V,E\setminus\{e\})$ is a cluster\({}\bowtie{}\)chordal graph. 
\end{lemma}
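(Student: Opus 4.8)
The plan is to prove both directions of the biconditional by relating Hamiltonicity of a triangle-free cubic graph $G$ to the cluster$\,\bowtie\,$chordal structure of the graph obtained by deleting a single edge at a chosen vertex.

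For the forward direction, suppose $G$ is Hamiltonian and fix any vertex $v$. Since $G$ is cubic, $v$ has exactly three incident edges. My idea is to use a Hamiltonian cycle $C$ to split the edge set: the cycle $C$ itself is a single cycle through all vertices, so removing one edge of $C$ incident to $v$ turns $C$ into a Hamiltonian path $P$. I would then partition $E\setminus\{e\}$ into the edges of $P$ and the remaining ``chord'' edges (those of $G$ not on $C$). The edges of $P$ should form the chordal part: a path is chordal (indeed an interval graph). Since $G$ is cubic and triangle-free, I expect the leftover edges to form a graph of maximum degree at most one—a matching—which is a cluster graph. Thus $G'=(V,E\setminus\{e\})$ decomposes as a chordal graph (the path) edge-wise unioned with a cluster graph (the matching), witnessing that $G'$ is cluster$\,\bowtie\,$chordal. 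The triangle-freeness is what I expect keeps the two parts clean, preventing a triangle from forcing edges that break chordality or the matching property.

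For the converse, suppose that for \emph{every} vertex $v$ there is an incident edge $e_v$ whose deletion yields a cluster$\,\bowtie\,$chordal graph. I want to conclude $G$ is Hamiltonian. The strategy here is to analyze what a cluster$\,\bowtie\,$chordal decomposition of $G'=(V,E\setminus\{e_v\})$ can look like when $G$ is cubic and triangle-free. Triangle-freeness means the cluster part, being a union of cliques contained in $G'$, can only consist of cliques of size at most two—i.e.\ a matching—and likewise any clique in a $2$-simplicial covering argument collapses to edges. So the chordal part must carry most of the structure; with degree bounds from cubicity, I expect the chordal graph on $n$ vertices with few edges to be forced into being essentially a disjoint union of paths. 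Combining the chordal (path) part with the matching part, and using that this holds at every vertex, I would argue that these local path-plus-matching structures can be stitched into a single spanning cycle, giving a Hamiltonian cycle in $G$.

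The main obstacle will be the converse direction: translating the purely structural statement ``$G'$ is cluster$\,\bowtie\,$chordal'' back into the combinatorial conclusion ``$G$ has a Hamiltonian cycle.'' The forward direction is a fairly direct construction, but the converse requires understanding exactly which triangle-free cubic graphs admit such an edge deletion, and showing this characterizes Hamiltonicity rather than some weaker property. I would lean heavily on triangle-freeness (to force cluster cliques down to single edges) and on the cubic degree constraint (to force the chordal part to be a linear, path-like, forest), and I anticipate the delicate step is verifying that the resulting spanning subgraph is genuinely a single cycle rather than a union of shorter paths and cycles. Since the statement says the reduction follows \citet{HK06} for strip graphs, I expect the needed argument to parallel theirs, with the cluster$\,\bowtie\,$chordal relaxation (in place of cluster$\,\bowtie\,$interval) requiring me to recheck that chordality, not just interval representability, still forces the path structure under these degree and triangle constraints.
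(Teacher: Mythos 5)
Your forward direction is essentially the paper's argument and is correct: delete one Hamiltonian-cycle edge at $v$, take the remaining cycle edges as a Hamiltonian path (chordal), and observe that the non-cycle edges form a perfect matching (cluster graph). One small correction: this step needs only cubicity, not triangle-freeness---each vertex lies on exactly two cycle edges, so it has exactly one chord; triangle-freeness plays no role here and is needed only in the converse.

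The converse is where the genuine gap lies. You correctly identify the two key ingredients---triangle-freeness forces every clique of the cluster part to have size at most two (a matching), and a triangle-free chordal graph is ``path-like''---but you stop at the point where the actual work happens, namely ruling out that the chordal part is a disjoint union of several paths (or contains cycles). The paper closes this with a degree count that you do not supply: a triangle-free chordal graph $G_2$ has a perfect elimination ordering in which every vertex has at most one later neighbor (two later neighbors would form a triangle), so $G_2$ has at most $n-1$ edges; on the other hand, since $G$ is cubic and the cluster part is a matching, every vertex other than $v$ and $w$ has degree at least $2$ in $G_2$, and $v,w$ have degree at least $1$. The degree sum then forces $G_2$ to be exactly a single spanning path with endpoints $v$ and $w$, and adding back $\{v,w\}$ yields the Hamiltonian cycle. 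Without this count, ``essentially a disjoint union of paths'' does not give Hamiltonicity. Moreover, your plan to invoke the hypothesis at \emph{every} vertex and ``stitch'' the resulting local path-plus-matching structures into one cycle is both unnecessary and unlikely to work as described: the different vertices yield different decompositions of different graphs $G'$, and there is no obvious way to combine them. The paper's proof uses the hypothesis at a single arbitrary vertex $v$ and derives Hamiltonicity directly from that one decomposition; you should do the same.
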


\begin{proof}
``\(\Rightarrow\)'' Assume that $G$~contains a Hamiltonian cycle~$C = (c_1,c_2,\ldots, c_n)$.
Then, $\{c_i,c_{i+1}\} \in E$ for~$i \in \{1,2,\ldots n-1\}$ and~$\{c_n,c_1\}\in E$.
Without loss of generality,
let $v = c_n$.
We will prove that~$E\setminus \{c_n,c_1\}$ can be partitioned into two sets~$E_1$ and~$E_2$ such that~$(V,E_1)$ is a cluster graph and~$(V,E_2)$ is chordal.
First,
we define~$E_2 = \{\{c_i,c_{i+1}\} \mid 1 \leq i \leq n-1\}$.
The graph~$(V,E_2)$ is  a path and, hence, chordal.
Now,
observe that each vertex
has degree exactly one in~$G_1$.
Thus,
$G_1$~is a perfect matching and, thus,
a cluster graph.

``\(\Leftarrow\)''
Assume that,
for an arbitrary but fixed vertex~$v$,
there is an edge~$\{v,w\}$
such that~$G'=(V,E\setminus\{\{v,w\}\})$
is a cluster\({}\bowtie{}\)chordal graph.
We will show that~$G$ contains a Hamiltonian cycle.
Fix two graphs~$G_1=(V,E_1)$ and~${G_2=(V,E_2)}$ such that~$E_1 \cup E_2 = E$, where
$G_1$~is a cluster graph
and~$G_2$ is a chordal graph.
Since~$G$ is triangle\hyp free,
it holds that~$G_1$ and~$G_2$ are triangle\hyp free, too.
Every vertex has degree at most one in~$G_1$
since, otherwise, there would be a connected component
of size at least three in~$G_1$,
which is a contradiction to the fact
that~$G_1$ is a triangle\hyp free cluster graph.
Since~$G$ is a cubic graph,
it holds that each vertex except for~$v$ and~$w$
has degree at least two in~$G_2$.
Since~$G_2$ is chordal,
it has an \indc{1} ordering,
that is,
for each vertex~$u$,
the set of succeeding neighbors of~$u$ induce a clique.
Thus,
each vertex has at most one succeeding neighbor;
otherwise, $G_2$~would contain a triangle.
Since there is an ordering of the vertices of~\(G_2\)
such that each vertex has at most one succeeding neighbor
and only~$v$ and~$w$ can have degree one in~$G_2$,
we get that~$G_2$ is a path through all vertices
with~$v$ and~$w$ as endpoints.
Adding~$\{v,w\}$ to this path
gives a Hamiltonian cycle for~$G$.
\qed
\end{proof}

\noindent
Using \cref{lem:hamiltoniscc},
we can now easily show the following.

\begin{proposition}
\label[proposition]{cor:cluster-chordal-hard}
Recognizing cluster\({}\bowtie{}\)chordal graphs is NP-hard.
\end{proposition}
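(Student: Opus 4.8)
The plan is to give a polynomial-time Turing reduction from \HC{} on triangle-free cubic graphs, which is NP-hard, to the recognition problem for cluster\({}\bowtie{}\)chordal graphs, using \cref{lem:hamiltoniscc} as the engine. Given a triangle-free cubic graph~\(G=(V,E)\), the lemma characterizes Hamiltonicity entirely in terms of a recognition oracle: \(G\) is Hamiltonian if and only if for every vertex~\(v\) there exists an edge~\(e\) incident to~\(v\) such that \(G-e\) is cluster\({}\bowtie{}\)chordal.

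First I would fix an arbitrary vertex~\(v\in V\). Since \(G\)~is cubic, \(v\)~has exactly three incident edges. For each of these (at most three) edges~\(e\), I delete~\(e\) from~\(G\) and query the hypothetical recognition algorithm on~\(G-e\) to test whether it is cluster\({}\bowtie{}\)chordal. If at least one of these three queries answers ``yes,'' the reduction reports that \(G\) is Hamiltonian; otherwise it reports ``no.'' Correctness is immediate from \cref{lem:hamiltoniscc}: the ``\(\Leftarrow\)'' direction of the lemma guarantees that a single successful deletion at~\(v\) already yields a Hamiltonian cycle, so one ``yes'' suffices; the ``\(\Rightarrow\)'' direction guarantees that if \(G\) is Hamiltonian then such an edge exists at \emph{every} vertex, in particular at the fixed~\(v\), so if all three queries fail then \(G\) cannot be Hamiltonian. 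The reduction makes only three oracle calls and performs a constant amount of work per call beyond constructing~\(G-e\), so it runs in polynomial time; hence a polynomial-time recognition algorithm for cluster\({}\bowtie{}\)chordal graphs would solve \HC{} on triangle-free cubic graphs in polynomial time, establishing NP-hardness under Turing reductions.

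The main subtlety worth flagging is that this is a Turing (oracle) reduction rather than a many-one reduction, which is why the statement of \cref{cor:cluster-chordal-hard} speaks of NP-hardness without claiming NP-completeness via a single-query reduction; the ``for each vertex, there exists an edge'' quantifier structure in \cref{lem:hamiltoniscc} is exactly what forces multiple queries, and this matches the parenthetical remark in the text that hardness holds under Turing reductions. One should also note, as a cleanliness check, that \HC{} remains NP-hard precisely on triangle-free cubic graphs (this is the input class \cref{lem:hamiltoniscc} requires), so the reduction stays within the regime where the lemma applies. With the lemma in hand, no further graph-theoretic work is needed — the proof is a short wrapper invoking the oracle at one fixed vertex.
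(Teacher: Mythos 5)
Your proposal is correct and follows essentially the same route as the paper: fix one vertex of the cubic graph, delete each of its three incident edges in turn, query the hypothetical recognition algorithm on each resulting graph, and answer ``Hamiltonian'' iff some query succeeds, with correctness supplied by \cref{lem:hamiltoniscc}. Your observation that the ``$\Leftarrow$'' direction only needs the edge to exist at a single fixed vertex matches how the lemma is actually proved and used in the paper.
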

\begin{proof}
  Assume that cluster\({}\bowtie{}\)chordal graphs
  are recognizable in polynomial time.
  Assuming this, we derive a polynomial\hyp time algorithm
  for the NP-hard problem of checking whether a triangle\hyp free cubic graph~\(G\)
  is Hamiltonian \citep{WS84}, which implies P\({}={}\)NP.

  Let \(v\)~be any vertex of~\(G\).
  Since~$G$ is cubic,
  there are three edges incident to~$v$ in~$G$.
  If $G$~is Hamiltonian,
  then,
  by \cref{lem:hamiltoniscc},
  deleting one out of the three edges will
  turn~$G$ into a cluster\({}\bowtie{}\)chordal graph.
  Our algorithm simply tries out all three possibilities and,
  for each of them,
  checks whether we get a cluster\({}\bowtie{}\)chordal
  graph in polynomial time.
  \qed
\end{proof}

\paragraph{2\hyp simplicial graphs.}
These graphs form a superclass of cluster $\bowtie$ chordal 
graphs (\cref{prop:2-simplicial})
and, by definition, are inductive \ind{2}.

They can be easily recognized in 
polynomial time by the following algorithm:
as long as possible,
find and delete a vertex whose
neighborhood can be covered by two cliques,
which is equivalent to checking
whether the complement of the neighborhood is 2\hyp colorable.
The input graph is 2\hyp simplicial
if and only if this process terminates
with the empty graph as result.
We are not aware of a direct 
scheduling application.

\paragraph{3-minoes.}
In 3-minoes every vertex is contained in at most three maximal cliques.
It is straightforward to see that they form a subclass of $K_{1,4}$-free
graphs; they can be recognized in polynomial time \citep{MT03}.
We are not aware of a direct
scheduling application.

\paragraph{Interval\({}\bowtie{}\)interval graphs.}
This graph class is also known as 2-track interval graphs
and as 2-union graphs.
They trivially contain interval graphs.
The corresponding recognition
problem is NP-hard \citep{GW95}.
They have strong scheduling applications
in industrial %
steel manufacturing \citep{HKML11}.

\paragraph{Unit interval\({}\bowtie{}\)unit interval graphs.}
These are special cases of 2-track interval graphs and are also called 
2-track unit interval graphs.  They are contained in the class 
of inductive \ind{3} graphs \citep{YB12} and in the 
class of $K_{1,5}$-free graphs
because unit interval graphs are $K_{1,3}$-free \citep{Weg61}.
Their recognition problem is NP-hard \citep{Jia13}.
We are not aware of a direct
scheduling application.

\section{W[1]-hardness on 2-simplicial 3-minoes}
\label{sec:ishard}
\IS{} 
parameterized by the solution size~\(\ell\) is W[1]-hard
on inductive \ind{3} graphs;
\citet{Jia10} showed this
on unit interval\({}\bowtie{}\)unit interval graphs,
which are a subclass of inductive \ind{3} graphs \citep{YB12}.

In contrast, the more general
\cCS{} problem parameterized by the solution size~\(\ell\) is \fp{} tractable on inductive \ind{1} (that is, chordal) graphs
\citep{MPR+13}.
In \cref{sec:iseasy},
we will generalize this tractability result
to
cluster\({}\bowtie{}\)chordal graphs,
a non\hyp chordal proper subclass
of inductive 2\hyp independent graphs (cf.\ \cref{fig:gc-refined}).

Strengthening the result by \cite{Jia10}, we show that \McCS{} in
inductive \ind{2} graphs is W[1]-hard.
More precisely,
we show that
already \IS{} 
parameterized by the solution size~\(\ell\)
is W[1]-hard
even on 2\hyp simplicial 3-minoes
(see \cref{def:simplicial,def:mino}).
Since these are \(K_{1,4}\)-free
(see \cref{fig:MIS}),
our result also contrasts the fact
that \MIS{} is solvable in \(O(n^3)\)~time on
\(K_{1,3}\)-free graphs \citep{FOS11}.

\begin{theorem}\label{thm:ishard}
\IS{} parameterized by solution size~$\ell$ is W[1]-hard on 2\hyp simplicial 3-minoes.
\end{theorem}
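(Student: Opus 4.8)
The plan is to give a parameterized reduction from a W[1]-hard problem---the natural candidate is \MCIS{} (or \MCC{}), whose multicolored structure is well suited to producing graphs with bounded local clique cover and bounded maximal-clique membership. First I would take an instance of \MCC{} (\textsc{Multicolored Clique}) with a $k$-partition $V_1,\dots,V_k$ of the vertex set, where the task is to find a clique using exactly one vertex from each part; this is the standard source of W[1]-hardness when one wants fine control over the structure of the constructed graph. The target solution size will be $\ell = \binom{k}{2}$ or $\ell = k + \binom{k}{2}$, built from selection gadgets that force one choice per part and one choice per pair of parts.

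\textbf{Construction.} The key idea is to represent the \emph{edges} of the \MCC{} instance as vertices of the new graph, organized so that the two competing constraints---(i) consistency (an edge-vertex chosen for the pair $\{i,j\}$ must agree on its endpoints with the edge-vertices chosen for other pairs sharing an index) and (ii) the selection of exactly one edge per color-pair---are enforced by adjacency. For each unordered pair $\{i,j\}$ of colors I would introduce one \emph{selection clique} whose vertices correspond to the edges of the original graph between $V_i$ and $V_j$; any independent set picks at most one vertex from each such clique, which enforces (ii). To enforce consistency (i), I would add edges between edge-vertices of different pairs that disagree on a shared endpoint, so that two chosen edge-vertices can coexist in the independent set only if they are consistent. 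An independent set of size $\binom{k}{2}$ then corresponds exactly to a multicolored clique.

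\textbf{Maintaining 2-simpliciality and the 3-mino property.} The hard part will be arranging the consistency edges so that the resulting graph is simultaneously 2-simplicial and a 3-mino, since these are strong local-sparseness conditions and a naive consistency gadget (adding an edge between every pair of inconsistent edge-vertices) would create large independent sets in neighborhoods and place vertices in many maximal cliques. I expect to need an auxiliary layer---for each color $i$ and each vertex $v\in V_i$, a \emph{propagation} gadget (a path or small clique chain) that threads through all the selection cliques involving color $i$, so that consistency is enforced by passing through a bounded number of cliques rather than by a dense inconsistency relation. Each original edge-vertex would then lie in its selection clique plus a bounded number (at most two, one per endpoint color) of propagation cliques, giving membership in at most three maximal cliques (the 3-mino bound) and a neighborhood coverable by at most two cliques in the relevant elimination order (2-simpliciality). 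I would verify the 2-simplicial ordering by exhibiting an explicit elimination sequence---typically peeling off the propagation-path vertices first, then the selection-clique vertices---and checking that at each step the remaining right-neighborhood splits into two cliques. Getting both properties to hold simultaneously with a clean gadget, and proving that the solution-size parameter $\ell$ depends only on $k$ (so the reduction is genuinely parameterized), is where the technical care will concentrate; the correctness equivalence between size-$\ell$ independent sets and multicolored cliques should then follow routinely from the gadget semantics.
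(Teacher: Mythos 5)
Your overall strategy coincides with the paper's: a parameterized reduction from \MCC{}, with one clique of ``edge vertices'' per color pair $\{i,j\}$ (the paper's verification gadgets $E_{i<j}$) and, per color class, a consistency\hyp enforcing structure built from cliques so that every vertex lies in at most three maximal cliques. You also correctly diagnose that the naive ``join all inconsistent edge vertices'' reduction destroys both target properties and that the entire difficulty sits in the propagation gadget. So the route is the right one, and the count of maximal cliques containing an edge vertex (its own selection clique plus one clique per endpoint color) matches the paper's.

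The gap is that the propagation gadget---the heart of the proof---is never actually constructed, and ``a path or small clique chain that threads through the selection cliques'' does not yet do the job. What the paper uses (\cref{constr:ishard}) is, for each color class $V_i$, a \emph{cyclic} arrangement of $k$ cliques $U_{i\to 1},\dots,U_{i\to k}$, each containing one vertex per element of $V_i$, with an asymmetric ``staircase'' adjacency between consecutive cliques of the cycle: $u_{i\to j}^{(p)}$ is adjacent to $u_{i\to l}^{(q)}$ exactly for $q<p$, where $l$ is the cyclic successor of $j$. This asymmetry, propagated around the cycle, is what forces any independent set hitting all $k$ cliques to pick the same index in each (one gets $p\le q\le\dots\le p$, hence equality), while each vertex's neighborhood is still covered by three cliques (\cref{lem:tcover}). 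A plain path per original vertex gives you neither the forcing nor, by itself, the three\hyp clique bound. Second, your plan for 2\hyp simpliciality (``peel propagation vertices first, then selection cliques'') is insufficient as stated: 2\hyp simpliciality requires \emph{every induced subgraph} to contain a vertex whose remaining neighborhood splits into two cliques, so you cannot prescribe a global peeling order in advance. The paper's argument (\cref{lem:isclass}) instead shows that in any induced subgraph either some ``diagonal'' vertex $u_{i\to i}^{(p)}$ survives (these have no verification\hyp gadget neighbors, so two cliques suffice), or the cycle of cliques is broken somewhere, and then the maximum\hyp index surviving vertex of a clique whose successor clique is absent loses one of its three covering cliques. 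Supplying a gadget with these two properties is the actual content of the theorem; the rest of your sketch is indeed routine.
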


\noindent
It may be tempting to prove \cref{thm:ishard}
using a parameterized reduction
from \IS{} on graphs of small degree; however
such efforts must be in vein since
\IS{} parameterized by solution size is \fp{} tractable on graphs
even with logarithmic degree
via a simple search tree algorithm.

To prove \cref{thm:ishard},
we use a parameterized reduction
from the \MCC{} problem,
which is W[1]-hard with respect to~\(k\) \citep{FHRV09}.

\problemdef{\MCC{}}
{A graph~\(G\) whose vertex set is partitioned into~$k$ independent sets~\(\Vrt{1}\uplus \Vrt{2}\uplus\dots\uplus \Vrt{k}\).}
{Does \(G\)~contain a complete subgraph of order~\(k\)?}
\begin{figure*}
  \centering
  \includegraphics{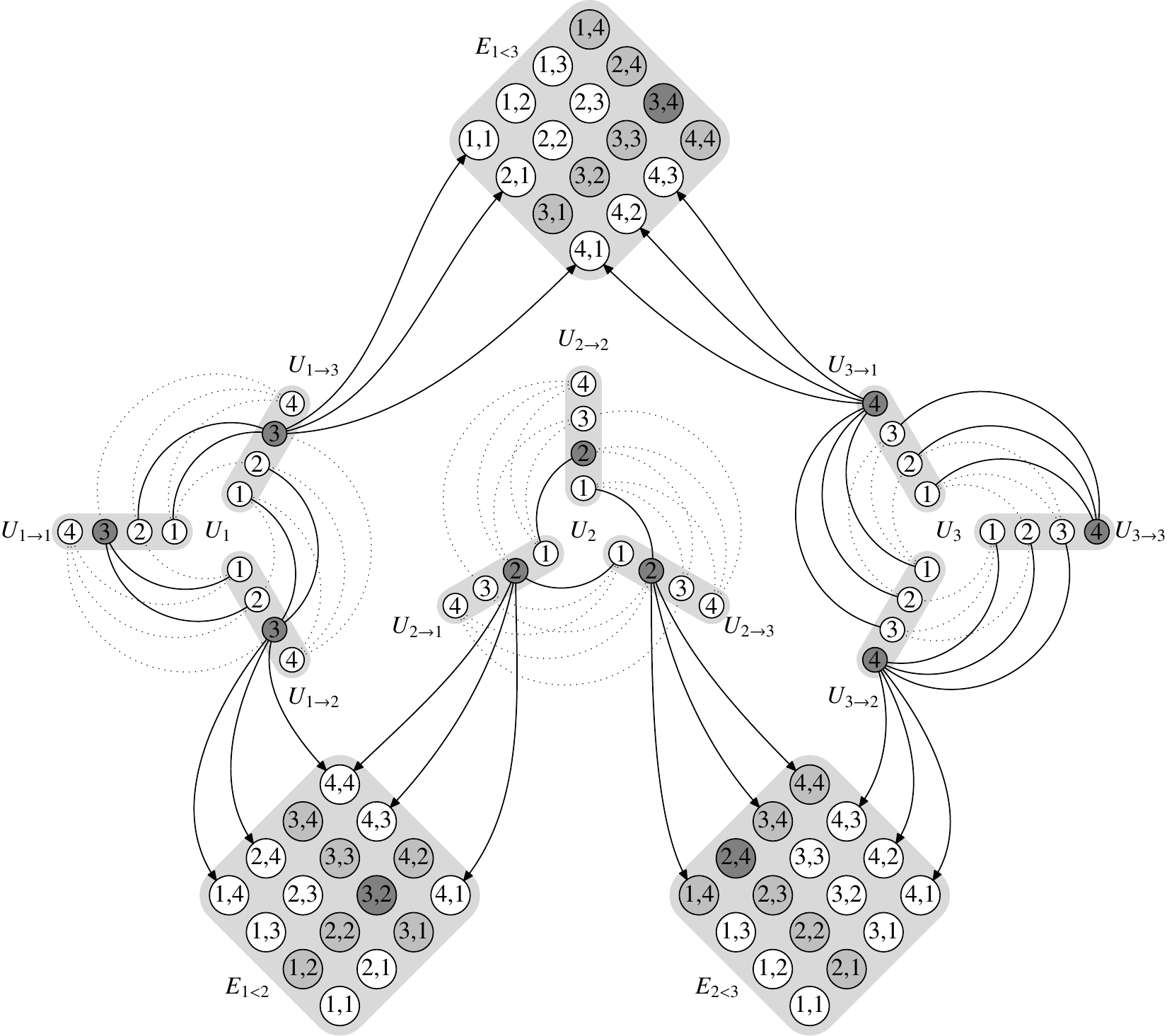}
  \caption{A \IS{} instance
    created by \cref{constr:ishard}
    from a \MCC{} instance
    with three color classes~\(\Vrt1,\Vrt2,\) and~\(\Vrt3\)
    of four vertices each,
    where all edges between vertices in~\(\Vrt i\)
    and~\(\Vrt j\)
    for \(i\ne j\) exist.
    Greyshaded areas are cliques.
    Dotted edges are ordinary edges and
    are dotted merely to unclutter the figure.
    The dark grey vertices form an independent set
    of maximum size (one vertex of each clique).
    Connections from vertex selection gadgets
    to verification gadgets~\(\Ert{i<j}\)
    are only shown for the dark grey vertices
    in \(\Urt{i\to j}\) and \(\Urt{j\to i}\):
    a directed arc means
    that the tail of the arc
    has edges to all vertices in the row or
    column pointed to by the head.
    Light gray vertices in the verification gadgets~\(\Ert{i<j}\)
    are those blocked by only one arc
    pointing at~\(\Ert{i<j}\).
    The labels of the vertices indicate their
    superscript in the corresponding clique
    used in \cref{constr:ishard}.}
  \label{fig:ishard}
\end{figure*}

\noindent
We will also refer to the sets \(V_i\)
for \(i\in\{1,2,\ldots,k\}\)
as \emph{color classes}.
We now describe our parameterized reduction
from \MCC{} to \IS{},
which is illustrated in \cref{fig:ishard}.
Subsequently, we prove its correctness
and that it creates 2\hyp simplicial 3-minoes.

\begin{construction}\label[construction]{constr:ishard}
  Given a \MCC{} instance~\(G=(V_1\uplus V_2 \uplus\dots\uplus V_k,E)\),
  we create an instance~\(G'\) of \IS{}
  that has a solution
  of size~\(\ell:=k^2+{k\choose 2}\)
  if and only if \(G\)~has a clique of size~\(k\).
  
  For each color class~\(\Vrt i=\{\vrt{i}{(1)},
  \vrt{i}{(2)},\ldots,
  \allowbreak
  \vrt{i}{(n_i)}\}\) of~\(G\),
  graph~\(G'\)
  contains a \emph{vertex selection gadget}~\(\Urt i\),
  which
  consists of \(k\)~cliques
  \(\Urt{i\to j}\),
  where \(j\in\{1,2,\ldots,k\}\).
  For each \(j\in\{1,2,\ldots,\allowbreak k\}\),
  clique~\(\Urt{i\to j}\)
  will be used to connect
  gadget~\(\Urt i\) to gadget~\(\Urt j\)
  and
  consists of
  newly introduced vertices~\(\{\urt{i\to j}{(1)},
  \allowbreak \urt{i\to j}{(2)},\ldots,
  \allowbreak \urt{i\to j}{(n_i)}\}\),
  each vertex~\(\urt{i\to j}{(p)}\)
  with \(p\in\{2,3,\ldots,n_i\}\)
  of which
  is adjacent
  to each vertex~\(\urt{i\to l}{(q)}\)
  with \(q\in\{1,2,\ldots,p-1\}\)
  and \(l = (j\bmod k) + 1\).

  Furthermore,
  for each two color
  classes~\(\Vrt i=\{\vrt{i}{(1)},\vrt{i}{(2)},
  \allowbreak\ldots,
  \allowbreak \vrt{i}{(n_i)}\}\)
  and
  \(\Vrt j=\{\vrt{i}{(1)},\vrt{j}{(2)},\ldots,
  \allowbreak \vrt{j}{(n_j)}\}\) of~\(G\)
  such that
  \(1\leq i< j\leq k\),
  graph~\(G'\) contains a
  \emph{verification gadget}~\(\Ert{i<j}\),
  which
  is a clique
  on the vertices~\(\ert{i< j}{p,q}\),
  newly
  introduced for each
  edge \(\{\vrt{i}{(p)},\vrt{j}{(q)}\}\in E\) of~\(G\).

  We connect the gadgets so that
  choosing a vertex~\(\ert{i<j}{p,q}\)
  of a verification gadget~\(\Ert{i<j}\)
  into an independent set
  enforces that \(\vrt{i}{(p)}\) of \(\Vrt{i}\)
  and \(\vrt{j}{(q)}\) of \(\Vrt{j}\)
  are also in the independent set,
  where \(i\)~is the smaller color index
  (which is reflected in the naming convention).
  To this end,
  for each edge~\(\{\vrt{i}{(p)},\vrt{j}{(q)}\}\in E\)
  with \(1\leq i<j\leq k\)
  of~\(G\),
  vertex~\(\urt{i\to j}{(p)}\) is adjacent
  to all vertices~\(\ert{i<j}{p',q'}\)
  such that \(p\ne p'\)
  and 
  \(\urt{j\to i}{(q)}\) is adjacent
  to all vertices~\(\ert{i<j}{p',q'}\)
  such that \(q\ne q'\).
  \qed
\end{construction}

\noindent
We now prove the correctness of \cref{constr:ishard}.
Thereafter,
it remains to show that it creates
2\hyp simplicial 3-minoes.

\begin{lemma}\label[lemma]{lem:iscor}
  A \MCC{} instance~\(G=(\Vrt1\uplus\Vrt2\uplus\dots\uplus \Vrt k,E)\)
  has a clique of order~\(k\)
  if and only if the graph~\(G'\) created by
  \cref{constr:ishard}
  has an independent set
  of size~\(\ell:=k^2+{k\choose 2}\).
\end{lemma}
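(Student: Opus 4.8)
The plan is to prove the two directions of the biconditional separately, establishing a careful correspondence between cliques of order $k$ in $G$ and independent sets of size $\ell = k^2 + \binom{k}{2}$ in $G'$. First I would analyze the structure of $G'$ to understand the size budget. The graph $G'$ consists of $k$ vertex selection gadgets $\Urt{i}$, each partitioned into $k$ cliques $\Urt{i\to j}$, giving $k^2$ cliques in total, plus $\binom{k}{2}$ verification gadgets $\Ert{i<j}$, each of which is itself a clique. Since an independent set can contain at most one vertex from each clique, the target value $\ell = k^2 + \binom{k}{2}$ is exactly the number of cliques, so any independent set of size $\ell$ must pick \emph{exactly} one vertex from every single clique. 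This observation is the linchpin: it turns the existence of a large independent set into a tight system of simultaneous choices, one per clique.

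For the forward direction (``$\Rightarrow$''), suppose $G$ has a clique with one vertex $\vrt{i}{(p_i)}$ from each color class $\Vrt{i}$. I would build the independent set by selecting, for each ordered pair $(i,j)$, the vertex $\urt{i\to j}{(p_i)}$ from clique $\Urt{i\to j}$, and for each pair $i<j$, the vertex $\ert{i<j}{p_i,p_j}$ from the verification gadget $\Ert{i<j}$ (this vertex exists precisely because $\{\vrt{i}{(p_i)},\vrt{j}{(p_j)}\}$ is an edge of the clique). Then I must verify independence across the three types of edges introduced in \cref{constr:ishard}: the intra-gadget ``staircase'' edges linking $\urt{i\to j}{(p)}$ to $\urt{i\to l}{(q)}$ with $l=(j\bmod k)+1$ and $q<p$, and the gadget-to-verification edges. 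The staircase edges are avoided because all selected indices within gadget $\Urt{i}$ equal the same value $p_i$, so the strict inequality $q<p$ is never met; the verification edges are avoided because the selected verification vertex uses exactly the indices $p_i,p_j$ that the connecting vertices $\urt{i\to j}{(p_i)}$ and $\urt{j\to i}{(p_j)}$ are \emph{not} adjacent to.

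For the reverse direction (``$\Leftarrow$''), I start from an independent set $S$ of size $\ell$, which by the counting argument picks exactly one vertex per clique. The key is to show these choices are ``consistent.'' Within a single selection gadget $\Urt{i}$, the staircase edges should force all $k$ chosen vertices $\urt{i\to j}{(\cdot)}$ to carry the \emph{same} upper index; I would argue this by examining the cyclic adjacency structure ($l=(j\bmod k)+1$ chains the cliques $\Urt{i\to 1},\Urt{i\to 2},\ldots$ around a cycle) and showing that any mismatch in indices creates a forbidden edge. Let $p_i$ denote this common index, defining a candidate vertex $\vrt{i}{(p_i)}$ per color class. Then the verification vertex chosen from $\Ert{i<j}$, say $\ert{i<j}{p',q'}$, must be non-adjacent to both $\urt{i\to j}{(p_i)}$ and $\urt{j\to i}{(p_j)}$, which by the connection rule forces $p'=p_i$ and $q'=p_j$; since $\ert{i<j}{p_i,p_j}$ exists only when $\{\vrt{i}{(p_i)},\vrt{j}{(p_j)}\}\in E$, every pair of selected color-class vertices is adjacent in $G$, yielding the desired $k$-clique.

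The main obstacle I anticipate is the reverse direction's consistency argument, specifically proving that the staircase edges genuinely force a uniform index within each gadget $\Urt{i}$. The cyclic indexing $l=(j\bmod k)+1$ and the asymmetric adjacency condition ($\urt{i\to j}{(p)}$ adjacent to $\urt{i\to l}{(q)}$ only when $q<p$) must be unwound carefully: I would track how a strictly decreasing or increasing index across the cycle of cliques eventually contradicts itself upon returning to the starting clique, forcing equality throughout. Getting this cyclic-monotonicity argument airtight, while keeping the index bookkeeping correct, is where the real work lies; the forward direction and the counting argument are comparatively routine once the gadget structure is understood.
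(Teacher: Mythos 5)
Your proposal is correct and follows essentially the same route as the paper's proof: the tight counting argument (one vertex per clique forced by $\ell$ equalling the number of cliques), the forward construction selecting $\urt{i\to j}{(p_i)}$ and $\ert{i<j}{p_i,p_j}$, and the reverse direction's cyclic-monotonicity argument around the chain $l=(j\bmod k)+1$ forcing a uniform index per selection gadget, followed by the verification-gadget forcing of $p'=p_i$ and $q'=p_j$. The step you flag as the main obstacle is handled in the paper exactly as you sketch it, by iterating the inequality $q'\geq p$ around the cycle until it closes up and forces equality.
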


\begin{proof}
  \((\Rightarrow)\)
  Let \(S\)~be a clique of order~\(k\) in~\(G\).
  It contains exactly one vertex of each color class.
  We describe an independent set~\(S'\)
  of size~\(\ell\)
  for~\(G'\).
  To this end,
  denote each color class~\(\Vrt i\) for
  \(i\in\{1,2,\ldots,k\}\) of~\(G\)
  as
  \(\Vrt i=\{\vrt{i}{(1)},\vrt{i}{(2)},\ldots,\allowbreak
  \vrt{i}{(n_i)}\}\).
  
  For each vertex~\(\vrt{i}{(p)}\)
  in~\(S\),
  set~\(S'\) contains
  the \(k\)~vertices \(\urt{i\to j}{(p)}\)
  with \(j\in\{1,2,\ldots,k\}\)
  of the vertex selection gadget~\(\Urt i\).
  Thus, \(S'\) contains
  exactly one vertex of
  each clique~\(\Urt{i\to j}\)
  and these vertices are pairwise nonadjacent:
  there are neither edges
  between vertices~\(\urt{i\to j}{(p)}\)
  and~\(\urt{i\to l}{(p)}\) for any \(j\ne l\) with
  \(\{j,l\}\subseteq\{1,2,\ldots,k\}\)
  nor are there edges between
  distinct vertex selection gadgets~\(\Urt i\)
  and~\(\Urt j\)
  in~\(G'\).

  For each edge~\(\{\vrt{i}{(p)},\vrt{j}{(q)}\}\)
  with \(1\leq i<j\leq k\)
  of~\(S\),
  set~\(S'\) contains
  the vertex~\(\ert{i<j}{p,q}\) of
  the verification gadget~\(E_{i<j}\).
  Since there are no edges
  between verification gadgets,
  these vertices are pairwise nonadjacent.
  Moreover,
  note that \(\ert{i<j}{p,q}\)
  has neither edges to \(\urt{i\to j}{(p)}\)
  nor to \(\urt{j\to i}{(q)}\),
  which are the only vertices
  of~\(\Urt{i\to j}\) and~\(\Urt{j\to i}\) in~\(S'\),
  nor has it edges to any vertex in~\(\Urt{i'\to j'}\)
  for \(\{i',j'\}\ne\{i,j\}\).
  Thus, \(S'\)~is an independent set.

  Finally,
  \(S'\)~has size
  \(\ell=k^2+{k\choose 2}\):
  it contains \(k\)~vertices
  for each of \(k\)~color classes of~\(G\)
  and one vertex
  for each edge of
  the clique~\(S\).

  \((\Leftarrow)\)
  Let \(S'\)~be an independent set
  of size~\(\ell\)
  for~\(G'\).
  We describe a clique
  of order~\(k\) in~\(G\).
  Since each vertex of~\(G'\)
  is in one of the
  \(\ell\)~pairwise vertex\hyp disjoint
  cliques~\(\Urt{i\to j}\)
  with \(\{i,j\}\subseteq\{1,2,\ldots,k\}\)
  and~\(\Ert{i<j}\)
  with \(1\leq i<j\leq k\),
  \(S'\)~contains exactly one vertex
  of each of them.
  Let \(S\)~be the set
  of vertices~\(\vrt{i}{(p)}\) of~\(G\)
  for \(i\in\{1,2,\ldots,k\}\)
  such that~\(\urt{i\to i}{(p)}\) of~\(\Urt{i\to i}\)
  is in~\(S'\).
  We will prove that the vertices in~\(S\)
  are pairwise adjacent in~\(G\).
  
  To this end,
  we first prove that \(p=q\)
  for any two vertices~\(\urt{i\to j}{(p)}\)
  and~\(\urt{i\to l}{(q)}\) in~\(S'\),
  where \(\{i,j,l\}\subseteq\{1,2,\ldots,k\}\).
  To this end,
  note that \(G'\)~contains
  edges
  from
  vertex~\(\urt{i\to j}{(p)}\in S'\)
  to each vertex~\(\urt{i\to l'}{(q')}\)
  with \(q'\in\{1,2,\ldots,p-1\}\)
  and \(l' = (j\bmod k) + 1\).
  Thus,
  for the vertex~\(\urt{i\to l'}{(q')}\in S'\),
  we have \(q'\geq p\).
  Iterating the argument,
  we get \(q\geq q'\geq p\), and,
  iterating further,
  \(p\geq q\geq q'\geq p\),
  and thus \(p=q\).

  We now prove that
  two arbitrary vertices~\(\vrt{i}{(p)}\)
  and~\(\vrt{j}{(q)}\) with \(1\leq i<j\leq k\) in~\(S\)
  are adjacent in~\(G\).
  By choice of~\(S\),
  we have that
  \(\urt{i\to i}{(p)}\) and \(\urt{j\to j}{(q)}\)
  are in~\(S'\).
  As shown in the previous paragraph,
  we thus also have that
  \(\urt{i\to j}{(p)}\) and \(\urt{j\to i}{(q)}\)
  are in~\(S'\).
  Since \(\urt{i\to j}{(p)}\) is adjacent
  to all vertices~\(\ert{i<j}{p',q'}\) of~\(\Ert{i,j}\)
  such that \(p\ne p'\)
  and
  \(\urt{j\to i}{(q)}\) is adjacent
  to all vertices~\(\urt{i<j}{p',q'}\)
  such that \(q\ne q'\),
  we get that \(S'\)~contains
  vertex~\(\ert{i<j}{p,q}\) of~\(\Ert{i<j}\).
  The existence of this vertex in~\(\Ert{i<j}\)
  shows that
  there is an edge~\(\{\vrt{i}{(p)},\vrt{j}{(q)}\}\in E\)
  in~\(G\).
  \qed
\end{proof}

\noindent
We will use the following lemma
to show that
\cref{constr:ishard} generates 2\hyp inductive 3-minoes
by showing that,
in any induced subgraph of the graphs
generated by \cref{constr:ishard},
we can find a vertex for which at
least one of the three sets
in the following lemma
are empty or contain only one vertex:

\begin{lemma}\label[lemma]{lem:tcover}
  Let \(\urt{i\to j}{(p)}\)
  with \(1\leq i\leq j\leq k\) be a vertex in
  the graph created by \cref{constr:ishard}
  from an instance~\(G=(\Vrt1\uplus\Vrt2\uplus \dots\uplus \Vrt k,E)\)
  of \MCC{}.
  
  Then,
  the neighborhood of \(\urt{i\to j}{(p)}\) can be
  covered by at most three cliques, consisting of
  \begin{enumerate}[i)]
  \item\label{set1} \(\{\urt{i\to j}{(q)}\mid q\geq p\}\cup\{\urt{i\to l}{(q)}\mid q< p, l=(j\bmod k)+1\}\),
    
  \item\label{set2} \(\{\urt{i\to j}{(q)}\mid q\leq p\}\cup \{\urt{i\to l}{(q)}\mid q> p, j=(l\bmod k)+1\}\), and
  \item\label{set3} 
    of \(\urt{i\to j}{(p)}\)
    and its neighbors in either~\(E_{i<j}\)
    or~\(E_{j<i}\),
    if \(i\ne j\).
  \end{enumerate}
\end{lemma}

\begin{proof}
  We first show that each
  of the sets \eqref{set1}--\eqref{set3} is a clique.
  This easily follows for \eqref{set3}
  since
  \(E_{i<j}\) for~\(1\leq i<j\leq k\)
  is a clique by \cref{constr:ishard}.

  We now prove that
  \eqref{set1} and \eqref{set2}
  are cliques.
  By \cref{constr:ishard},
  each vertex~\(\urt{i\to j'}{(p')}\)
  is adjacent
  to each vertex~\(\urt{i\to l'}{(q')}\)
  with \(q'\in\{1,2,\ldots,p'-1\}\)
  and \(l'=(j'\mod k)+1\).
  Herein,
  intuitively,
  we can think of~\(l'\) as the ``successor''
  to~\(j'\) in the cycle
  \((1,2,\ldots,k,1)\).
  By setting \(p':=p\) and \(j':=j\),
  we immediately get that \eqref{set1} is a clique.
  By choosing \(p':=p+1\) and
  \(l':=j\),
  we get that \eqref{set2} is a clique,
  since in \eqref{set2}
  the left set of the union now
  plays the role of the ``successor''
  of  the right set.

  It remains to prove that
  the sets \eqref{set1}--\eqref{set3}
  cover the whole neighborhood of~\(\urt{i\to j}{(p)}\).
  Vertex~\(\urt{i\to j}{(p)}\) has neighbors
  in all
  sets
  \(\Urt{i\to l}\)
  such that \(l=j\),
  \(l = (j\bmod k) + 1\),
  or \(j = (l\bmod k) + 1\).
  If \(i\ne j\),
  then it might also have neighbors in
  either~\(E_{i<j}\)
  or~\(E_{j<i}\).
  Thus,
  the union of the sets
  in \eqref{set1}--\eqref{set3}
  covers all
  neighbors in~\(E_{i<j}\),
  \(E_{j<i}\),
  and~\(\Urt{i\to j}\).
  Moreover,
  set~\eqref{set1} covers
  all neighbors in~\(\Urt{i\to l}\)
  for \(l = (j\bmod k) + 1\)
  by \cref{constr:ishard}.
  Finally,
  if set~\eqref{set2} did not cover
  all neighbors in~\(\Urt{i\to l}\)
  with \(j = (l\bmod k) + 1\),
  then this would mean that
  \(\urt{i\to j}{(p)}\)
  had a neighbor~\(\urt{i\to l}{(q)}\)
  with \(q\leq p\).
  This is impossible since,
  by \cref{constr:ishard},
  \(\urt{i\to l}{(q)}\) is
  adjacent to vertices~\(\urt{i\to j}{(p)}\)
  only for \(p<q\).  
  \qed
\end{proof}

\noindent
Using \cref{lem:tcover},
the following lemma is easy to prove
and finishes
the proof of \cref{thm:ishard}.

\begin{lemma}\label[lemma]{lem:isclass}
  \cref{constr:ishard} creates 2\hyp simplicial 3-minoes.
\end{lemma}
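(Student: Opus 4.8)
The plan is to establish two separate properties of the graph $G'$ from \cref{constr:ishard}: that it is 2\hyp simplicial and that it is a 3-mino. Both will follow almost immediately from the structural analysis in \cref{lem:tcover}, combined with the symmetric analysis of the verification-gadget vertices, so the main work has already been done.

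For the 2\hyp simplicial property, I would invoke \cref{def:simplicial} and exhibit a destruction ordering: I claim that any induced subgraph of $G'$ contains a vertex whose closed neighborhood (restricted to later vertices) can be covered by two cliques, and repeatedly deleting such a vertex yields the desired ordering. The key observation is that \cref{lem:tcover} gives, for each vertex $\urt{i\to j}{(p)}$, a cover of its \emph{entire} neighborhood by three cliques, the third of which consists only of neighbors in the verification gadget $E_{i<j}$ or $E_{j<i}$. So I would argue that in any induced subgraph one can always find a vertex for which one of these three cliques is empty (or contributes nothing). A natural candidate is to pick, among all surviving vertices, one that is last in the natural cyclic/indexed order within its selection gadget: such a vertex has no surviving "predecessor" neighbors, which kills one of sets \eqref{set1} or \eqref{set2}. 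Alternatively, a surviving verification-gadget vertex $\ert{i<j}{p,q}$ has its neighborhood split between the clique $E_{i<j}$ and at most two selection-gadget vertices, so by an analogous argument one of the contributing cliques can be made to vanish; I would treat the verification vertices by a symmetric version of \cref{lem:tcover}. The forward-looking hook for the paper's actual proof is that it suffices to show every induced subgraph has a vertex whose neighborhood is coverable by two cliques.

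For the 3-mino property, I would appeal to \cref{def:mino} and bound the number of maximal cliques containing each vertex by three. Here \cref{lem:tcover} is even more directly useful: since the entire neighborhood of each selection-gadget vertex $\urt{i\to j}{(p)}$ is covered by the three cliques of \cref{lem:tcover}, every maximal clique through that vertex must lie inside one of these three cliques (a maximal clique containing $v$ is contained in $N[v]$, hence in the union of the covering cliques, and since each covering set is itself a clique, the maximal clique refines into at most one per covering set). Thus $\urt{i\to j}{(p)}$ lies in at most three maximal cliques. For the verification-gadget vertices $\ert{i<j}{p,q}$ I would again use the symmetric cover: such a vertex lies in $E_{i<j}$ and is adjacent to at most the two vertices $\urt{i\to j}{(p)}$ and $\urt{j\to i}{(q)}$ outside it, so its neighborhood is covered by a bounded number of cliques and it lies in at most three maximal cliques as well.

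The main obstacle I anticipate is the bookkeeping for the verification-gadget vertices, since \cref{lem:tcover} is stated only for the selection-gadget vertices $\urt{i\to j}{(p)}$. I would need either to prove a brief analogous cover lemma for the $\ert{i<j}{p,q}$ vertices or to verify directly from \cref{constr:ishard} that each such vertex has neighbors only inside its own clique $\Ert{i<j}$ and among a constant number of selection-gadget vertices. A second, more subtle point is ensuring that the "find a deletable vertex" argument for 2\hyp simpliciality is genuinely robust under taking induced subgraphs: deleting vertices can only shrink neighborhoods, so any clique cover of a neighborhood in $G'$ restricts to a clique cover in the subgraph, which makes the induction go through, but I would want to state this monotonicity explicitly rather than leave it implicit.
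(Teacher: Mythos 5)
Your 3-mino argument is essentially the paper's: the neighborhood of each selection\hyp gadget vertex is covered by the three cliques of \cref{lem:tcover}, and each verification\hyp gadget vertex has all its neighbors in the three cliques \(\Ert{i<j}\), \(\Urt{i\to j}\), and \(\Urt{j\to i}\), so every vertex lies in at most three maximal cliques. One factual slip: \(\ert{i<j}{p,q}\) is \emph{not} adjacent to just the two vertices \(\urt{i\to j}{(p)}\) and \(\urt{j\to i}{(q)}\) --- by \cref{constr:ishard} it is adjacent to all of \(\Urt{i\to j}\) and \(\Urt{j\to i}\) \emph{except} those two. This does not hurt the 3-mino claim (the neighbors still lie in two cliques), but it matters below.

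The 2\hyp simplicial part has a genuine gap. Your candidate --- ``a vertex last in the natural cyclic/indexed order within its selection gadget'' --- is not well defined: the cliques \(\Urt{i\to 1},\dots,\Urt{i\to k}\) are linked in a cycle, so when all of them survive in the induced subgraph there is no last one, and for a generic maximum\hyp index vertex \(\urt{i\to j}{(p)}\) set \eqref{set1} of \cref{lem:tcover} still contains surviving neighbors \(\urt{i\to l}{(q)}\) with \(q<p\) in the successor clique, while set \eqref{set3} contains surviving verification\hyp gadget neighbors; you are then stuck with three cliques. Your fallback of deleting a verification vertex instead also fails, because (by the corrected adjacency above) such a vertex typically has neighbors in both \(\Urt{i\to j}\) and \(\Urt{j\to i}\), which lie in different selection gadgets with no edges between them, so its neighborhood genuinely needs three cliques. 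The missing idea, which the paper uses, is the special role of the diagonal cliques \(\Urt{i\to i}\): for these, set \eqref{set3} is empty (there is no verification gadget \(\Ert{i<i}\)), so their vertices are \emph{always} deletable with a two\hyp clique cover. Once no diagonal vertex survives, the cycle is broken --- walking the successor chain from any nonempty \(\Urt{i\to j}\) must reach the empty \(\Urt{i\to i}\), so some nonempty \(\Urt{i\to j}\) has an empty successor \(\Urt{i\to l}\), and its maximum\hyp index vertex has set \eqref{set1} collapse to a singleton, leaving the two cliques \eqref{set2} and \eqref{set3}. Without this case distinction your deletion argument does not terminate on all induced subgraphs.
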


\begin{proof}
  We first show that \cref{constr:ishard}
  creates 3-minoes.
  By \cref{lem:tcover},
  the neighborhood of
  each vertex in vertex selection gadgets
  can be covered by three cliques.
  Also a vertex
  of a verification gadget~\(E_{i<j}\)
  has neighbors only in the
  three sets~\(E_{i<j}\),
  \(\Urt{i\to j}\), and \(\Urt{j\to i}\),
  each of which is a clique.

  We now show that the graphs~\(G\)
  created by \cref{constr:ishard}
  are 2\hyp simplicial.
  To this end,
  it is enough to show
  that each induced subgraph~\(G'\) of~\(G\)
  contains a vertex
  whose neighborhood
  can be covered by two cliques.
  Then,
  subsequently deleting a vertex
  whose neighborhood can be covered
  by two cliques
  gives an ordering
  as required by \cref{def:simplicial}.
  
  If \(G'\)~contains
  only vertices of
  verification gadgets,
  then the neighborhood
  of each vertex in~\(G'\)
  can be covered using one clique.
  If~\(G'\) contains a
  vertex~\(\urt{i\to i}{(p)}\) of~\(G\)
  for any~\(i\in\{1,2,\ldots,k\}\),
  then,
  by \cref{lem:tcover},
  its neighborhood
  can be covered by two cliques.
  Otherwise,
  there are \(j\in\{1,2,\ldots,k\}\)
  and
  \(l=(j\bmod k)+1\)
  such that
  \(G'\)~contains vertices
  of~\(\Urt{i\to j}\)
  but no vertices
  of~\(\Urt{i\to l}\).
  Let \(\urt{i\to j}{(p)}\)~be
  the vertex of~\(\Urt{i\to j}\) with
  maximum~\(p\) in~\(G'\).
  For this vertex,
  the set \eqref{set1} in \cref{lem:tcover}
  only contains~\(\urt{i\to j}{(p)}\)
  and is therefore contained in the other two sets.
  Thus,
  its neighborhood can be covered by two cliques---sets \eqref{set2} and \eqref{set3}.
  \qed
\end{proof}

\noindent
\cref{thm:ishard} now follows from
\cref{lem:isclass,lem:iscor},
the fact that \cref{constr:ishard}
runs in polynomial time,
and that \MCC{} is W[1]-hard
with respect to~\(k\) \citep{FHRV09}.

\section{Fixed\hyp parameter tractability on cluster\({}\bowtie{}\)chordal graphs}
\label{sec:iseasy}

In this section,
we prove that \McCS{}
parameterized by the solution size~\(\ell\) is \fp{} tractable
on
cluster\({}\bowtie{}\)chordal
graphs.
This complements
the negative result of \cref{sec:ishard}.

\begin{theorem}\label{thm:iseasy}
  A maximum\hyp weight \(c\)-colorable subgraph
  with at most \(\ell\)~vertices
  of a cluster\({}\bowtie{}\)chordal graph
  can be computed in \(2^{\ell+c}\cdot (c+e+3)^{\ell}\cdot \ell^{O(\log\ell)}\cdot n^2\cdot \log^3 n\)~time
  if the decomposition of the input graph
  into a cluster and a chordal graph is given.
  Herein, \(e\)~is Euler's number.
\end{theorem}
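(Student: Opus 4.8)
The plan is to solve \McCS{} on a cluster\({}\bowtie{}\)chordal graph~\(G\) by the \emph{color-coding} technique, using the fact (from \cref{prop:2-simplicial} and its proof) that such graphs are 2\hyp simplicial, so that every induced subgraph contains a \emph{simplicial-like} vertex whose neighborhood is covered by two cliques, one coming from the chordal part~\(G_1\) and one from the cluster part~\(G_2\). The key structural observation is that a \(c\)\hyp colorable induced subgraph on at most \(\ell\)~vertices uses at most \(\ell\)~vertices total, and within it each color class is an independent set. Since the two cliques covering a vertex's neighborhood can each contribute at most one vertex to any independent set, a \(c\)\hyp coloring interacts very locally with the clique structure. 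The overall strategy is therefore: guess (by color-coding) which of the \(\ell\)~solution vertices receive which of the \(c\)~colors, reducing the weighted \(c\)\hyp colorability question to a collection of independent-set-like subproblems that can be combined.

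First I would set up the color-coding framework. The aim is to randomly color all \(n\)~vertices of~\(G\) with \(\ell\)~labels (or with \(c\)~colors plus position information), so that with good probability the \(\le\ell\) vertices of an optimal solution receive distinct labels; standard \((n,\ell)\)-perfect hash families (splitters) derandomize this at a cost of \(e^\ell\cdot\ell^{O(\log\ell)}\cdot\log n\) repetitions, which matches the \((c+e+3)^{\ell}\) and \(\ell^{O(\log\ell)}\) factors in the claimed bound. Second, for a fixed labeling I would process the chordal part~\(G_1\) along a perfect elimination ordering (which exists and is computable in linear time since \(G_1\)~is chordal), performing dynamic programming over the clique tree. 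At each simplicial vertex, its chordal-neighborhood is a clique and its cluster-neighborhood is a single clique of~\(G_2\); the \(2^{\ell+c}\) factor and the \((c+3)^\ell\)-type base presumably arise from guessing, for each candidate solution vertex, which color it takes and which of the two covering cliques already ``uses up'' a slot in the partial solution. Third, I would combine the subproblem solutions via a weighted maximization over the \(\ell\)~labeled slots, tracking for each partial solution a signature recording which colors are live inside the two currently relevant cliques.

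The hard part will be making the dynamic program's state small enough to yield the stated running time while correctly enforcing global \(c\)\hyp colorability across the edge-wise union of the two graphs. The subtlety is that \(c\)\hyp colorability is not simply ``at most \(c\)~independent sets'' in a local sense: the chordal edges and the cluster edges impose \emph{different} clique constraints on the same vertex set, and a color class must be independent with respect to \emph{both} \(E_1\)~and~\(E_2\) simultaneously. I expect the proof to handle this by exploiting that in~\(G_2\) each color class meets every cluster-clique in at most one vertex, so the cluster part contributes a bounded ``conflict pattern'' that can be folded into the color-coding labels; the chordal part is then treated by standard clique-tree dynamic programming where, at each clique bag, one records the multiset of colors currently assigned to the \(\le\ell\) selected vertices in that bag. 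The main obstacle is thus bounding the number of distinct bag-states by roughly \((c+O(1))^{\ell}\) and arguing that the two clique covers interact cleanly enough that no further blow-up occurs; the factors \(2^{\ell+c}\), \((c+e+3)^\ell\), \(\ell^{O(\log\ell)}\), and \(\log^3 n\) should then emerge from, respectively, the subset guessing of which slots are filled, the per-slot color and clique-side choices, the derandomization of the perfect hash family, and the data-structure overhead for weighted maximization and clique-tree queries.
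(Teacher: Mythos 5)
Your high-level architecture (color coding to strip away the $c$-colorability condition, then dynamic programming exploiting the chordal structure) matches the paper's, but the step you yourself flag as ``the main obstacle''---enforcing independence with respect to the cluster edges~$E_2$ while running a DP over the clique tree of the chordal part~$G_1$---is exactly the step the proposal leaves unresolved, and it cannot be resolved the way you sketch it. A cluster of~$G_2$ may have its vertices scattered across arbitrarily distant bags of the tree decomposition of~$G_1$, so ``which clusters have already contributed a vertex'' is global information that cannot be recorded in a bag-local state of size $(c+O(1))^{\ell}$ or $3^{c}$; there is no bounded ``conflict pattern'' to fold in without a further idea. Worse, your stated key structural observation---that every vertex has a neighborhood covered by two cliques, one from each part---is precisely $2$-simpliciality, and \cref{thm:ishard} of this very paper shows that \IS{} is already W[1]-hard on $2$-simplicial graphs. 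So any algorithm whose correctness rests only on that local two-clique cover (plus guessing colors per solution vertex) would contradict the hardness result unless FPT${}={}$W[1]; the algorithm must exploit the global cluster structure, not the local clique cover.

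The paper's missing ingredient is a \emph{second, independent} round of color coding applied to the clusters rather than to the vertices: using a $(d,\ell)$-perfect hash family over the $d$~clusters of~$G_2$, one colors all vertices of a cluster with the same one of $\ell$~colors, so that the at most $\ell$~clusters meeting an optimal solution receive pairwise distinct colors (\cref{lem:clu-to-co}). This converts independence in~$G_2$ into \emph{colorfulness}, which---unlike cluster membership---can be tracked locally: the DP over the clique tree of the chordal graph stores, per bag, the subset $C$ of colors already used in the subtree together with the (at most one, since bags are cliques) selected bag vertex, giving the $3^{c}$-type factor via the subset-pair enumeration in \cref{lem:td}. Your first color-coding layer (splitting into $c$~independent-set subproblems, as in \cref{lem:mccs-to-is}) is correct and matches the paper, but without the cluster-coloring layer and the resulting \MCIS{} formulation on the chordal graph alone, the proposed DP has no correct state space of the claimed size.
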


\noindent
Note that \(c\leq\ell\) holds in all nontrivial cases.
Thus,
\cref{thm:iseasy}
shows that \McCS{} parameterized by~\(\ell\)
is \fp{} tractable in cluster\({}\bowtie{}\)chordal graphs.
Moreover,
note that it also shows
that \McCS{} parameterized by the weight~\(W\)
of the sought solution is \fp{} tractable,
since if there is an independent set of weight~\(W\),
then there is also one
consisting of at most \(W\)~vertices.

On the one hand, \cref{thm:iseasy}
is a generalization of a result of \citet{MPR+13},
who showed
that \cCS{} parameterized by the solution size~\(\ell\) is \fp{} tractable on chordal graphs.
On the other hand,
it generalizes a
fixed\hyp parameter tractability result
on the \textsc{Job Interval Selection} problem
of \citet{BMNW15},
who showed
that \IS{}
parameterized by the solution size~\(\ell\)
is \fp{} tractable
on cluster\({}\bowtie{}\)interval graphs.

The proof of \cref{thm:iseasy}
works in three steps.
First,
in \cref{sec:mccs->mis},
we use the color coding technique
due to \citet{AYZ95}
to
reduce \McCS{} to \MIS{}.
Then,
in \cref{sec:mis->mcis},
we again use color coding
to
reduce \MIS{}
in cluster\({}\bowtie{}\)chordal graphs
to the problem of finding
a maximum\hyp weight independent set of pairwise
distinct colors in
chordal graphs
whose vertices are colored in \(\ell\)~colors.
Finally,
in \cref{sec:mcis-dp},
we use dynamic programming
to find this colorful independent set.

\subsection{From \McCS{} to \MIS{}}
\label{sec:mccs->mis}

In order to prove \cref{thm:iseasy},
we show that it is enough
to give a \fp{} algorithm for \MIS{}
parameterized by the solution size~\(\ell\).

To this end,
we use the color coding technique of \citet{AYZ95}.
This technique has been used by \citet{MPR+13}
to show that \cCS{} parameterized by the solution size~\(\ell\)
is \fp{} tractable in chordal graphs.
Their approach is neither limited
to the unweighted problem
nor to chordal graphs.
Note, however, that the approach requires some tuning
to work in the weighted setting.
We describe it here in the most general form.
To this end,
we need the following definition.

\begin{definition}[hereditary graph class]
  A graph class~\(\mathcal C\) is \emph{hereditary} if every
  induced subgraph of a graph in~\(\mathcal C\) also belongs
  to~\(\mathcal C\).
\end{definition}

\noindent
All graph classes
considered in this work are hereditary.

\begin{lemma}
  \label[lemma]{lem:mccs-to-is}
  Let \(\mathcal C\) be
  a hereditary graph class.
  Moreover,
  let \(t\colon\mathds{N}\times\mathds{N}\to\mathds{N}\)
  be a function, nondecreasing in both arguments,
  such that a maximum\hyp weight independent set
  of size at most~\(\ell\) in~\(\mathcal C\)
  can be computed in \(t(\ell,n)\)~time.

  Then,
  a maximum\hyp weight
  induced \(c\)\hyp colorable subgraph
  with at most \(\ell\)~vertices
  of a graph in \(\mathcal C\)~is computable
  in
  \(2^{\ell+c} c^\ell\ell^{O(\log\ell)}\log^2n\cdot t(\ell,n)\)~time.
\end{lemma}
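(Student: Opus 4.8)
The lemma reduces the weighted $c$-colorable subgraph problem to weighted independent set, at the cost of factors depending only on $\ell$ and $c$. The plan is to use color coding twice, or rather to use it in a way that simultaneously partitions the sought solution into $c$ independent sets (one per color) and identifies each individually.

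\textbf{The plan.} Let $S$ be an optimal solution: a maximum-weight vertex set of size at most $\ell$ inducing a $c$-colorable subgraph. Fix a proper $c$-coloring of $G[S]$; this partitions $S$ into $c$ independent sets $S_1,\dots,S_c$ (some possibly empty). First I would color the whole vertex set $V$ uniformly at random with $\ell$ colors and argue, via the standard Alon--Yuster--Zwick analysis, that with probability at least $\ell!/\ell^\ell \ge e^{-\ell}$ the at-most-$\ell$ vertices of $S$ receive pairwise distinct colors; a family of $e^{\ell}\cdot \ell^{O(\log\ell)}$ perfect hash functions (from $V$ onto $[\ell]$) derandomizes this and contributes the $\ell^{O(\log\ell)}$ and part of the exponential-in-$\ell$ factor. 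So I may assume $S$ is \emph{colorful}: distinct vertices of $S$ get distinct colors from $[\ell]$.

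\textbf{Core combinatorial step.} Once $S$ is colorful, its color set is some $L\subseteq[\ell]$ with $|L|=|S|$, and the coloring-class partition $S_1,\dots,S_c$ induces a partition of $L$ into $c$ blocks $L_1,\dots,L_c$. I would iterate over all pairs (choice of $L$, ordered partition of $L$ into $\le c$ blocks): the number of subsets $L$ is $2^\ell$ and the number of functions assigning each chosen color to one of $c$ classes is at most $c^\ell$, giving the $2^\ell c^\ell$ factor, and a further $2^c$ (absorbed into $2^{\ell+c}$) to guess which color classes are used. For each block $L_r$, I would restrict to the subgraph induced by vertices colored from $L_r$ and compute a maximum-weight \emph{colorful} independent set there — equivalently a maximum-weight independent set of size $\le|L_r|$ — using the assumed algorithm running in time $t(\ell,n)$. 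Because $\mathcal{C}$ is hereditary, each such induced subgraph is still in $\mathcal{C}$, so the oracle applies. Since the color classes $L_1,\dots,L_c$ are disjoint, the chosen independent sets are vertex-disjoint and their union induces a $c$-colorable subgraph; summing their weights and maximizing over all guesses recovers the optimum, because for the correct guess matching $S$ each block's computation finds a set at least as heavy as $S_r$.

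\textbf{Bookkeeping and the main obstacle.} Running the oracle $c$ times per guess multiplies $t(\ell,n)$ by $c\le\ell$, which is swallowed by the other polynomial-in-$\ell$ factors; the $\log^2 n$ arises from the number of distinct weight thresholds one must consider when turning the weighted optimization into the colorful-independent-set subroutine. The step I expect to require the most care is the weighted adaptation: color coding natively guarantees only that solution vertices are \emph{separated} by colors, but the oracle $t(\ell,n)$ produces a maximum-weight set, and I must ensure that restricting to color block $L_r$ and asking for a maximum-weight (colorful) independent set does not accidentally pick up weight from vertices outside $S$ in a way that breaks the $c$-colorability of the union. The resolution is that disjointness of the color blocks guarantees disjointness of the returned sets regardless of which vertices they use, so the union is automatically $c$-colorable; correctness then follows from an exchange argument comparing the computed optimum against the fixed $S$ for the correctly-guessed hash function and partition.
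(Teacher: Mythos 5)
Your reduction is logically sound and does prove fixed-parameter tractability, but it takes a two-stage route that differs from the paper's and, as a consequence, does not achieve the running time claimed in the lemma. The paper uses a single \((n,\ell,c)\)-universal set --- a family of \(c^\ell\ell^{O(\log\ell)}\log^2 n\) functions \(V\to\{1,\ldots,c\}\) guaranteed to agree with the (unknown) proper \(c\)-coloring of the optimal solution on its at most \(\ell\) vertices --- and then, for each function and each partition \(\ell_1+\dots+\ell_c\leq\ell\), calls the independent-set oracle on each color class. You instead first apply a \((n,\ell)\)-perfect hash family onto \(\ell\) colors and then exhaustively enumerate assignments of those \(\ell\) colors to the \(c\) classes. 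This composition is a legitimate way to build a universal set and your correctness argument (disjointness of the blocks forces disjointness of the returned sets, hence \(c\)-colorability of the union; the correctly guessed hash and partition witness \(w(S')\geq w(S)\)) goes through. The \emph{gap} is quantitative: the perfect-hash stage contributes an extra factor of \(e^\ell\) (family size \(e^\ell\ell^{O(\log\ell)}\log n\)) on top of the \(2^\ell c^\ell\) from your enumeration, so you obtain roughly \((2e)^\ell 2^c c^\ell\ell^{O(\log\ell)}\log n\cdot t(\ell,n)\) rather than the stated \(2^{\ell+c}c^\ell\ell^{O(\log\ell)}\log^2 n\cdot t(\ell,n)\). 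Since the lemma's bound is later propagated into \cref{thm:iseasy}, this overshoot is not cosmetic; to match the statement you must invoke the direct universal-set construction (as in the work of Misra et al.\ that the paper cites) rather than hashing to \(\ell\) colors first.

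A second, smaller inaccuracy: your explanation of the \(\log^2 n\) factor as coming from ``distinct weight thresholds'' is not right. No thresholding of weights occurs anywhere in the argument --- the oracle is already a weighted maximization oracle, and the \(\log^2 n\) is simply the size overhead of the derandomized \((n,\ell,c)\)-universal set. Likewise, the extra \(2^c\) guess of ``which color classes are used'' is unnecessary (empty classes are handled by allowing \(\ell_i=0\)), though it is harmless.
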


\begin{proof}
  Our algorithm makes use
  of \emph{\((n,\ell,c)\)\hyp universal sets}
  \citep{MPR+13}---sets~\(\mathcal F\)
  of functions~\(f\colon \{1,2,\ldots,n\}\to\{1,2,\ldots,c\}\)
  such that,
  for any set~\(I\subseteq\{1,2,\ldots,n\}\)
  of size~\(|I|=\ell\)
  and
  any function~\(g\colon\{1,2,\ldots,n\}\to\{1,2,\ldots,c\}\),
  there is a function~\(f\in\mathcal F\)
  with \(f(i)=g(i)\) for each~\(i\in I\).

  Our algorithm computes
  a maximum\hyp weight \(c\)\hyp colorable subgraph~\(G[S]\)
  with at most \(\ell\)~vertices
  of a graph~\(G=(\{v_1,v_2,\allowbreak \ldots,\allowbreak v_n\},E)\)
  in~\(\mathcal C\)
  as follows.
  First,
  in \(c^\ell\ell^{O(\log \ell)}n\log^2n\)~time,
  compute an
  \((n,\ell,c)\)\hyp universal set~\(\mathcal F\)
  of size~\(|\mathcal F|\in  c^\ell\ell^{O(\log \ell)}\cdot \log^2n \)  \citep{MPR+13}.
  Then,
  iterate over each~\(f\in\mathcal F\)
  and each of the \(\binom{\ell+c}{c}\leq 2^{\ell+c}\)~possible
  partitions~\(\ell_1+\ell_2+\dots+\ell_c\leq \ell\)
  such that \(\ell_i\geq 0\)
  for each \(i\in\{1,2,\ldots,c\}\).
  In each iteration, do the following:
  Compute a maximum\hyp weight independent set~\(S_i\)
  of size at most~\(\ell_i\)
  in the subgraph~\(G_i\)
  induced by the vertices~\(v_k\)
  with \(f(k)=i\)
  for each \(i\in\{1,2,\ldots,c\}\).
  Return
  the maximum\hyp weight
  set~\(S'=S_1\uplus S_2\uplus\dots\uplus S_c\)
  found in any iteration.
  Note
  that the independent set~\(S_i\)
  in each graph~\(G_i\) for \(i\in\{1,2,\ldots,c\}\)
  can be found in \(t(\ell,n)\)~time
  since \(\mathcal C\)~is a hereditary graph class
  and, thus, \(G_i\in\mathcal C\).

  It is obvious that~\(G[S']\)
  has at most \(\ell\)~vertices
  and
  is \(c\)\hyp colorable
  since it is the union
  of \(c\)~independent sets.
  Thus, \(w(S')\leq w(S)\),
  where \(G[S]\)~is a maximum\hyp weight
  \(c\)\hyp colorable subgraph
  with at most \(\ell\)~vertices in~\(G\).

  It remains to show \(w(S')\geq w(S)\).
  Let \(\col\colon V\to\mathds{N}\)~be
  a proper \(c\)\hyp coloring of~\(G[S]\).
  Since \(|S|\leq\ell\),
  there is a function~\(f\in\mathcal F\)
  such that \(\col(v_k)=f(k)\) for each~\(v_k\in S\).
  For each~\(i\in\{1,2,\ldots,c\}\),
  let \(S_i:=\{v_k\in S\mid f(k)=i\}\)
  and
  \(\ell_i:=|S_i|\).
  Since \(G[S]\)~is a maximum\hyp weight \(c\)\hyp colorable
  subgraph,
  for each~\(i\in\{1,2,\ldots,c\}\),
  \(S_i\)~is a maximum\hyp weight independent set
  with at most \(\ell_i\)~vertices
  in the subgraph~\(G_i\)
  induced by the vertices~\(v_k\) with \(f(k)=i\).
  Since the algorithm iterated over this~\(f\)
  and these \(\ell_1+\ell_2+\dots+\ell_c\leq\ell\),
  we get that \(w(S')\geq w(S)\).
  \qed
\end{proof}

\noindent
We point out that,
for an easier and faster
implementation of the algorithm
in the proof of \cref{lem:mccs-to-is},
it makes sense to trade
the factor \(\ell^{(O\log \ell)}\log^2n\)
in the running time for a small error probability~\(\varepsilon\)
by simply choosing \(f(k)\in\{1,\dots,c\}\)
uniformly at random
for each vertex~\(v_k\) of~\(G\)
instead of using universal sets.
With probability at least \(c^{-\ell}\),
\(f\)~will be a proper coloring
for a maximum\hyp weight solution
and the algorithm will thus find it.
Repeating the algorithm
with
\(O(c^\ell\ln 1/\varepsilon)\)~random assignments~\(f\)
thus allows for finding a
maximum\hyp weight solution
with error probability at most~\(\varepsilon>0\).

\subsection{From \MIS{} to \MCIS{}}
\label{sec:mis->mcis}

In order to show \cref{thm:iseasy},
by \cref{lem:mccs-to-is},
it is enough to show that
\MIS{} parameterized by the solution size~\(\ell\) is \fp{} tractable in cluster\({}\bowtie{}\)chordal graphs.
We do this using an algorithm for
the following auxiliary problem.

\problemdef{\MCIS}%
{A graph~\(G=(V,E)\) with
  vertex weights~\(w\colon V\to\mathds{N}\)
  and
  a coloring~\(\col\colon V\to\{1,2,\ldots,c\}\) of its vertices.}%
{Find a maximum\hyp weight independent set whose vertices have pairwise distinct colors.}

\noindent
Colorful independent sets of
interval graphs have been studied
by \citet{BMNW15} in order to show that
\IS{} parameterized by the solution size~\(\ell\) is \fp{} tractable on cluster\({}\bowtie{}\)interval graphs,
thus strengthening a result
on the \textsc{Job Interval Selection}
problem of \citet{HK06}.

In \cref{sec:mcis-dp},
we will study \MCIS{} on chordal graphs.
Together with the following lemma,
this will give a \fp{} algorithm
for \MIS{} on
cluster\({}\bowtie{}\)chordal graphs.

\begin{lemma}
  \label[lemma]{lem:clu-to-co}
  Let \(\mathcal C\)~be a graph class and
  \(t\colon \mathds{N}\times\mathds{N}\to\mathds{N}\) be
  a function such that
  a maximum\hyp weight colorful independent set
  in graphs in~\(\mathcal C\)
  is computable in \(t(c,n)\)~time.

  Then, a maximum\hyp weight independent set
  of size at most~\(\ell\)
  in cluster\({}\bowtie \mathcal C\) graphs
  is computable
  in \(e^{\ell}{\ell}^{O(\log \ell)}\log n\cdot t(\ell,n)\)~time,
  assuming that the decomposition of the input graph
  into a cluster graph and a graph of~\(\mathcal C\)
  is given.
\end{lemma}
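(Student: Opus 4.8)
The plan is to reduce \MIS{} on a cluster\({}\bowtie\mathcal{C}\) graph~\(G\) to a collection of \MCIS{} instances on graphs in~\(\mathcal{C}\), and then invoke the assumed \(t(c,n)\)-time algorithm. The key observation is that \(G\) decomposes as \(E=E_1\cup E_2\), where \(G_1=(V,E_1)\) is a cluster graph and \(G_2=(V,E_2)\in\mathcal{C}\). Because \(G_1\) is a disjoint union of cliques, an independent set of~\(G\) may contain at most one vertex from each clique of~\(G_1\). Thus, if we color each vertex of~\(G\) by the index of the clique of~\(G_1\) containing it, then an independent set of~\(G\) is exactly a set that is (i)~independent in~\(G_2\) and (ii)~\emph{colorful} with respect to this clique-coloring. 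In other words, \MIS{} on~\(G\) is precisely \MCIS{} on~\(G_2\) under the clique-coloring of~\(G_1\).

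The difficulty is that the number of cliques in~\(G_1\) may be as large as~\(n\), whereas \MCIS{} is parameterized by the number of colors~\(c\). First I would therefore apply color coding again: since we seek an independent set of size at most~\(\ell\), it suffices to restrict attention to at most~\(\ell\) of the clique-colors at a time. Concretely, I would compute a family~\(\mathcal{F}\) of hash functions \(h\colon\{1,\dots,n'\}\to\{1,\dots,\ell\}\), where \(n'\) is the number of \(G_1\)-cliques, that is \((n',\ell,\ell)\)-perfect in the sense that for every \(\ell\)-subset of cliques there is some \(h\in\mathcal F\) injective on it. Such a family of size \(e^{\ell}\ell^{O(\log\ell)}\log n\) can be computed in comparable time. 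For each \(h\in\mathcal{F}\), I would recolor every vertex~\(v\) by \(h(\text{clique of }v)\), obtaining a coloring of~\(V\) with only \(\ell\)~colors, and then run the assumed \MCIS{} algorithm on the graph~\(G_2\in\mathcal{C}\) with this recoloring, in \(t(\ell,n)\)~time. I would return the maximum-weight solution found over all \(h\in\mathcal F\).

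For correctness, the forward direction is immediate: any colorful independent set found picks at most one vertex per \(h\)-color, hence (since distinct \(h\)-colors correspond to distinct cliques after fixing a solution) at most one vertex per \(G_1\)-clique, so it is independent in~\(G_1\) as well as in~\(G_2\), and therefore in~\(G\). For the reverse direction, let~\(S\) be a maximum-weight independent set of~\(G\) with \(|S|\le\ell\); its vertices lie in at most~\(\ell\) distinct \(G_1\)-cliques, so by the perfect-hashing property some \(h\in\mathcal F\) assigns these cliques pairwise distinct colors, making~\(S\) colorful under the \(h\)-recoloring and independent in~\(G_2\). Hence the \MCIS{} call for that~\(h\) recovers a set of weight at least~\(w(S)\), and optimality over~\(\mathcal F\) gives the claim.

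The main obstacle I anticipate is the bookkeeping needed to ensure that colorfulness in the hashed coloring genuinely forces independence in~\(G_1\). Two vertices in the \emph{same} \(G_1\)-clique always receive the same \(h\)-color (since \(h\) acts on clique indices), so colorfulness directly forbids choosing two vertices from one clique; the only subtlety is that two vertices in \emph{different} cliques might collide under a given~\(h\), but this merely shrinks the feasible region and never creates a false solution, so correctness is preserved while the perfect-hashing guarantee restores completeness. The running-time bound then follows by multiplying the number~\(|\mathcal F|\in e^{\ell}\ell^{O(\log\ell)}\log n\) of hash functions by the per-call cost~\(t(\ell,n)\), matching the claimed \(e^{\ell}\ell^{O(\log\ell)}\log n\cdot t(\ell,n)\).
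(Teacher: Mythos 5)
Your proposal is correct and matches the paper's proof essentially step for step: both reduce \MIS{} on a cluster\({}\bowtie\mathcal C\) graph to \MCIS{} on the \(\mathcal C\)-part by coloring vertices according to their \(G_1\)-cluster, compress the cluster indices to \(\ell\) colors via a perfect hash family of size \(e^{\ell}\ell^{O(\log\ell)}\log n\), and invoke the assumed \(t(\ell,n)\)-time algorithm for each hash function. The correctness argument (colorfulness forbids two vertices from one cluster; the hashing guarantee ensures some function separates the at most \(\ell\) clusters met by an optimal solution) is the same as in the paper.
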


\begin{proof}
  Let \(G=(V,E_1\cup E_2)\)
  such that \(G_1=(V,E_1)\) is a cluster graph
  and \(G_2=(V,E_2)\in\mathcal C\).
  Let \(d\)~be the number of clusters in~\(G_1\).
  Our algorithm uses
  a \emph{\((d,\ell)\)\hyp perfect hash family}~\(\mathcal F\)---a~family
  of functions~\(f\colon\{1,2,\ldots,d\}\to\{1,2,\ldots,\ell\}\)
  such that,
  for each subset~\(C\subseteq\{1,2,\ldots,d\}\)
  of size~\(\ell\),
  at least one of the functions in~\(\mathcal F\)
  is a bijection between~\(C\) and~\(\{1,2,\ldots,\ell\}\).

  We find a maximum\hyp weight independent set
  with at most \(\ell\)~vertices in~\(G\)
  as follows.
  First,
  in \(e^{\ell}{\ell}^{O(\log \ell)}d\log d\)~time,
  construct
  a \((d,\ell)\)\hyp perfect hash family~\(\mathcal F\)
  with \(|\mathcal F|\in e^{\ell}{\ell}^{O(\log \ell)}\log d\)
  \citep[Definition 5.15--Theorem~5.18]{CFK+15}.
  Then,
  iterate over all functions~\(f\in\mathcal F\).
  In each iteration,
  consider the coloring~\(\col_f\)
  that
  colors the vertices
  in each cluster~\(i\in\{1,2,\ldots,d\}\)
  of~\(G_1\) using color~\(f(i)\in\{1,2,\ldots,\ell\}\)
  and compute a maximum\hyp weight colorful independent
  set in~\(G_2\) with respect to \(\col_f\)
  in \(t(\ell,n)\)~time.
  Return the colorful independent set~\(S'\)
  of maximum weight found in any iteration.

  We show that \(S'\)~is
  an independent set for~\(G\)
  with \(w(S')=w(S)\),
  where \(S\)~is a maximum\hyp weight independent set
  of size at most \(\ell\) in~\(G\).

  ``\(\leq\)'':
  Observe that \(S'\),
  being an independent set for~\(G_2\),
  contains no edges of~\(G_2\).
  Moreover,
  being colorful for some coloring~\(\col_f\),
  \(S'\) neither contains edges of~\(G_1\):
  the endpoints of such an edge
  belong to the same cluster~\(i\)
  and thus get the same color~\(\col_f(i)\).
  Thus,
  \(S'\)~is an independent set
  of size at most~\(\ell\) for~\(G\)
  and, therefore,
  \(w(S')\leq w(S)\).

  ``\(\geq\)'':
  Since \(S\)~uses at most one vertex
  of each cluster of~\(G_1\),
  the set~\(C_S\) of clusters
  that contain vertices of~\(S\)
  has size at most~\(\ell\).
  Thus,
  \(\mathcal F\)~contains
  a bijection~\(f\colon C_S\to\{1,2,\ldots,|C_S|\}\).
  Note that \(S\)~is
  a colorful independent
  set for~\(G_2\) with respect to~\(\col_f\).
  Thus, \(w(S')\geq w(S)\).
  \qed
\end{proof}

\noindent
As in the case with \cref{lem:mccs-to-is},
for an easier and faster implementation
of the algorithm
in the proof of \cref{lem:clu-to-co},
it makes sense to trade the factor \(\ell^{O(\log \ell)}\log d\)
in the running time
for a small error probability
by randomly coloring the clusters
instead of using perfect hash families:
the vertices of maximum\hyp weight solution
will have pairwise distinct colors with probability
at least \(\ell!/\ell^\ell\in\Theta(e^{-\ell})\).
Repeating the algorithm
with \(O(e^\ell\ln 1/\varepsilon)\) random
colorings will thus find 
a maximum\hyp weight solution
with error probability at most~\(\varepsilon>0\).

\subsection{\MCIS{} in chordal graphs}
\label{sec:mcis-dp}

To complete
our proof of \cref{thm:iseasy},
we finally show
the following proposition,
which,
together with
\cref{lem:mccs-to-is,lem:clu-to-co},
yields \cref{thm:iseasy}.

\begin{proposition}
  \label[proposition]{lem:mcis-chordal}
  \MCIS{} in chordal graphs
  is solvable in \(O(3^cn^2)\)~time.
\end{proposition}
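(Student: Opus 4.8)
The plan is to solve \MCIS{} on chordal graphs by exploiting their clique-tree structure and performing dynamic programming over a tree decomposition whose bags are cliques, while tracking which of the $c$~colors have already been used. First I would compute a perfect elimination ordering (in linear time) together with the associated clique tree, whose nodes are the at most $n$~maximal cliques of $G$; turning this into a nice tree decomposition yields bags $B_x$ that are cliques (being subsets of maximal cliques) and, after binarization, only $O(n)$~join nodes. Because every bag is a clique, any colorful independent set meets each bag in at most one vertex, which keeps the per-node branching small.

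For each node $x$ of the decomposition, with subtree vertex set $V_x$, I would store a table $D_x[s,C]$ indexed by $s\in B_x\cup\{\bot\}$ and $C\subseteq\{1,\dots,c\}$, defined as the maximum weight of a colorful independent set $I\subseteq V_x$ with $I\cap B_x=\{s\}$ (respectively $I\cap B_x=\emptyset$ when $s=\bot$) whose set of used colors is \emph{exactly}~$C$. The introduce and forget nodes are routine: introducing a vertex $v$ either leaves $I$ unchanged or adds $v$ (updating the selected-vertex slot, augmenting the color set by $\col(v)$, and adding $w(v)$, provided $\col(v)\notin C$), while forgetting $u$ takes the maximum over the entries in which $u$ was or was not selected. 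Each such node costs $O(n\cdot 2^c)$~time, well within the target.

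The crux is the join node $x$ with children $y,z$ sharing the bag $B_x=B_y=B_z$ and satisfying $V_y\cap V_z=B_x$. Here a colorful independent set decomposes as $I=I_y\cup I_z$ that must agree on the single selected bag vertex~$s$, and colorfulness forces $I_y$ and $I_z$ to use disjoint colors apart from $\col(s)$. Thus $D_x[s,C]=\max\{D_y[s,C_y]+D_z[s,C_z]-w(s)\}$ over color sets with $C_y\cup C_z=C$ and $C_y\cap C_z=\{\col(s)\}$, with the analogous disjoint-union rule when $s=\bot$. Evaluating this max-plus convolution naively over all subset pairs costs $\sum_{C}2^{|C|}=3^c$ operations per choice of~$s$, hence $O(n\cdot 3^c)$ per join node; with $O(n)$~join nodes this term dominates and gives the claimed $O(3^c n^2)$~bound, the answer being $\max_{s,C}D_r[s,C]$ at the root~$r$.

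The step I expect to be the main obstacle is the join node: one must argue correctly that gluing the two child solutions along the shared clique preserves both independence and colorfulness---in particular that the only admissible shared vertex and color are the selected bag vertex~$s$ and $\col(s)$---and that subtracting $w(s)$ exactly once restores the weight. The remaining quantitative point is to confirm that there are only $O(n)$~join nodes, so that the $3^c$~factor is multiplied by $n^2$ rather than something larger; everything else is bookkeeping.
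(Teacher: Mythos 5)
Your proposal is correct and follows essentially the same route as the paper: a dynamic program over a clique-tree decomposition indexed by the (at most one) selected bag vertex and a subset of colors, where the join rule splits $C$ into $C_Y\cup C_Z=C$ with $C_Y\cap C_Z=\col(\{s\})$ and the $3^c$ subset enumeration over $O(n)$ join nodes yields the $O(3^c n^2)$ bound. The paper packages this as a general lemma for tree decompositions whose bags have independence number at most $\alpha$ (applied with $\alpha=1$) and uses a leaf/one-child/two-identical-children normalization rather than introduce/forget nodes, but the recurrences and the correctness argument at the join (gluing along the shared clique, disjointness of colors outside $\col(s)$, subtracting $w(s)$ once) coincide with yours.
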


\noindent
\cref{lem:mcis-chordal}
generalizes a result
of \citet{HK06},
who showed that \MCIS{} parameterized
by the number~\(c\) of colors is \fp{} tractable 
in interval graphs.
Whereas the algorithm of \citet{HK06}
is based on dynamic programming
over the interval representation
of interval graphs,
we will exploit
dynamic programming over
\emph{tree decompositions}
of chordal graphs:

\begin{definition}[tree decomposition]
  \label[definition]{def:td}
  A \emph{tree decomposition}
  of a graph~$G=(V,E)$ is a pair~$(\mathcal X,T)$,
  where $\mathcal X\subseteq 2^V$ and
  $T$~is a tree on~$\mathcal X$,  such that
  \begin{enumerate}[(i)]
  \item\label{td1} \(\bigcup_{X  \in \mathcal X} X  =V,\)   the sets~$X\in\mathcal X$ are called \emph{bags},
  \item\label{td2} for each edge~$e\in E$, there is a bag~$X\in \mathcal X$ with $e\subseteq X$,
  \item\label{td3} for any three
    bags~\(X,Y,Z\in\mathcal X\)
    such that \(Y\)~is on the path
    between~\(X\) and~\(Z\) in~\(T\),
    one has \(X\cap Z\subseteq Y\).
  \end{enumerate}
\end{definition}

\noindent
Chordal graphs have tree decompositions
in which each bag induces a clique
and such tree decomposition can be computed
in linear time \citep{BP93}.

We will additionally assume
that the tree decomposition is rooted
at an arbitrary bag
and
that each bag~\(X\) with more than one child in~\(T\)
has exactly two children~\(Y\) and~\(Z\) in~\(T\)
such that \(X=Y=Z\).
This property can be established step-by-step as follows:
if some bag~\(X\) has more than two children
or its children are not all equal to~\(X\),
then we can attach two copies~\(X_1\)
and~\(X_2\) of~\(X\)
to~\(X\) as children in~\(T\),
attach one original child of~\(X\) to~\(X_1\)
and the other children of~\(X\) to~\(X_2\).
This procedure,
in the worst case,
triples the number of bags in the tree decomposition
and works in linear time.

Using this linear\hyp time computable tree decomposition
and the following lemma with \(\alpha=1\)
yields \cref{lem:mcis-chordal},
which,
together with
\cref{lem:mccs-to-is,lem:clu-to-co},
concludes the proof of
\cref{thm:iseasy}.

\begin{lemma}
\label[lemma]{lem:td}
  Given a tree decomposition,
  each bag of which
  has at most \(\alpha\)~pairwise nonadjacent vertices,
  \MCIS{} can be solved
  in \(O(3^c\cdot n^{\alpha+1})\)~time.
\end{lemma}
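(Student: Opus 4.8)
The plan is to do bottom\hyp up dynamic programming over the given tree decomposition~$(\mathcal X,T)$, which we may assume to be rooted and in the binary form described above: every branching bag has exactly two children, both equal to itself. The decisive structural consequence of the hypothesis is that, since every bag has at most~$\alpha$ pairwise nonadjacent vertices, any independent set meets a bag~$X$ in an independent subset of size at most~$\alpha$, and there are only $\sum_{i=0}^\alpha\binom{|X|}{i}=O(n^\alpha)$ such subsets. I therefore index the table by a bag~$X$, a \emph{colorful independent} subset~$S\subseteq X$, and a color set~$C\subseteq\{1,\dots,c\}$, and let $D[X,S,C]$ be the maximum weight of a colorful independent set~$I$ in the graph induced by the union~$V_X$ of all bags in the subtree rooted at~$X$, subject to $I\cap X=S$ and the colors used by~$I$ being \emph{exactly}~$C$. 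The sought value is $\max_{S,C}D[r,S,C]$ at the root~$r$.

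At a leaf~$X$ we have $V_X=X$, so $D[X,S,\col(S)]=w(S)$ and every other entry is~$-\infty$. A bag~$X$ with a single child~$Y$ performs a \emph{forget} and an \emph{introduce} at once, which I would treat separately. By the connectivity property of tree decompositions, the vertices of $Y\setminus X$ occur in no ancestor bag (they are forgotten for good) and those of $X\setminus Y$ occur in no bag of~$Y$'s subtree (they are genuinely new); crucially, there is no edge between $Y\setminus X$ and $X\setminus Y$, since no single bag contains one endpoint from each side. For the forget, I group the child states by their trace on the separator $A:=X\cap Y$, precomputing $\tilde D[S_A,C]:=\max\{D[Y,S_Y,C]\mid S_Y\cap A=S_A\}$ in $O(n^\alpha 2^c)$ time; for the introduce, writing $N:=S\setminus Y$ for the newly chosen vertices, I set $D[X,S,C]:=w(N)+\tilde D[S\cap A,\,C\setminus\col(N)]$. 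This is correct precisely because $N$ has no edge to any forgotten vertex, so independence decouples, and colorfulness forces $\col(N)\subseteq C$ to be disjoint from the rest of~$C$. This grouping is what keeps the per\hyp bag cost at $O(n^\alpha 2^c)$ and avoids an $n^{2\alpha}$ blow\hyp up.

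The heart of the argument is the \emph{join} bag~$X$ with two children $Y=Z=X$. Here $V_Y\cap V_Z=X$, so there are no edges between $V_Y\setminus X$ and $V_Z\setminus X$ and a solution splits as $I=I_Y\cup I_Z$ with $I_Y\cap I_Z=S$; independence is inherited from the two sides, while colorfulness of the whole set is equivalent to the two color sets overlapping only in~$\col(S)$. Hence
\[
  D[X,S,C]=\max_{\substack{C_Y\cup C_Z=C\\ C_Y\cap C_Z=\col(S)}}\bigl(D[Y,S,C_Y]+D[Z,S,C_Z]-w(S)\bigr),
\]
where $w(S)$ is subtracted because~$S$ is counted on both sides. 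I expect this join to be the main obstacle, both in verifying that the overlap condition $C_Y\cap C_Z=\col(S)$ exactly captures colorfulness across the two subtrees (so that forgotten vertices on opposite sides never reuse a color) and in bounding its cost. The point is that, for a fixed~$S$, every color is forced into both sets if it lies in~$\col(S)$, and otherwise lies in~$C_Y$ only, in~$C_Z$ only, or in neither---three options---so summing over all target sets~$C$ yields only $3^{\,c-|\col(S)|}\le 3^c$ admissible pairs $(C_Y,C_Z)$, rather than a naive~$4^c$.

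Putting the pieces together, there are $O(n)$ bags and $O(n^\alpha)$ relevant subsets~$S$ per bag; the leaf and single\hyp child steps cost $O(n^\alpha 2^c)$ per bag, while each join bag costs $O(n^\alpha 3^c)$. Multiplying by the $O(n)$ bags gives the claimed $O(3^c\cdot n^{\alpha+1})$ running time.
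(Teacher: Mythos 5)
Your proposal is correct and follows essentially the same route as the paper's proof: a bottom-up dynamic program over the (binarized) tree decomposition indexed by a bag, an independent subset of size at most~$\alpha$, and a color set, with the identical join recurrence ($C_Y\cup C_Z=C$, $C_Y\cap C_Z=\col(S)$, subtracting $w(S)$), the same $3^c$ count of admissible color splits, and the same grouping-by-trace trick at single-child bags that the paper uses to avoid an $n^{2\alpha}$ blow-up. The only cosmetic difference is that you require the colors used to be \emph{exactly}~$C$ rather than a subset of~$C$, which changes nothing substantive.
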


\begin{proof}
  Let \(G=(V,E)\)~be a graph
  and let \((\mathcal X,T)\)
  be a tree decomposition
  such that each bag of~\(\mathcal X\)
  contains at most
  \(\alpha\)~pairwise nonadjacent vertices.
  Root $\mathcal{T}$ at an arbitrary bag~\(R\in\mathcal X\).
  For each bag~\(X\in\mathcal X\),
  let \(V_X\subseteq V\)~denote the set
  of vertices in~\(X\) and its descendant bags
  in~\(T\).
  Moreover,
  let \(G_X:=G[V_X]\)
  and \(\col(S):=\{\col(v)\mid v\in S\}\).

  For each bag~\(X\in\mathcal X\),
  each \(S\subseteq X\),
  each subset~\(C\subseteq\{1,2,\ldots,c\}\)
  of colors,
  let \(T[X,C,S]\)~denote
  the maximum weight
  of an
  independent set~\(I\) of~\(G_X\)
  with \(I\cap X=S\)
  whose vertices have
  pairwise distinct colors from~\(C\).
  Then,
  by \cref{def:td}\eqref{td1},  
  the solution to \MCIS{}
  is \(\max_{S\subseteq R} T[R,\{1,2,\ldots,c\},S].\)
  We now show how to compute it
  in \(O(3^cn^{\alpha+1})\)~time,
  considering each bag type separately.

  \paragraph{Infeasible solutions.}
  Since
  no bag has an independent set
  with more than \(\alpha\)~vertices,
  for each \(X\in\mathcal X\),
  each \(S\subseteq X\),
  and each \(C\subseteq\{1,2,\ldots,c\}\),
  we have
  \(T[X,C,S]=-\infty\)
  if \(|S|>\alpha\),
  or
  if \(S\)~is not independent,
  or if the vertices of~\(S\)
  do not have pairwise distinct colors from~\(C\).

  In the following,
  we will thus assume that
  \(S\subseteq X\)~is always an independent set and
  that its vertices have pairwise distinct colors
  from~\(C\subseteq\{1,2,\ldots,c\}\).
  
  \paragraph{Bags without children.}
  For each bag~\(X\in\mathcal X\)
  without children,
  \begin{gather}
    \label{eq:nochildren}
    T[X,C,S]=w(S).
  \end{gather}
  \paragraph{Bags with a single child.}
  Consider a bag~\(X\in\mathcal X\)
  that has one child~\(Y\).
  We show that
  \begin{gather}
    \label{eq:onechild}
    \begin{split}
    T&[X,C,S]=\\
     &\max
    \begin{Bmatrix*}[l]
      T[Y,C\setminus\col(S\setminus S'),S']+w(S\setminus S')\\
      \text{ for each independent set }S'\subseteq Y\\
      \text{\quad such that }S\cap X\cap Y=S'\cap X\cap Y
  \end{Bmatrix*}.
\end{split}
  \end{gather}
 Herein,
  the condition \(S\cap X\cap Y=S'\cap X\cap Y\)
  ensure that
  independent sets for~\(G_X\)
  and~\(G_Y\)
  agree on the vertices
  that are both in the bag~\(X\) and its child~\(Y\).
  Next, we show validity of~\eqref{eq:onechild}.
  
  ``\(\leq\)'': Consider a maximum\hyp weight
  independent set~\(I\) for~\(G_X\)
  with \(I\cap X=S\)
  whose vertices have pairwise distinct colors from~\(C\).
  Then \(I':=I\cap V_Y\)~is
  an independent set for~\(G_Y\).
  For \(S':=I'\cap Y\),
  we have that
  \(S'\cap X\cap Y=I'\cap X\cap Y=I\cap X\cap Y=S\cap X\cap Y\).
  Since \(I\setminus I'\subseteq X\)
  and,
  by \cref{def:td}\eqref{td3},
  \(I'\cap X\subseteq Y\),
  one has
  \(I\setminus I'=(I\cap X)\setminus (I'\cap X)=(I\cap X)\setminus (I'\cap X\cap Y)=(I\cap X)\setminus (I'\cap Y)=S\setminus S'\). 
  Thus,
  the vertices of~\(I'\)
  have pairwise distinct colors
  from~\(C':=C\setminus \col(S\setminus S')\)
  and
  \begin{align*}
    T[X,C,S]&=w(I)=w(I')+w(S\setminus S')\\
    &\leq T[Y,C\setminus\col(S\setminus S'),S']+w(S\setminus S').
  \end{align*}

  ``\(\geq\)'':
  Consider any independent set~\(S'\subseteq Y\)
  with \(S\cap X\cap Y=S'\cap X\cap Y\).
  Let \(I'\)~be
  a maximum\hyp weight independent set
  for~\(G_Y\)
  with
  \(I'\cap Y=S'\)
  whose vertices use pairwise distinct colors
  from~\(C':=C\setminus \col(S\setminus S')\).

  Then, \(I:=I'\cup S=I'\cup (S\setminus S')\)
  uses pairwise distinct colors in~\(C\).
  Moreover,
  by \cref{def:td}\eqref{td3},
  \(I'\cap X\subseteq V_Y\cap X\subseteq Y\)
  and, therefore, %
  \(I\cap X=(I'\cup S)\cap X=(I'\cap X)\cup(S\cap X)=(I'\cap X\cap Y)\cup S=(S'\cap X\cap Y)\cup S=(S\cap X\cap Y)\cup S=S\).
  
  We show that \(I\)~is independent.
  Towards a contradiction,
  assume that
  there is an edge~\(e=\{x,y\}\subseteq I=I'\cup S\).
  Since \(e\nsubseteq I'\) and \(e\nsubseteq S\),
  we have
  \(x\in S\setminus I'\subseteq X\) and
  \(y\in I'\setminus S\subseteq V_Y\).
  Since
  \(x\notin I'\supseteq  S'\cap X\cap Y=S\cap X\cap Y\)
  but \(x\in S=I\cap X\), we get \(x\notin Y\).
  Consequently,
  \(x\notin V_Y\) by \cref{def:td}\eqref{td3}
  and \(\{x,y\}\subseteq X\)
  by \cref{def:td}\eqref{td2}.
  Thus, we showed that
  \(\{x,y\}\subseteq X\cap I=S\),
  a contradiction.

  Finally
  since \(I'\cap X\subseteq I'\cap Y = S'\),
  sets~\(I'\) and \(S\setminus S'\) are disjoint
  and  we get
  \[
    T[X,C,S]\geq w(I)=w(I')+w(S\setminus S')=T[Y,C',S']+w(S\setminus S').
  \]
  
  \paragraph{Bags with two children.}
  For each bag~\(X\in\mathcal X\)
  with two children~\(Y=Z=X\),
  we show that
  \begin{gather}
    \label{eq:twochildren}
    \begin{split}
      T&[X,C,S]=\\
      &\max
      \begin{Bmatrix*}[l]
        T[Y,C_Y,S]+T[Z,C_Z,S]-w(S)\\
        \text{\quad such that } C_Y\cup C_Z=C\\
        \text{\qquad and }C_Y\cap C_Z=\col(S)
      \end{Bmatrix*}
      .
    \end{split}
  \end{gather}
  
  ``\(\leq\)'': Let \(I\)~be a maximum\hyp weight
  independent set for~\(G_X\) with \(X\cap I=S\)
  whose vertices use
  pairwise distinct colors
  from~\(C\).
  Then,
    \(I_Y:=I\cap V_Y\) and \(I_Z:=I\cap V_Z\)
  are independent sets for~\(G_Y\)
  and~\(G_Z\),
  respectively.
  One has
  \(Y\cap I_Y=Y\cap I\cap V_Y=Y\cap I=X\cap I=S\)
  and, likewise,
  \(Z\cap I_Z=S\).

  By \cref{def:td}\eqref{td3},
  one has \(I_Y\cap I_Z\subseteq X\)
  and thus \(I_Y\cap I_Z=I_Y\cap I_Z\cap X=S\).
  Since \(I\)~contains only one vertex
  of each color from~\(C\),
  we also get \(\col(I_Y)\cap\col(I_Z)=\col(S)\).
  Thus,
  the vertices of~\(I_Y\)
  have pairwise distinct colors
  in~\(C_Y:=\col(I_Y)\)
  and
  the vertices of~\(I_Z\)
  have pairwise distinct
  colors in~\(C_Z:=(C\setminus C_Y)\cup\col(S)\).
  One has \(C_Y\cup C_Z=C\) and \(C_Y\cap C_Z=\col(S)\).
  Thus,
  \begin{align*}
    T[X,C,S]=w(I)&=w(I_Y)+w(I_Z)-w(I_Y\cap I_Z)\\
                 &=w(I_Y)+w(I_Z)-w(S)\\
                 &\leq T[Y,C_Y,S]+T[Z,C_Z,S]-w(S).
  \end{align*}

  ``\(\geq\)'':
  Let \(C_Y\)~and \(C_Z\)~be
  color sets such that
  \(C_Y\cap C_Z=\col(S)\)
  and \(C_Y\cup C_Z=C\).
  Let \(I_Y\) and \(I_Z\)~be
  maximum\hyp weight independent sets
  for~\(G_Y\) and~\(G_Z\)
  with \(I_Y\cap Y=I_Z\cap Z=S\)
  and whose vertices use pairwise distinct
  colors from~\(C_Y\) and~\(C_Z\), respectively.
  Then, for
  \(I:=I_Y\cup I_Z\),
  one has
  \(I\cap X=(I_Y\cup I_Z)\cap X=(I_Y\cap X)\cup(I_Z\cap X)=S\).

  We show that \(I\)~is
  an independent set.
  Assume towards a contradiction that there is an
  edge~\(e=\{y,z\}\subseteq I=I_Y\cup I_Z\).
  Since \(e\nsubseteq I_Y\) and \(e\nsubseteq I_Z\),
  we have \(y\in I_Y\) and \(z\in I_Z\).
  \cref{def:td}\eqref{td3}
  gives \(V_Y\cap V_Z\subseteq X\).
  Thus,
  by \cref{def:td}\eqref{td2},
  \(e\subseteq V_Z\) or~\(e\subseteq V_Y\).
  By symmetry,
  assume that \(e\in V_Z\).
  Then, \(y\in V_Y\cap V_Z\subseteq X\).
  Since \(y\in X\cap I_Y=X\cap I_Z\),
  it follows that \(\{y,z\}\subseteq I_Z\),
  a contradiction to \(I_Z\)~being an independent set.

  We show that the
  colors of the vertices in~\(I\)
  are pairwise distinct.
  Let there be two vertices~\(\{u,v\}\subseteq I\)
  with \(\col(u)=\col(v)=c^*\).
  Then \(c^*\in\col(S)\)
  since \(C_Y\cap C_Z=\col(S)\).
  Thus,
  there is a vertex \(w\in S\subseteq I_Y\cap I_Z\)
  with \(\col(w)=c^*\).
  Since \(I_Y\) and \(I_Z\)
  contain only one vertex of each color, it holds that~$w = u = v$.
  Hence~$I$ only contains one vertex with color~$c^*$.
  Finally,
  \cref{def:td}\eqref{td3}
  yields \(I_Y\cap I_Z\subseteq X\).
  Thus,
  \begin{align*}
    T[X,C,S]\geq w(I)&=w(I_Y)+w(I_Z)-w(I_Y\cap I_Z)\\
                     &=w(I_Y)+w(I_Z)-w(I_Y\cap I_Z\cap X)\\
                     &=w(I_Y)+w(I_Z)-w(S)\\
                     &=T[Y,C_Y,S]+T[Z,C_Z,S]-w(S).
  \end{align*}

  \paragraph{The algorithm and its running time.}
  There are \(O(n)\)~bags.
  The algorithm first computes
  \eqref{eq:nochildren} for all leaf bags.
  Then,
  for each bag
  whose children have been processed,
  it computes
  \eqref{eq:onechild} and \eqref{eq:twochildren}.

  For each leaf bag~\(X\),
  we compute \eqref{eq:nochildren} as follows.
  First, for each
  of the at most \(n^\alpha\) subsets
  \(S\subseteq X\) with \(|S|\leq\alpha\),
  we once compute \(w(S)\)
  and check the independence of~\(S\)
  in \(O(\alpha^2)\)~time.
  We then use this information
  in the computation for all subsets
  \(C\subseteq\{1,2,\ldots,c\}\).
  Thus,
  all leaf nodes are processed in
  \(O(n\cdot (2^c+\alpha^2\cdot n^{\alpha}))\)~time.

  For each bag~\(X\in\mathcal X\)
  with two children,
  we compute \eqref{eq:twochildren} as follows.
  First, for each
  of the at most \(n^\alpha\) subsets
  \(S\subseteq X\) with \(|S|\leq\alpha\),
  we once compute \(w(S)\)
  and check the independence of~\(S\)
  in \(O(\alpha^2)\)~time.
  We then use this information
  in the computation for all subsets
  \(C\subseteq\{1,2,\ldots,c\}\).
  To this end,
  iterate over all subsets~\(C_Y\subseteq C\),
  whereby \(C_Z\)~can be computed from~\(C_Y\)
  in \(O(c)\)~time.
  Note that,
  in total,
  one iterates over at most \(3^c\)~subsets
  \(C_Y\subseteq C\subseteq \{1,2,\ldots,c\}\):
  each color is
  either not in~\(C\),
  or in \(C\) but not in~\(C_Y\),
  or in~\(C_Y\).
  Thus,
  bags with two children
  can be processed in total time
  \(O(n\cdot (3^cc+\alpha^2\cdot n^{\alpha}))\).
  
  For each bag~\(X\in\mathcal X\)
  with a single child~\(Y\),
  we compute \eqref{eq:onechild} as follows.
  First, for each
  of the at most \(n^\alpha\) subsets
  \(S'\subseteq Y\) with \(|S'|\leq\alpha\),
  we once compute the intersection \(S'\cap X\) and
  \(w(S'\cap X)\)
  in \(O(\alpha^2)\)~time.
  During this computation,
  for each encountered intersection~\(S^*\)
  and color set \(C\subseteq\{1,2,\ldots,c\}\),
  remember
  the maximum value \(T_{S^*}[Y,C]\)
  taken by \(T[Y,C,S']\) for any
  \(S'\subseteq Y\) with \(S^*=S'\cap X\).
  This works in \(O(2^c\cdot\alpha)\)~time.
  We then iterate
  over
  all subsets
  \(C\subseteq\{1,2,\ldots,c\}\),
  each of the encountered intersections~\(S^*\subseteq X\cap Y\),
  and each independent set~\(S\) with \(S^*\subseteq S\subseteq X\)
  and
  compute
  $T[Y,C\setminus\col(S\setminus S'),S']$
  as the maximum value
  $T_{S^*}[C\setminus\col(S\setminus S^*)]-w(S\setminus S^*)$
  encountered for any intersection~\(S^*\) that yielded~\(S\).
  For each~$\alpha' \leq \alpha$ and each of \(n^{\alpha'}\)~intersections~\(S^*\)
  of size exactly~\(\alpha'\),
  we enumerate at most
  \(n^{\alpha-\alpha'}\)~independent sets~\(S\)
  such that \(S^*\subseteq S\subseteq X\).
  Thus,
  bags with a single child can be processed in
  \(O(n\cdot (\alpha^2+2^c\alpha +2^c n^{\alpha}))\)~total time.
  
  Altogether, the overall algorithm runs in
  \(O(3^cn^{\alpha+1})\)~time.
  \qed
\end{proof}

\section{Parameterized complexity on other graph classes}
\label{sec:related}

In \cref{sec:ishard},
we have shown that
\IS{} is W[1]-hard parameterized by the solution size
even on 2\hyp simplicial 3-minoes.
In contrast,
in \cref{sec:iseasy},
we have shown that
\McCS{} is \fp{} tractable parameterized
by the solution size on
cluster\({}\bowtie{}\)chordal graphs.

In this section,
we survey the parameterized complexity
of \McCS{}
on the neighboring graph classes
discussed in \cref{sec:classes},
thus completing
the computational complexity picture
given in \cref{fig:MIS}.
Most results in this section are known
or easy to obtain.
Yet since they are
scattered throughout the literature
using terminology from
quite diverse subject fields,
we summarize them here.

\paragraph{$K_{1,3}$-free graphs.}
\cref{sec:ishard}
has shown that \IS{} is W[1]-hard parameterized
by the solution size
on 3-minoes,
which are a proper subclass of $K_{1,4}$-free graphs.
This is complemented as follows.

\begin{proposition}
  \label[proposition]{sec:k13free}
  \label[proposition]{prop:cfhard}
  On \(K_{1,3}\)-free graphs,
  \begin{enumerate}[i)]
  \item\label{k13hard} \cCS{} is NP-hard for each fixed~$c \geq 3$ and
  \item\label{k13easy} a maximum\hyp weight \(c\)\hyp colorable subgraph
  with at most \(\ell\)~vertices
  is computable in \(2^{\ell+c}\cdot c^\ell\cdot \ell^{O(\log \ell)}\cdot n^3\log^2n\)~time.
\end{enumerate}
\end{proposition}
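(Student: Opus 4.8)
The plan is to prove the two parts by independent routes. For the NP-hardness in part~i), I would reduce from the problem of deciding whether a graph~\(G\) admits a proper \(c\)-edge-coloring, which is NP-complete for every fixed~\(c\geq 3\) (Holyer for \(c=3\) already on cubic graphs; Leven and Galil for all~\(c\geq 3\) on \(c\)-regular graphs). The reduction maps~\(G\) to its line graph~\(H=L(G)\), whose vertices are the edges of~\(G\), two of them being adjacent exactly when the corresponding edges share an endpoint. Line graphs are \(K_{1,3}\)-free: among any three neighbors of a vertex~\(e=\{u,v\}\) of~\(H\), two must meet the same endpoint of~\(e\) and are hence adjacent in~\(H\), so no vertex can be the center of an induced~\(K_{1,3}\).

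Correctness rests on the observation that for any vertex subset~\(S\subseteq V(H)\)---equivalently, an edge subset of~\(G\)---the induced subgraph~\(H[S]\) is \(c\)-colorable if and only if the edges in~\(S\) decompose into~\(c\) matchings, i.e.\ \(S\) is properly \(c\)-edge-colorable in~\(G\). Hence \(H\) has a \(c\)-colorable subgraph on all \(|V(H)|=|E(G)|\) vertices if and only if the whole graph~\(H\) is \(c\)-colorable, which happens precisely when~\(G\) is \(c\)-edge-colorable. Since \(H\) is \(K_{1,3}\)-free and polynomial-time computable, the decision version of \cCS{} (asking for a \(c\)-colorable subgraph of prescribed size) is NP-hard on \(K_{1,3}\)-free graphs for each fixed~\(c\geq 3\). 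I expect the only real work in this part to be invoking the chromatic-index hardness results correctly.

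For part~ii), the plan is simply to instantiate \cref{lem:mccs-to-is} with the hereditary class of \(K_{1,3}\)-free graphs, taking~\(t(\ell,n)=O(n^3)\) for computing a maximum-weight independent set of size at most~\(\ell\), on the basis of the \(O(n^3)\)-time weighted stable set algorithm of \citet{FOS11}; \cref{lem:mccs-to-is} then outputs exactly the stated bound \(2^{\ell+c}\cdot c^\ell\cdot\ell^{O(\log\ell)}\cdot n^3\log^2 n\). The one subtle point---and the main obstacle of this part---is that \cref{lem:mccs-to-is} requires the cardinality-constrained subroutine ``maximum-weight independent set of size at most~\(\ell\)'', whereas \citet{FOS11} solve the unconstrained problem; I would therefore argue separately that the cardinality-bounded weighted stable set problem is still solvable in~\(O(n^3)\) time on \(K_{1,3}\)-free graphs (e.g.\ via fixed-cardinality stable set techniques for claw-free graphs), after which heredity of claw-freeness together with \cref{lem:mccs-to-is} yields the theorem.
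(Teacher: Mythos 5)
Your proposal is correct and follows essentially the same route as the paper: part~i) via the NP-completeness of \(c\)-edge-coloring and the \(K_{1,3}\)-freeness of line graphs, and part~ii) by plugging the \(O(n^3)\)-time weighted stable set algorithm of \citet{FOS11} into \cref{lem:mccs-to-is} using heredity. You are in fact slightly more careful than the paper, which cites \citet{FOS11} for the unconstrained problem without commenting on the cardinality bound that \cref{lem:mccs-to-is} formally requires of its subroutine.
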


\begin{proof}
  \cref{prop:cfhard}\eqref{k13hard}
  follows from the fact that
  checking 3\hyp colorability
  of line graphs
  is NP\hyp complete,
  where line graphs are $K_{1,3}$-free.

\cref{prop:cfhard}\eqref{k13easy}
follows directly from \cref{lem:mccs-to-is}
using the fact
that \(K_{1,3}\)-free graphs
are a hereditary graph class
and that
\MIS{} is solvable in \(O(n^3)\)~time 
on \(K_{1,3}\)-free graphs \citep{FOS11}.
  \qed
\end{proof}

\noindent
Note that \(c\leq\ell\) holds in all nontrivial cases.
Thus,
\cref{prop:cfhard}
shows that
\McCS{} is \fp{} tractable
parameterized by~\(\ell\)
on \(K_{1,3}\)-free graphs.
The complexity
for the case \(c=2\)
seems to be open.

\paragraph{Cluster\({}\bowtie{}\)interval graphs.}
These graphs form a subclass of cluster\({}\bowtie{}\)chordal
graphs
and therefore \cref{thm:iseasy} applies to them.
Yet for cluster\({}\bowtie{}\)interval graphs,
one can give a slightly better running time.

\begin{proposition}
  \label[proposition]{sec:cluster-interval}
  On cluster\({}\bowtie{}\)interval graphs,
  \begin{enumerate}[i)]
  \item\label{ci1} \MIS{} is NP-hard and
  \item\label{ci2} a maximum\hyp weight \(c\)\hyp colorable subgraph
  with at most \(\ell\)~vertices
  is computable in \(2^{\ell+c}\cdot (c+e+2)^{\ell}\cdot \ell^{O(\log\ell)}\cdot (n+m)\cdot \log^3 n\)~time
if the decomposition of the input graph
into a cluster graph and an interval graph is given.
\end{enumerate}
Herein, \(e\)~is Euler's number.
\end{proposition}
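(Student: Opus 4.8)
The plan is to handle the two parts by leaning entirely on machinery already in place. Part~\eqref{ci1} is a classical hardness statement that I would derive from the \JISP{} problem, and part~\eqref{ci2} I would obtain by re-running the three-step proof of \cref{thm:iseasy} with a single ingredient swapped out: the chordal \MCIS{} routine of \cref{lem:mcis-chordal} replaced by a faster one tailored to interval graphs.

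For part~\eqref{ci1}, I would realize cluster\({}\bowtie{}\)interval graphs as the conflict graphs of \JISP{} instances. Given jobs that each offer several candidate time intervals, introduce one vertex per (job, interval) pair, connect two vertices by a ``same-job'' edge whenever they stem from the same job, and by an ``overlap'' edge whenever their intervals intersect. The same-job edges form a disjoint union of cliques, hence a cluster graph, and the overlap edges form an interval graph, so the conflict graph lies in cluster\({}\bowtie{}\)interval. A maximum independent set then selects at most one interval per job with no two chosen intervals overlapping, that is, an optimal \JISP{} solution. Since \JISP{} is NP-hard \citep{NH82,HK06}, so is (already unweighted) \IS{}, and a fortiori \MIS{}, on cluster\({}\bowtie{}\)interval graphs; this nicely complements the \fp{} tractability established by \citet{BMNW15} for the very same class.

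For part~\eqref{ci2}, I would copy the proof of \cref{thm:iseasy} step for step: \cref{lem:mccs-to-is} reduces \McCS{} to \MIS{}, and \cref{lem:clu-to-co} reduces \MIS{} on cluster\({}\bowtie{}\)interval graphs to \MCIS{} on interval graphs. The only new ingredient is an \MCIS{} algorithm for interval graphs running in \(O(2^c(n+m))\)~time, replacing the \(O(3^c n^2)\)~time of \cref{lem:mcis-chordal}. I would obtain it by specializing the dynamic program of \cref{lem:td} (with \(\alpha=1\)) to the clique-\emph{path} decomposition that interval graphs admit: a path decomposition has no branching bags, so the two-children recurrence~\eqref{eq:twochildren}---the sole source of the factor \(3^c\)---never applies, and only the single-child recurrence~\eqref{eq:onechild} with its factor \(2^c\) survives. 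Exploiting that each interval occupies a contiguous range of bags, the table can be maintained incrementally along a sweep over the \(2n\)~endpoints, so that the generic \(O(n^2)\) bound (\(O(n)\)~bags times \(O(n)\)~single-vertex subsets) drops to \(O(n+m)\); alternatively one may cite the interval \MCIS{} algorithm of \citet{HK06} directly. Feeding this subroutine through the bookkeeping of \cref{thm:iseasy}, the exponential base \(c+e+3\)---whose summand \(3\) is precisely the per-color branching of the chordal program---becomes \(c+e+2\), the polynomial factor \(n^2\) becomes \(n+m\), and the remaining factors \(2^{\ell+c}\), \(\ell^{O(\log\ell)}\), and \(\log^3 n\) are left unchanged.

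The main obstacle is the linear-time claim for interval \MCIS{}. Trading \(3^c\) for \(2^c\) is automatic once branching bags are ruled out, but pushing the polynomial factor from \(n^2\) to \(n+m\) needs the incremental sweepline implementation together with an amortized argument that each of the \(2n\)~endpoint events costs only \(O(2^c)\) time as intervals enter and leave the active clique. A secondary, purely bookkeeping obstacle is to confirm that the interval program's factor \(2\) enters the combined exponential base \emph{additively}---exactly as the factor \(3\) does in \cref{thm:iseasy}---so that the advertised \((c+e+2)^{\ell}\) really results rather than a multiplicative term.
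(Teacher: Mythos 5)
Your proposal is correct and follows essentially the same route as the paper: part~\eqref{ci1} is exactly the \citet{HK06} hardness result (which the paper simply cites and you re-derive via the standard \JISP{} conflict-graph construction), and part~\eqref{ci2} is the identical pipeline of \cref{lem:mccs-to-is,lem:clu-to-co} fed with an \(O(2^c\cdot(n+m))\)-time \MCIS{} subroutine for interval graphs. For that subroutine the paper just invokes the \(O(2^c\cdot n)\)-time algorithm of \citet{HK06} on a sorted interval representation (computable in \(O(n+m)\)~time), which is precisely the fallback you name, so your sweepline re-derivation from the path-decomposition specialization of \cref{lem:td} is optional extra work rather than a needed new idea.
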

\noindent
\cref{sec:cluster-interval}\eqref{ci1}
is due to \citet{HK06}.

Towards proving
\cref{sec:cluster-interval}\eqref{ci2},
\citet{BMNW15} noted that
\citet{HK06} showed that
a maximum\hyp weight independent set
of vertices of pairwise distinct colors
in an interval graph whose vertices
are colored in \(c\)~colors
is computable in \(O(2^c\cdot n)\)~time
if an interval representation
with sorted intervals is given,
which can be computed in \(O(n+m)\)~time.
Since cluster\({}\bowtie{}\)interval graphs
are a hereditary graph class,
\cref{sec:cluster-interval}\eqref{ci2}
follows from
this result of \citet{HK06} and
\cref{lem:mccs-to-is,lem:clu-to-co}.

\paragraph{Chordal graphs.}
\cref{thm:ishard} shows that \IS{}
is W[1]-hard parameterized by the solution size
on inductive \ind{2} graphs.
Chordal graphs are
exactly the inductive \ind{1} graphs
\citep{BP93,YB12}.
\begin{proposition}\label[proposition]{sec:chordal}
  In chordal graphs, \McCS{} is
  \begin{enumerate}[i)]
  \item\label{chord1} NP-hard,
  \item\label{chord3} polynomial\hyp time solvable for each fixed~\(c\) yet W[2]-hard parameterized by~\(c\), and

  \item\label{chord2} a maximum\hyp weight \(c\)\hyp colorable subgraph
  with at most \(\ell\)~vertices
  is computable in \(2^{\ell+c}\cdot c^{\ell}\cdot \ell^{O(\log\ell)}\cdot (n+m)\cdot \log^3 n\)~time.
\end{enumerate}
\end{proposition}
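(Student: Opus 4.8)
The plan is to prove the three claims separately, throughout exploiting that chordal graphs are perfect: an induced subgraph $G[S]$ of a chordal graph is $c$-colorable if and only if $\omega(G[S])\le c$. Hence \McCS{} on chordal graphs amounts to selecting a maximum-weight vertex set $S$ whose largest clique has at most $c$ vertices.

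For the NP-hardness and the W[2]-hardness parameterized by~$c$, I would give a single reduction from \textsc{Set Cover} (equivalently \textsc{Dominating Set}), which is NP-hard and W[2]-hard parameterized by the number of chosen sets. The target would be a split graph---already chordal---set up so that the colour budget~$c$ equals the number of sets in a cover; since the transformation runs in polynomial time and is a parameterized reduction with $c$ a function of the source parameter only, it yields NP-hardness (for unbounded $c$, hence even for the unweighted \cCS{}) and W[2]-hardness (with parameter~$c$) at once. A statement of this form for the maximum colorable induced subgraph problem on chordal and perfect graphs is known and closely related to the setting studied by \citet{MPR+13}, so I would import such a construction and merely verify that the colour budget stays tied to the covering number.

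For the remaining half of the second claim, polynomial-time solvability for every fixed~$c$, I would run a bottom-up dynamic program over a clique tree of the chordal graph, that is, a tree decomposition each of whose bags induces a maximal clique (see \cref{def:td} and the ensuing remarks). Since every clique of $G$ lies inside some bag, a vertex set $S$ satisfies $\omega(G[S])\le c$ exactly when $|S\cap X|\le c$ for every bag~$X$. Thus it suffices, per bag~$X$, to remember the selected subset $S\cap X$ among the $\sum_{i\le c}\binom{|X|}{i}=O(n^{c})$ subsets of size at most~$c$, propagating weights along the tree in the usual way; this runs in $n^{O(c)}$ time and therefore in polynomial time for each fixed~$c$, with vertex weights handled verbatim.

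The third claim is then immediate from \cref{lem:mccs-to-is}: chordal graphs form a hereditary class on which \MIS{} is solvable in $O(n+m)$~time once a perfect elimination ordering is at hand (itself obtainable in $O(n+m)$~time by lexicographic BFS). Instantiating \cref{lem:mccs-to-is} with $t(\ell,n)=O(n+m)$ gives the stated bound, the surviving logarithmic factors stemming from the universal-set construction and the repeated extraction of colour classes. I expect the only real difficulty to lie in the hardness reduction of the second paragraph---forcing the output graph to remain chordal while binding the number of colours to the covering number---whereas the clique-tree program and the invocation of \cref{lem:mccs-to-is} are routine.
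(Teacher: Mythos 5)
Your proposal is correct, and for part~iii) it coincides with the paper's proof: the paper likewise instantiates \cref{lem:mccs-to-is} with the fact that \MIS{} is linear-time solvable on chordal graphs via a perfect elimination ordering \citep{Fra75}, which is exactly your $t(\ell,n)=O(n+m)$. The difference lies in parts~i) and~ii): the paper proves nothing here and simply cites \citet{YG87} for both the hardness results and the polynomial-time algorithm for fixed~$c$, whereas you reconstruct self-contained arguments. Your two ingredients are sound and essentially reproduce the cited results: the reduction from \textsc{Set Cover}/\textsc{Dominating Set} into a split graph with colour budget tied to the covering number is the standard way to get NP-hardness and, read as a parameterized reduction, W[2]-hardness in~$c$ (a reading the paper also relies on implicitly, since \citet{YG87} predates parameterized complexity); and your clique-tree dynamic program, justified by the observation that for a chordal graph $G[S]$ is $c$-colorable iff $\omega(G[S])\le c$ iff $|S\cap X|\le c$ for every bag~$X$ of a clique tree, is a correct $n^{O(c)}$-time algorithm. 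What your route buys is an actual proof where the paper offers only a pointer; what it costs is that the hardness construction is still only sketched (``I would import such a construction''), so at the level of detail given, neither your text nor the paper's contains a fully worked reduction. One cosmetic remark: plugging $t(\ell,n)=O(n+m)$ into \cref{lem:mccs-to-is} yields a $\log^2 n$ factor rather than the $\log^3 n$ stated in the proposition; the stated bound is simply not tight, and your accounting is the sharper one.
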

\noindent
\cref{sec:chordal}\eqref{chord1} and \eqref{chord3}
were shown by \citet{YG87}.
\cref{sec:chordal}\eqref{chord2}
follows from the fact that
\MIS{} is linear\hyp time solvable in chordal graphs \citep{Fra75}
and using \cref{lem:mccs-to-is}.
This was shown for the unweighted variant
by \citet{MPR+13}
and our \cref{lem:mccs-to-is} is,
essentially, just a slightly adapted weighted variant
of this result.

\paragraph{Interval graphs.}
\citet{YG87} noted that
\McCS{} in interval graphs
is solvable
in polynomial time
by modelling it as a
totally unimodular
linear program.

Using a flow formulation,
\citet{AS87} proved that the
equivalent problem
of scheduling a maximum\hyp weight subset
of intervals
on \(c\)~parallel identical machines
such that the intervals on each machine
are pairwise nonintersecting
is solvable in \(O(n^2\log n)\)~time.
Since the interval representation
of an interval graph is computable
in the same time \citep{COS09},
one gets the following result.

\begin{proposition}
  \label[proposition]{sec:interval}
  \McCS{} is solvable in \(O(n^2\log n)\)~time on interval graphs.
\end{proposition}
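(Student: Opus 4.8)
The plan is to reduce \McCS{} on interval graphs to the machine-scheduling problem solved by \citet{AS87}, and to argue that the reduction, together with the preprocessing, stays within the claimed time bound. First I would fix an interval representation of the input graph~$G$: by \citet{COS09}, such a representation with sorted endpoints is computable in $O(n^2\log n)$~time, which already matches the target running time. It therefore suffices to solve \McCS{} in $O(n^2\log n)$~time \emph{given} this representation.

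The heart of the argument is an equivalence between $c$-colorability of induced subgraphs and schedulability on $c$ machines. I would establish that, for any vertex subset~$S\subseteq V$, the induced subgraph~$G[S]$ is $c$-colorable if and only if no point of the real line is covered by more than $c$ of the intervals corresponding to the vertices in~$S$. This follows because interval graphs are perfect, so the chromatic number of~$G[S]$ equals its clique number on every induced subgraph; moreover, by the Helly property of intervals, a clique in an interval graph corresponds exactly to a set of intervals sharing a common point. Hence $\chi(G[S])=\omega(G[S])$ equals the maximum number of intervals of~$S$ overlapping at a single point, and $G[S]$ admits a proper $c$-coloring precisely when the intervals of~$S$ can be partitioned into $c$ classes of pairwise disjoint intervals, that is, when they can be scheduled on $c$ parallel identical machines with no two intervals on the same machine overlapping.

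With this equivalence in hand, finding a maximum\hyp weight $c$-colorable induced subgraph becomes exactly the problem of selecting a maximum\hyp weight subset of the given intervals that can be scheduled on $c$~machines. I would then invoke \citet{AS87}, who solve this selection problem via a (minimum-cost) flow formulation in $O(n^2\log n)$~time, recovering the chosen subset~$S$ and hence the sought solution. Combining the $O(n^2\log n)$ representation step with the $O(n^2\log n)$ flow-based selection step yields the overall $O(n^2\log n)$ bound.

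I expect the main obstacle to be the careful justification of the colorability--schedulability equivalence rather than any algorithmic subtlety: one must simultaneously use perfectness of interval graphs (so that chromatic number equals clique number on each induced subgraph) and the Helly property (so that cliques become common-point overlaps) in order to translate the purely combinatorial coloring condition into the geometric ``at most $c$ intervals overlap at any point'' condition that the flow model of \citet{AS87} actually captures. Once this translation is in place, the remainder is bookkeeping about running times and citing the two external results.
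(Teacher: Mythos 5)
Your proposal is correct and follows essentially the same route as the paper: compute an interval representation via \citet{COS09} and then invoke the $O(n^2\log n)$ flow-based algorithm of \citet{AS87} for selecting a maximum-weight subset of intervals schedulable on $c$ machines, the equivalence with finding a $c$-colorable induced subgraph being exactly the bridge the paper relies on. The only difference is that you spell out the colorability--schedulability equivalence (via perfectness and the Helly property) that the paper leaves implicit by simply calling the two problems equivalent.
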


\paragraph{Inductive \ind{k} graphs.}
Of course,
our hardness result of \cref{thm:ishard}
also holds for \IS{} in inductive \ind{k}
graphs.

On the positive side,
\citet{YB12} showed that \MIS{}
is polynomial-time \(k\)\hyp approximable
in inductive \ind{k} graphs.
The algorithm requires the
\emph{\indc{k} ordering}
of \cref{def:iki},
which cannot be computed efficiently
(cf. \cref{prop:indkind-hard}),
yet sometimes is given
by the application data \citep{AHT17,HT15}.

\section{Conclusion}
\label{sec:concl}
Motivated by recent work on wireless scheduling \citep{AHT17,HT15}, 
we performed 
an extensive analysis of problems related to independent sets
on graphs that lie between interval
graphs on the ``bottom end'' and inductive \ind{2} graphs on the 
``top end''. Some of our (computational hardness) 
results might be discouraging given
that,
in real-world scheduling applications,
one often has to expect
inductive $k$\hyp independent graphs with constant~$k>2$. 
This negative impression is alleviated by the following observations.

First, there are several tractability results for classic scheduling 
problems
that can be modeled as independent set problems at this 
fairly low level in the graph classes hierarchy.

Second, 
our parameterized complexity studies are basically focused on the 
parameter solution size. In the spirit of multivariate complexity 
analysis \citep{FJR13,KN12,Nie10} this encourages the studies 
of further and also ``combined'' parameterizations. Indeed, 
with real-world data at hand, one might first measure numerous parameter
values in the instance (say characteristics such as maximum degree, treewidth, 
degree distribution, feedback edge and vertex numbers) that, if small,
altogether might be exploited to get strong fixed\hyp parameter 
tractability results. %

Third,
we point out that combining approximation algorithms
(there are several results in the 
literature including \cite{AHT17,HT15}; and \citet{YB12})
with problem\hyp specific parameterizations
might lead to parameterized approximation algorithms
with improved approximation factors and/or running times
in relevant application scenarios, see \citet{BKS17} for a concrete example. 

We remark that our main positive result 
(fixed\hyp parameter tractability of finding maximum\hyp weight $c$\hyp colorable subgraphs
on cluster\({}\bowtie{}\)chordal graphs, \cref{thm:iseasy}) indeed also holds for the so far
seemingly neglected class of graphs with a given tree decomposition where
the independent sets in each bag have size at most~$\alpha$: it can be solved in 
$O(3^c\cdot n^{\alpha +1})$ time (\cref{lem:td}). 
We feel that this class of graphs might be 
of independent interest.

Finally, \cref{sec:classes} with its classification 
of graph classes between interval graphs and inductive \ind{3} graphs
might prove useful even for (graph\hyp algorithmic) studies not touching 
scheduling and 
independent set problems. On a different note, we advocate
that
parameterized complexity analysis may prove useful for further studies on
hard scheduling problems---so far the number of studies in this 
direction is fairly small and leaves numerous challenges for future 
work \citep{MBxx}.

\paragraph{Acknowledgement.}
We are grateful to Andreas Krebs (Tübingen) for fruitful discussions concerning parts of this work.
We thank the anonymous referees
of \emph{Journal of Scheduling}
for constructive feedback.

\bibliographystyle{spbasic}
\bibliography{iki}

\newcommand{\noopsort}[1]{}
\begin{thebibliography}{49}
\providecommand{\natexlab}[1]{#1}
\providecommand{\url}[1]{{#1}}
\providecommand{\urlprefix}{URL }
\expandafter\ifx\csname urlstyle\endcsname\relax
  \providecommand{\doi}[1]{DOI~\discretionary{}{}{}#1}\else
  \providecommand{\doi}{DOI~\discretionary{}{}{}\begingroup
  \urlstyle{rm}\Url}\fi
\providecommand{\eprint}[2][]{\url{#2}}

\bibitem[{Akcoglu et~al(2002)Akcoglu, Aspnes, DasGupta, and Kao}]{AADK02}
Akcoglu K, Aspnes J, DasGupta B, Kao MY (2002) Opportunity-cost algorithms for
  combinatorial auctions. In: Applied Optimization 74: Computational Methods in
  Decision-Making, Economics and Finance, Kluwer Academic Publishers, pp
  455--479, \doi{10.1007/978-1-4757-3613-7_23}

\bibitem[{Alon et~al(1995)Alon, Yuster, and Zwick}]{AYZ95}
Alon N, Yuster R, Zwick U (1995) Color-coding. Journal of the ACM
  42(4):844--856, \doi{10.1145/210332.210337}

\bibitem[{Arkin and Silverberg(1987)}]{AS87}
Arkin EM, Silverberg EB (1987) Scheduling jobs with fixed start and end times.
  Discrete Applied Mathematics 18(1):1--8, \doi{10.1016/0166-218X(87)90037-0}

\bibitem[{{\'{A}}sgeirsson et~al(2017){\'{A}}sgeirsson, Halld{\'{o}}rsson, and
  Tonoyan}]{AHT17}
{\'{A}}sgeirsson EI, Halld{\'{o}}rsson MM, Tonoyan T (2017) Universal framework
  for wireless scheduling problems. In: 44th International Colloquium on
  Automata, Languages, and Programming, {ICALP}, Schloss Dagstuhl -
  Leibniz-Zentrum fuer Informatik, pp 129:1--129:15,
  \doi{10.4230/LIPIcs.ICALP.2017.129}

\bibitem[{Baker and Coffman(1996)}]{BC96}
Baker BS, Coffman EG (1996) Mutual exclusion scheduling. Theoretical Computer
  Science 162(2):225--243, \doi{10.1016/0304-3975(96)00031-X}

\bibitem[{Bansal et~al(2004)Bansal, Blum, and Chawla}]{BBC04}
Bansal N, Blum A, Chawla S (2004) Correlation clustering. Machine Learning
  56(1):89--113, \doi{10.1023/B:MACH.0000033116.57574.95}

\bibitem[{{\noopsort{Bevern}van Bevern} et~al(2015){\noopsort{Bevern}van
  Bevern}, Mnich, Niedermeier, and Weller}]{BMNW15}
{\noopsort{Bevern}van Bevern} R, Mnich M, Niedermeier R, Weller M (2015)
  Interval scheduling and colorful independent sets. Journal of Scheduling
  18:449--469, \doi{10.1007/s10951-014-0398-5}

\bibitem[{{\noopsort{Bevern}van Bevern} et~al(2017){\noopsort{Bevern}van
  Bevern}, Komusiewicz, and Sorge}]{BKS17}
{\noopsort{Bevern}van Bevern} R, Komusiewicz C, Sorge M (2017) A parameterized
  approximation algorithm for the mixed and windy capacitated arc routing
  problem: Theory and experiments. Networks 70(3):262--278,
  \doi{10.1002/net.21742}, {WARP} 2 special issue

\bibitem[{Birand et~al(2012)Birand, Chudnovsky, Ries, Seymour, Zussman, and
  Zwols}]{BCR+12}
Birand B, Chudnovsky M, Ries B, Seymour P, Zussman G, Zwols Y (2012) Analyzing
  the performance of greedy maximal scheduling via local pooling and graph
  theory. IEEE/ACM Transactions on Networks 20(1):163--176,
  \doi{10.1109/TNET.2011.2157831}

\bibitem[{Blair and Peyton(1993)}]{BP93}
Blair JRS, Peyton B (1993) An introduction to chordal graphs and clique trees.
  In: Graph Theory and Sparse Matrix Computation, IMA Volumes in Mathematics
  and its Applications, vol~56, Springer, pp 1--29,
  \doi{10.1007/978-1-4613-8369-7_1}

\bibitem[{Booth and Lueker(1976)}]{BL76}
Booth KS, Lueker GS (1976) Testing for the consecutive ones property, interval
  graphs, and graph planarity using {PQ}-tree algorithms. Journal of Computer
  and System Sciences 13(3):335--379, \doi{10.1016/S0022-0000(76)80045-1}

\bibitem[{Corneil et~al(2009)Corneil, Olariu, and Stewart}]{COS09}
Corneil DG, Olariu S, Stewart L (2009) The {LBFS} structure and recognition of
  interval graphs. SIAM Journal on Discrete Mathematics 23(4):1905--1953

\bibitem[{Cygan et~al(2015)Cygan, Fomin, Kowalik, Lokshtanov, Marx, Pilipczuk,
  Pilipczuk, and Saurabh}]{CFK+15}
Cygan M, Fomin FV, Kowalik L, Lokshtanov D, Marx D, Pilipczuk M, Pilipczuk M,
  Saurabh S (2015) Parameterized Algorithms. Springer,
  \doi{10.1007/978-3-319-21275-3}

\bibitem[{Dirac(1961)}]{Dir61}
Dirac GA (1961) On rigid circuit graphs. Abhandlungen aus dem Mathematischen
  Seminar der Universit{\"a}t Hamburg 25(1):71--76, \doi{10.1007/BF02992776}

\bibitem[{Downey and Fellows(2013)}]{DF13}
Downey RG, Fellows MR (2013) Fundamentals of Parameterized Complexity.
  Springer, \doi{10.1007/978-1-4471-5559-1}

\bibitem[{Faenza et~al(2011)Faenza, Oriolo, and Stauffer}]{FOS11}
Faenza Y, Oriolo G, Stauffer G (2011) An algorithmic decomposition of claw-free
  graphs leading to an {$O(n^3)$}-algorithm for the weighted stable set
  problem. In: Proceedings of the 22nd ACM-SIAM Symposium on Discrete
  Algorithms (SODA'11), SIAM, pp 630--646, \doi{10.1137/1.9781611973082.49}

\bibitem[{Fellows et~al(2009)Fellows, Hermelin, Rosamond, and
  Vialette}]{FHRV09}
Fellows MR, Hermelin D, Rosamond F, Vialette S (2009) On the parameterized
  complexity of multiple-interval graph problems. Theoretical Computer Science
  410(1):53--61, \doi{10.1016/j.tcs.2008.09.065}

\bibitem[{Fellows et~al(2013)Fellows, Jansen, and Rosamond}]{FJR13}
Fellows MR, Jansen BMP, Rosamond FA (2013) Towards fully multivariate
  algorithmics: Parameter ecology and the deconstruction of computational
  complexity. European Journal of Combinatorics 34(3):541--566,
  \doi{10.1016/j.ejc.2012.04.008}

\bibitem[{Flum and Grohe(2006)}]{FG06}
Flum J, Grohe M (2006) Parameterized Complexity Theory. Springer,
  \doi{10.1007/3-540-29953-X}

\bibitem[{Frank(1975)}]{Fra75}
Frank A (1975) Some polynomial algorithms for certain graphs and hypergraphs.
  In: Proceedings of the 5th British Combinatorial Conference, Congressus
  Numerantium, vol~XV, pp 211--226

\bibitem[{Gardi(2009)}]{Gar09}
Gardi F (2009) Mutual exclusion scheduling with interval graphs or related
  classes, part i. Discrete Applied Mathematics 157(1):19--35,
  \doi{10.1016/j.dam.2008.04.016}

\bibitem[{Gaur and Krishnamurti(2003)}]{GK03}
Gaur DR, Krishnamurti R (2003) Scheduling intervals using independent sets in
  claw-free graphs. In: Proceedings of the International Conference on
  Computational Science and Its Applications (ICCSA 2003), Springer, pp
  254--262, \doi{10.1007/3-540-44839-X_28}

\bibitem[{{Gy\'arf\'as} and {West}(1995)}]{GW95}
{Gy\'arf\'as} A, {West} D (1995) Multitrack interval graphs. Congressus
  Numerantium 109:109--116

\bibitem[{Halld{\'{o}}rsson(2016)}]{Hal16}
Halld{\'{o}}rsson MM (2016) Invited paper: Models for wireless algorithms. In:
  Proceedings of the 14th International Symposium on Modeling and Optimization
  in Mobile, Ad Hoc, and Wireless Networks (WiOpt'16), IEEE, pp 377--381,
  \doi{10.1109/WIOPT.2016.7492945}

\bibitem[{Halld{\'o}rsson and Karlsson(2006)}]{HK06}
Halld{\'o}rsson MM, Karlsson RK (2006) Strip graphs: Recognition and
  scheduling. In: Proceedings of the 32nd International Workshop on
  Graph-Theoretic Concepts in Computer Science (WG'06), Springer, Lecture Notes
  in Computer Science, vol 4271, pp 137--146, \doi{10.1007/11917496_13}

\bibitem[{Halld{\'{o}}rsson and Tonoyan(2015)}]{HT15}
Halld{\'{o}}rsson MM, Tonoyan T (2015) How well can graphs represent wireless
  interference? In: Proceedings of the 47th Annual ACM Symposium on Theory of
  Computing (STOC'15), ACM, pp 635--644, \doi{10.1145/2746539.2746585}

\bibitem[{Höhn et~al(2011)Höhn, König, Möhring, and Lübbecke}]{HKML11}
Höhn W, König FG, Möhring RH, Lübbecke ME (2011) Integrated sequencing and
  scheduling in coil coating. Management Science 57(4):647--666,
  \doi{10.1287/mnsc.1100.1302}

\bibitem[{Jamison and Mulder(2000)}]{JM00}
Jamison RE, Mulder HM (2000) Tolerance intersection graphs on binary trees with
  constant tolerance 3. Discrete Mathematics 215(1):115--131,
  \doi{10.1016/S0012-365X(99)00231-9}

\bibitem[{Jiang(2010)}]{Jia10}
Jiang M (2010) On the parameterized complexity of some optimization problems
  related to multiple-interval graphs. Theoretical Computer Science
  411(49):4253--4262, \doi{10.1016/j.tcs.2010.09.001}

\bibitem[{Jiang(2013)}]{Jia13}
Jiang M (2013) Recognizing d-interval graphs and d-track interval graphs.
  Algorithmica 66(3):541--563, \doi{10.1007/s00453-012-9651-5}

\bibitem[{Kammer et~al(2010)Kammer, Tholey, and Voepel}]{KTV10}
Kammer F, Tholey T, Voepel H (2010) Approximation algorithms for intersection
  graphs. In: Proceedings of the 13th International Workshop on Approximation
  Algorithms for Combinatorial Optimization Problems and 14th Workshop on
  Randomized Techniques in Computation: Algorithms and Techniques (APPROX'10
  and RANDOM'10), Springer, Lecture Notes in Computer Science, vol 6302, pp
  260--273, \doi{10.1007/978-3-642-15369-3\_20}

\bibitem[{Kolen et~al(2007)Kolen, Lenstra, Papadimitriou, and
  Spieksma}]{KLPS07}
Kolen AWJ, Lenstra JK, Papadimitriou CH, Spieksma FCR (2007) Interval
  scheduling: A survey. Naval Research Logistics 54(5):530--543,
  \doi{10.1002/nav.20231}

\bibitem[{Komusiewicz and Niedermeier(2012)}]{KN12}
Komusiewicz C, Niedermeier R (2012) New races in parameterized algorithmics.
  In: Proceedings of the 37th International Symposium on Mathematical
  Foundations of Computer Science (MFCS'12), Springer, Lecture Notes in
  Computer Science, vol 7464, pp 19--30, \doi{10.1007/978-3-642-32589-2_2}

\bibitem[{Köse and M{\'{e}}dard(2017)}]{KM17}
Köse A, M{\'{e}}dard M (2017) Scheduling wireless ad hoc networks in
  polynomial time using claw-free conflict graphs. In: Proceedings of the 28th
  {IEEE} Annual International Symposium on Personal, Indoor, and Mobile Radio
  Communications (PIMRC'17), {IEEE}, pp 1--7, \doi{10.1109/PIMRC.2017.8292404}

\bibitem[{K{ö}se et~al(2017)K{ö}se, Evirgen, G{ö}kcesu, G{ö}kcesu, and
  M{\'{e}}dard}]{KEGGM17}
K{ö}se A, Evirgen N, G{ö}kcesu H, G{ö}kcesu K, M{\'{e}}dard M (2017) Nearly
  optimal scheduling of wireless ad hoc networks in polynomial time. Available
  on arXiv:1712.00658

\bibitem[{Metelsky and Tyshkevich(2003)}]{MT03}
Metelsky Y, Tyshkevich R (2003) Line graphs of {Helly} hypergraphs. SIAM
  Journal on Discrete Mathematics 16(3):438--448,
  \doi{10.1137/S089548019936521X}

\bibitem[{Misra et~al(2013)Misra, Panolan, Rai, Raman, and Saurabh}]{MPR+13}
Misra N, Panolan F, Rai A, Raman V, Saurabh S (2013) Parameterized algorithms
  for max colorable induced subgraph problem on perfect graphs. In: Proceedings
  of the 39th International Workshop on Graph-Theoretic Concepts in Computer
  Science (WG'13), Springer, pp 370--381, \doi{10.1007/978-3-642-45043-3_32}

\bibitem[{Mnich and van Bevern(2018)}]{MBxx}
Mnich M, van Bevern R (2018) Parameterized complexity of machine scheduling: 15
  open problems. Computers \& Operations Research
  \doi{10.1016/j.cor.2018.07.020}, in press

\bibitem[{Nakajima and Hakimi(1982)}]{NH82}
Nakajima K, Hakimi SL (1982) Complexity results for scheduling tasks with
  discrete starting times. Journal of Algorithms 3(4):344--361,
  \doi{10.1016/0196-6774(82)90030-X}

\bibitem[{Niedermeier(2006)}]{Nie06}
Niedermeier R (2006) Invitation to Fixed-Parameter Algorithms. Oxford
  University Press, \doi{10.1093/acprof:oso/9780198566076.001.0001}

\bibitem[{Niedermeier(2010)}]{Nie10}
Niedermeier R (2010) Reflections on multivariate algorithmics and problem
  parameterization. In: Proceedings of the 27th International Symposium on
  Theoretical Aspects of Computer Science (STACS'10), Schloss
  Dagstuhl--Leibniz-Zentrum f{\"u}r Informatik, LIPIcs, vol~5, pp 17--32,
  \doi{10.4230/LIPIcs.STACS.2010.2495}

\bibitem[{Papadimitriou and Yannakakis(1979)}]{PY79}
Papadimitriou C, Yannakakis M (1979) Scheduling interval-ordered tasks. SIAM
  Journal on Computing 8(3):405--409, \doi{10.1137/0208031}

\bibitem[{Rose et~al(1976)Rose, Tarjan, and Lueker}]{RTL76}
Rose DJ, Tarjan RE, Lueker GS (1976) Algorithmic aspects of vertex elimination
  on graphs. SIAM Journal on Computing 5(2):266--283, \doi{10.1137/0205021}

\bibitem[{Shamir et~al(2004)Shamir, Sharan, and Tsur}]{SST04}
Shamir R, Sharan R, Tsur D (2004) Cluster graph modification problems. Discrete
  Applied Mathematics 144(1-2):173--182, \doi{10.1016/j.dam.2004.01.007}

\bibitem[{Sung and Vlach(2005)}]{SV05}
Sung SC, Vlach M (2005) Maximizing weighted number of just-in-time jobs on
  unrelated parallel machines. Journal of Scheduling 8(5):453--460,
  \doi{10.1007/s10951-005-2863-7}

\bibitem[{Wegner(1961)}]{Weg61}
Wegner G (1961) {Eigenschaften der Nerven homologisch-einfacher Familien im
  $R^n$}. PhD thesis, Universität Göttingen

\bibitem[{West and Shmoys(1984)}]{WS84}
West DB, Shmoys DB (1984) Recognizing graphs with fixed interval number is
  np-complete. Discrete Applied Mathematics 8(3):295--305,
  \doi{10.1016/0166-218X(84)90127-6}

\bibitem[{Yannakakis and Gavril(1987)}]{YG87}
Yannakakis M, Gavril F (1987) The maximum $k$-colorable subgraph problem for
  chordal graphs. Information Processing Letters 24(2):133--137,
  \doi{10.1016/0020-0190(87)90107-4}

\bibitem[{Ye and Borodin(2012)}]{YB12}
Ye Y, Borodin A (2012) Elimination graphs. {ACM} Transactions on Algorithms
  8(2):14:1--14:23, \doi{10.1145/2151171.2151177}

\end{thebibliography}

\end{document}